\newcommand{\subparagraph}{}
\newcommand{\Sigmay}{{\Sigmay}_{\yv}}
\def\mindex#1{\index{#1}}
\def\sq{\hbox{\rlap{$\sqcap$}$\sqcup$}}
\def\qed{\ifmmode\sq\else{\unskip\nobreak\hfil
\penalty50\hskip1em\null\nobreak\hfil\sq
\parfillskip=0pt\finalhyphendemerits=0\endgraf}\fi\medskip}
\long\def\defbox#1{\framebox[.9\hsize][c]{\parbox{.85\hsize}{%
\parindent=0pt
\baselineskip=12pt plus .1pt      % STYLE
\parskip=6pt plus 1.5pt minus 1pt % CHANGES
 #1}}}
\long\def\beginbox#1\endbox{\subsection*{}%
\hbox{\hspace{.05\hsize}\defbox{\medskip#1\bigskip}}%
\subsection*{}}
\def\endbox{}
\def\diag{{\text{diag}}}
\def\sign{{\rm sign}}
\newsavebox{\junk}
\savebox{\junk}[1.6mm]{\hbox{$|\!|\!|$}}
\def\argmin{\mathop{\rm arg\, min}}
\def\Re{\field{R}}
\def\bA{{\mathbb A}}
\def\bE{{\mathbb E}}
\def\sfF{{\sf F}}
\def\sfH{{\sf H}}
\def\bfmath#1{{\mathchoice{\mbox{\boldmath$#1$}}%
{\mbox{\boldmath$#1$}}%
{\mbox{\boldmath$\scriptstyle#1$}}%
{\mbox{\boldmath$\scriptscriptstyle#1$}}}}
\def\bfmY{\bfmath{Y}}
\def\bfmhhaY{\bfmath{\hhaY}} %\widehat{\widehat{Y}}}}
\def\bfmhhaY{\hbox to 0pt{$\widehat{\bfmY}$\hss}\widehat{\phantom{\raise 1.25pt\hbox{$\bfmY$}}}}
\def\til={{\widetilde =}}
 \def\FRAC#1#2#3{\genfrac{}{}{}{#1}{#2}{#3}}
\def\ddtp{{\mathchoice{\FRAC{1}{d^{\hbox to 2pt{\rm\tiny +\hss}}}{dt}}%
{\FRAC{1}{d^{\hbox to 2pt{\rm\tiny +\hss}}}{dt}}%
{\FRAC{3}{d^{\hbox to 2pt{\rm\tiny +\hss}}}{dt}}%
{\FRAC{3}{d^{\hbox to 2pt{\rm\tiny +\hss}}}{dt}}}}
\def\average#1,#2,{{1\over #2} \sum_{#1}^{#2}}
\def\eye(#1){{\bf(#1)}\quad}
\newtheorem{theorem}{{\bf Theorem}}
\newtheorem{definition}{{\bf Definition}}
\newtheorem{remark}{{\bf Remark}}
\newtheorem{lemma}{{\bf Lemma}}
\def\eq#1/{(\ref{e:#1})}
\newcommand{\beqn}[1]{\notes{#1}%
\begin{eqnarray} \elabel{#1}}
\newcommand{\eeqn}{\end{eqnarray} }
\newcommand{\beq}[1]{\notes{#1}%
\begin{equation}\elabel{#1}}
\newcommand{\eeq}{\end{equation}}
\def\bdes{\begin{description}}
\def\edes{\end{description}}
\newcounter{rmnum}
\newcounter{anum}
\def\ass(#1:#2){(#1\ref{#1:#2})}
\def\ritem#1{
\item[{\sf \ass(\current_model:#1)}]
}
\newenvironment{recall-ass}[1]{%
\begin{description}
\def\current_model{#1}}{
\end{description}
}
\def\herm{{\sfH}}
\newcommand{\normd}[1]{{\left\vert\kern-0.25ex\left\vert\kern-0.25ex\left\vert #1 
		\right\vert\kern-0.25ex\right\vert\kern-0.25ex\right\vert}}
\renewcommand{\i}{j}
\newcommand{\E}{\mathbb{E}}
\renewcommand{\H}{\mathsf{H}}
\newcommand{\pnorm}[2]{\left\| #1 \right\|_{#2}}
\newcommand{\inner}[2]{\left\langle #1 \right\rangle_{#2}}
\newcommand{\round}[1]{\left( #1 \right)}
\renewcommand{\square}[1]{\left[ #1 \right]}
\renewcommand{\Pr}[1]{\mathrm{Pr} \square{#1}}
\newcommand{\eps}{\varepsilon}
\newcommand{\noteJ}[1]{#1}
\newcommand{\corrS}[1]{#1}
\newcommand{\id}{\mathbf{I}}
\long\def\comment#1{}
\newfont{\bb}{msbm10 scaled 1100}
\newcommand{\av}{{\bf a}}
\newcommand{\bv}{{\bf b}}
\newcommand{\hv}{{\bf h}}
\newcommand{\nv}{{\bf n}}
\newcommand{\qv}{{\bf q}}
\newcommand{\rv}{{\bf r}}
\newcommand{\sv}{{\bf s}}
\newcommand{\wv}{{\bf w}}
\newcommand{\yv}{{\bf y}}
\newcommand{\Am}{{\bf A}}
\newcommand{\Bm}{{\bf B}}
\newcommand{\Cm}{{\bf C}}
\newcommand{\Hm}{{\bf H}}
\newcommand{\Mm}{{\bf M}}
\newcommand{\Sm}{{\bf S}}
\newcommand{\Wm}{{\bf W}}
\newcommand{\Zm}{{\bf Z}}
\newcommand{\gammav}{\hbox{\boldmath$\gamma$}}
\newcommand{\tauv}{\hbox{\boldmath$\tau$}}
\newcommand{\trace}{{\hbox{tr}}}
\renewcommand{\Re}{{\rm Re}}
\renewcommand{\Im}{{\rm Im}}
\newcommand{\transp}{{\sf T}}
\renewcommand{\vec}{{\rm vec}}
\newcommand{\plotwidth}{0.8}
\def\ps@IEEEtitlepagestyle{
  \def\@oddfoot{\mycopyrightnotice}
  \def\@evenfoot{}
}
\def\mycopyrightnotice{
  {\footnotesize This work has been submitted to the IEEE for possible publication. Copyright may be transferred without notice, after which this version may no longer be accessible.\hfill} % <--- Change here
  \gdef\mycopyrightnotice{}
}
\title {Plug-in Channel Estimation with Dithered Quantized Signals in Spatially Non-Stationary Massive MIMO Systems}
\author{Tianyu Yang$^1$, Johannes Maly$^{2,3}$, Sjoerd Dirksen$^4$,  \\and Giuseppe Caire$^1$  
\thanks{$^1$Communications and Information Theory Group (CommIT), Technische Universit\"{a}t Berlin, 10587 Berlin, Germany (e-mail: \{tianyu.yang,  caire\}@tu-berlin.de).}
\thanks{$^2$Ludwig-Maximilians-Universität Munich, 80333 Munich, Germany  (e-mail: maly@math.lmu.de).}
\thanks{$^3$Munich Center for Machine Learning (MCML).}
\thanks{$^4$Utrecht University, 3584 CD Utrecht, Netherlands (e-mail: s.dirksen@uu.nl).}
% \thanks{This work has been submitted to the IEEE for possible publication. Copyright may be transferred without notice, after which this version may no longer be accessible.}
}
\begin{document}

\maketitle	

% \vspace{-1cm}

\begin{abstract}
As the array dimension of massive MIMO systems increases to unprecedented levels, two problems occur. First, the spatial stationarity assumption along the antenna elements is no longer valid. Second, the large array size results in an unacceptably high power consumption if high-resolution analog-to-digital converters are used. To address these two challenges, we consider a Bussgang linear minimum mean square error (BLMMSE)-based channel estimator for large scale massive MIMO systems with one-bit quantizers and a spatially non-stationary channel. Whereas other works usually assume that the channel covariance is known at the base station, we consider a plug-in BLMMSE estimator that uses an estimate of the channel covariance and rigorously analyze the distortion produced by using an estimated, rather than the true, covariance.
To cope with the spatial non-stationarity, we introduce dithering into the quantized signals and provide a theoretical error analysis. In addition, we propose an angular domain fitting procedure which is based on solving an instance of non-negative least squares. For the multi-user data transmission phase, we further propose a BLMMSE-based receiver to handle one-bit quantized data signals. Our numerical results show that the performance of the proposed BLMMSE channel estimator is very close to the oracle-aided scheme with ideal knowledge of the channel covariance matrix. The BLMMSE receiver outperforms the conventional maximum-ratio-combining and zero-forcing receivers in terms of the resulting ergodic sum rate.
\end{abstract}

\begin{keywords}
Extra-Large Scale Massive MIMO, Spatially Non-Stationary, One-Bit Quantization, Dithering, Bussgang Linear MMSE (BLMMSE).
\end{keywords}

\section{Introduction}
Massive multiple-input-multiple-output (MIMO) has been vastly researched and considered as an essential technology in 5G wireless communication systems within sub-6 GHz bands \cite{lu2014overview,bjornson2019massive,li201712}. Benefiting from the large number (tens to hundreds) of antennas at the base station (BS) array, dozens of users can be served in the same time-frequency slots. This results in higher spectrum and energy efficiency due to the spatial multiplexing and high array gain \cite{lu2014overview, ngo2013energy}. Theory shows that by increasing the array dimension, i.e., the number of antenna elements, it is possible to achieve higher data rates and to mitigate the impacts of inter-cell interference and thermal noise \cite{marzetta2010noncooperative}.  Nevertheless, as the array dimension increases, two new challenges occur. First, some inherent properties of the channel environment change compared to small-scale MIMO, so that the basic assumptions of massive MIMO design are no longer valid for large arrays. Specifically, most existing massive MIMO works are based on the assumption of a spatially stationary channel, where all antenna elements observe the same far-field propagation from the channel scatters \cite{haghighatshoar2016massive, haghighatshoar2017massive, haghighatshoar2018low, khalilsarai2020structured}. However, in the very large antenna array regime, spatial non-stationarity has been experimentally observed  \cite{payami2012channel}. Two main reasons for this non-stationarity are further discussed in \cite{gao2013massive, gao2015massive, wu2014non}. First, with large arrays, some scattering clusters may be located in the Fresnel region, where the distance between them and the BS array is smaller than the Fraunhofer distance (but still larger than the Fresnel distance) \cite{selvan2017fraunhofer}. For instance, for a massive MIMO system with a 3GHz carrier frequency and 256 antennas in a uniform linear array (ULA), the resulting Fraunhofer distance is  $d_{\rm F} = \frac{2D^2}{\lambda} \approx $ 3km, where $D$ is the total length of the antenna array and $\lambda$ is the wavelength. As a consequence, user signals impinging onto the BS array cannot be assumed to follow far-field propagation and have a spherical wavefront instead of a plane wavefront. In such cases, the receivers are in the radiative near-field (corresponding to the Fresnel region) and recent research on near-field communication can be found in e.g., \cite{bjornson2021primer,bjornson2020power}.   Second, due to the physically large size of the array, some of the scattering clusters may only be visible to a part of the array. Furthermore, a new deployment of large arrays that are usually integrated into large structures, e.g., along the walls of buildings \cite{medbo2014channel,bjornson2019massivenext}, was considered as an extension of massive MIMO and referred to as extra-large scale massive MIMO (XL-MIMO) in \cite{de2020non}. It is also pointed out in \cite{de2020non} that due to the large dimension (tens of meters) of XL-MIMO, spatially non-wide sense stationary (non-WSS) characteristics appear along the array. 

Aside from the spatially non-WSS property of large antenna arrays, a second major concern is the hardware cost and power consumption of high-resolution analog-to-digital converters (ADCs). Commercial high-resolution ADCs (12 to 16 bits) are expensive and their power consumption grows exponentially in terms of the number of quantization bits \cite{walden1999analog}. This problem is even more severe for wideband systems, where the power consumption of high-resolution ADCs increases linearly with the signal bandwidth due to required higher sampling rates \cite{murmann2015race}. To alleviate the issue of high power consumption, low-resolution ADCs (e.g., 1-3 bits) are utilized for massive MIMO systems \cite{mezghani2008analysis, nossek2006capacity,singh2009limits} and it is shown in \cite{mezghani2008analysis,nossek2006capacity} that the capacity loss due to the coarse quantization is approximately equal to only $\pi/2$ at low signal-to-noise ratios (SNRs). In massive MIMO systems the SNR per antenna element may be relatively low, while still achieving an overall large spectral efficiency over data stream due to the large number of antennas per user (data stream), such that both spatial multiplexing gain and array gain are achieved. 

\subsection{Related work}

Accurate estimation of channel state information (CSI) at the BS is a key factor to achieve the potential benefits of massive MIMO systems. Taking the spatially non-WSS property into account, an adaptive grouping sparse Bayesian learning scheme was proposed for uplink channel estimation in \cite{cheng2019adaptive}. A model-driven deep learning-based channel reconstruction scheme was proposed in \cite{han2020deep}. On the other hand, many recent works have investigated channel estimators with one-bit quantized signals in massive MIMO systems, see e.g., \cite{li2017channel,wan2020generalized} and references therein.  To the best of our knowledge, no work has addressed the problem of channel estimation with both low-resolution quantization and spatially non-WSS channels in massive MIMO systems. To fill this gap, in this paper we adopt the Bussgang linear minimum mean square estimation (BLMMSE) method that was initially proposed in \cite{li2017channel}, and propose a BLMMSE-based ``plug-in'' channel estimator for one-bit Massive MIMO systems with the spatially non-WSS property.\footnote{By ``plug-in'' we mean that the BLMMSE requires the knowledge of the channel covariance, which is typically assumed to be known. However, in our setting, also the channel covariance needs to be estimated from low-resolution quantized samples. The plug-in estimator consists of using the estimated covariance ``as if'' it were the true one, in the BLMMSE.}

Since our proposed scheme involves channel covariance estimation from one-bit quantized samples, we briefly introduce the existing works that are based on the modification of the classical arcsine law. The first work line proposes to add \textit{biased Gaussian dithering} to treat stationary \cite{eamaz2021modified,eamaz2022covariance} and non-stationary \cite{eamaz2023covariance} Gaussian processes. Due to the Gaussian nature of the dithering thresholds, the authors can derive a modified arcsine law that relates the covariance matrix of the quantized samples to an involved integral expression depending on the entries of the unquantized covariance matrix. Different numerical methods are further proposed for evaluating/inverting the integral to finally recover the covariance matrix. 

Following the observation that variances can be estimated from one-bit samples with fixed dithering threshold \cite{chapeau2008fisher,fang2010adaptive}, assuming an underlying stationary Gaussian process the authors in another line of work \cite{liu2021one} explicitly calculate the mean and autocorrelation of one-bit samples with \textit{a fixed dither} and further design an algorithm for estimating the covariance matrix entries of the underlying process. The variance of maximum-likelihood estimators of the quantities computed in \cite{liu2021one} is further analyzed in \cite{xiao2023one} using low-order Taylor expansions, where the authors conclude that a fixed dither yields suboptimal results if the variances of the process are strongly fluctuating. They then propose to use a dither that \textit{increases monotonically in time}.

The aforementioned works have two main drawbacks. First, their theoretical analysis uses tools that are specific to Gaussian samples and hence their results are tailored to Gaussian samples. Second, their results concern the large sample limit: they do not analyze the approximation error in terms of the number of samples, i.e., they leave open which error is caused by using the sample covariance matrix instead of the true covariance matrix of the quantized samples. In contrast to the aforementioned works, the results in \cite{dirksen2021covariance}, which we extend in this paper, provide the first non-asymptotic (and near-optimal) guarantees in the literature. Furthermore, the use of uniform random dithering vectors (instead of fixed, deterministically varying, or random Gaussian ones) considerably simplifies the analysis and allows to obtain results for non-Gaussian samples.

\subsection{Contributions}
 Our main contributions are summarized as follows.
\begin{itemize}
    \item We adopt a BLMMSE channel estimator to deal with the one-bit quantized signal. However, in contrast to \cite{li2017channel} that assumes exact knowledge of the channel covariance at the BS, we propose a ``plug-in'' version that instead uses an estimate of the channel covariance. Our first contribution is a theoretical analysis of the distortion caused by the use of this estimate: we estimate the mean squared distance between the BLMMSE estimate based on an estimate of the channel covariance versus the BLMMSE estimate based on the true channel covariance (see Lemma~\ref{lem:Stability}). 
    
    \item Our second contribution is a method to estimate the channel covariance matrix based on one-bit samples. We introduce dithering into the one-bit ADCs to cope with the non-Toeplitz structure of the channel covariance matrix (resulting from the spatially non-WSS channel). We propose a covariance estimator based on dithered quantized samples and derive bounds on the estimation error in terms of the maximum norm and the Frobenius norm (Theorem~\ref{thm:FrobeniusDithered}). By combining this result with the aforementioned bound on the MSE achieved by the BLMMSE with given estimated covariance, we derive a bound on the expected MSE of the channel estimator in terms of the number of samples used to estimate the channel covariance (Theorem~\ref{thm:MainResult}). 
    
    \item We empirically further enhance the proposed channel covariance estimator by exploiting the angle domain of the spatially non-WSS channel. Using dictionary functions in the angle domain, we formulate the channel covariance estimation as a non-negative least-squares problem (NNLS), which can be efficiently solved by a standard numerical NNLS solver, e.g., \cite{bill2022nnls}, even for very large problem dimensions.

    \item We design a linear receiver for the uplink (UL) data transmission phase, based on an estimate of the channel matrix obtained in the training phase, to achieve better rate detection and thus improve the ergodic sum rate of multi-user severing. Contrary to the conventional maximum-ratio-combining (MRC) and zero-forcing (ZF) receivers that do not take the quantization into account, the proposed receiver considers the Bussgang decomposition of one-bit quantized data signals and uses BLMMSE-based estimation with knowledge of only the estimated channel covariance matrix.

\end{itemize}
Our numerical results show that the proposed BLMMSE channel estimator, which uses the proposed channel covariance estimator based on dithered quantized samples, is superior to benchmark methods and achieves a performance very close to the performance of an oracle-aided scheme using the true channel covariance matrix. The proposed BLMMSE-based receiver also significantly outperforms MRC and ZF receivers as expected due to its specific consideration of quantized signals.

\subsection{Organization}
The rest of this paper is organized as follows. In Section~\ref{sec:nonWSS}, we introduce the channel model with the spatially non-stationary property. Section~\ref{sec:channelEst} is devoted to the analysis of the BLMMSE channel estimator and our results on channel covariance estimation from one-bit quantized samples. In Section~\ref{sec:receiver}, we propose the BLMMSE receiver for the data transmission phase to obtain a higher sum rate. The numerical results are then provided in Section~\ref{sec:numerics}. Finally, in Section~\ref{sec:discussion} we conclude our work and provide a discussion of possible future research directions. 

\subsection{Notation}

For any $N\in \mathbb{N}$ we write $[N]=\{1, 2, \dots, N\}$. We use lower-case, bold lower-case, and bold upper-case letters to denote scalars, column vectors, and matrices, respectively.  The trace, transpose and Hermitian transpose are respectively denoted by $\trace(\cdot)$, $(\cdot)^\transp$ and $(\cdot)^\herm$. $\bE[\cdot]$ returns the mathematical expectation. $\diag(\Am)$ gives a diagonal matrix with diagonal of $\Am$, while  $\diag(\av)$ denotes the diagonal matrix with diagonal equal to $\av$. We denote the $M\times M$ identity matrix by $\mathbf{I}_M$. The $i$-th element of a vector $\av$ is denoted by $[\av]_i$, while the $i$-th row and column of a matrix $\Am$ are respectively denoted by $[\Am]_{i, \cdot}$ and $[\Am]_{\cdot, i}$. An all-zero matrix is denoted by $\mathbf{0}$. $\|\av\|_2$ denotes the Euclidean norm of a vector $\av$. $\|\Am\|_{\sf F}$, $\|\Am\|_{}$, and $\|\Am\|_{\infty}$ denote the Frobenius, operator, and maximum norms of a matrix $\Am$. 
We use $\langle \Am,\Bm \rangle_{\sf F} := \trace(\Am^\herm \Bm)$ to denote the Frobenius  inner product. We furthermore use $a \lesssim b$ to abbreviate $a \le Cb$, for some absolute constant $C > 0$. 

\section{System model with Spatially Non-Stationary Channel}
\label{sec:nonWSS}

\begin{figure}
	\begin{center}
	\includegraphics[width=\plotwidth\columnwidth]{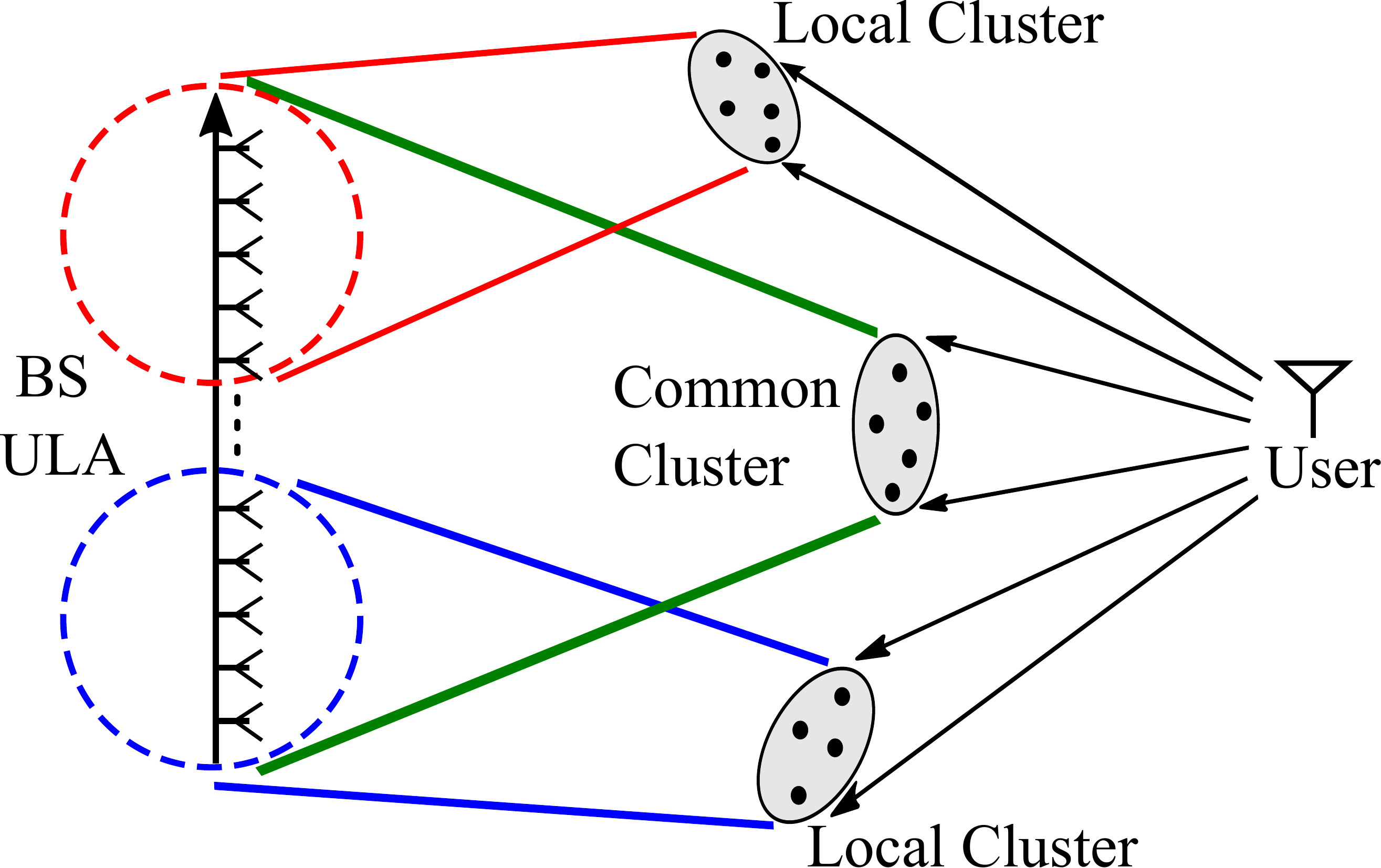}
	\caption{Illustration of the studied large-scale Massive MIMO system in ULA with the spatially non-WSS channel, where local clusters are only visible to a part of antenna elements while the common clusters are visible to the whole array.}\label{fig:cluster}
	\end{center}
 \vspace{-.5cm}
\end{figure} 
We consider a single-cell massive MIMO communication system, where a BS equipped with $M$ antennas in a uniform linear array (ULA) communicates with multiple users through a multipath channel. We assume that the channel scattering clusters consist of common and local clusters, where common clusters are visible to all antennas while local clusters are only visible to a sub-array. An illustration of the considered scattering geometry is shown in Fig.~\ref{fig:cluster}. The channel vector resulting from the contribution of the common clusters at the $n$-th time slot is given by
\begin{equation}
    \hv^\text{c}_n = \sum_{i=1}^{L^{\text{c}}}\rho^\text{c}_i(n) \av(\theta^{\text{c}}_i),
\end{equation}
where $L^{\text{c}}$ denotes the total number of multipaths in the common clusters,  $\rho^\text{c}_i(n) \sim \mathcal{CN}(0, \gamma^{\text{c}}_i)$ is the $i$-th complex channel gain of the common clusters with its power $\gamma^{\text{c}}_i$, $\theta^{\text{c}}_i$ is the $i$-th angle of arrival (AoA), and where $\av(\theta)\in\mathbb{C}^{M\times1}$ is the steering vector, whose $m$-th entry is $[\av(\theta)]_m~=~e^{j\pi (m-1) \sin(\theta)}$ by assuming that the antenna spacing is equal to half of the carrier wavelength, for all $m \in \mathcal{M}$, where $\mathcal{M}=[M]$ is the antenna index set of all $M$ antennas. 

Assume that there are $L$ local clusters and that each local cluster is visible to a consecutive sub-array. The $i$-th local cluster is thus visible to $M^\text{l}_i$ antennas with index set $\mathcal{M}^\text{l}_i$, where $M^\text{l}_i = |\mathcal{M}^\text{l}_i|$. The channel vector resulting from the contribution of the paths in the $i$-th local cluster at the $n$-th time slot is given by 
\begin{equation}
    \hv^\text{l}_{n,i} = \sum_{j=1}^{L^{\text{l}}_i}\rho^\text{l}_{ij}(n) \Sm_i\av(\theta^{\text{l}}_{ij}), \quad i \in [L],
\end{equation}
where $L^{\text{l}}_i$ denotes the total number of multipaths in the $i$-th local cluster,  $\rho^\text{l}_{ij}(n) \sim \mathcal{CN}(0, \gamma^{\text{l}}_{ij})$ is the $j$-th complex channel gain of the $i$-th local cluster with its power $\gamma^{\text{l}}_{ij}$, $\theta^{\text{l}}_{ij}$ is the AoA, and where $\Sm_i\in\mathbb{C}^{M\times M}$ is the diagonal selection matrix indicating the visible sub-array of the $i$-th local cluster, whose diagonal is defined as
\begin{equation}
    [\Sm_i]_{m,m} = 
    \begin{cases} 
       1, &  \; m \in \mathcal{M}^\text{l}_i \\
      0, & \; m\in\mathcal{M}\setminus\mathcal{M}^\text{l}_i.
   \end{cases}
\end{equation}
We further assume that the channel gains of different paths in all common and local clusters at each time slot $n$,  $\left\{\rho^{\text{c}}_i(n)\right\}_{i=1}^{L^{\text{c}}}$ and $\left\{\rho^{\text{l}}_{ij}(n)\right\}_{j=1}^{L^{\text{l}}_i}$, $\forall i \in [L]$, are uncorrelated\footnote{Note that this is a standard assumption specified in, e.g., the channel models of 3GPP standard TR 38.901 \cite{3gpp38901} and TR 25.996 \cite{3gpp25996}. This assumption is also implicitly included in the documentation of the well-known channel simulator QuaDRiGa \cite{jaeckel2014quadriga}.}.  Note that we implicitly assume that the channel geometry and visibility of all clusters do not change over the channel geometry coherence time $T_c$, which is a much longer time period than the channel coherence time (see \cite{va2016impact} and the references therein). Concretely, the small-scale channel coefficients $\{\rho^{\rm c}_i(n)\}_{i=1}^{L^{\rm c}}$ and $ \{\rho^{\rm l}_{ij}(n)\}_{j=1}^{L^{\rm l}_i}$, $i \in [L]$, are stationary over $T_c$, and  the Angular Power Spectrum (APS) $\left\{\gamma^{\text{c}}_i\right\}_{i=1}^{L^{\text{c}}}$ and $\left\{\gamma^{\text{l}}_{ij}\right\}_{j=1}^{L^{\text{l}}_i}$, $i \in [L]$, along with the AoAs $\left\{\theta^{\text{c}}_i\right\}_{i=1}^{L^{\text{c}}}$ and $\left\{\theta^{\text{l}}_{ij}\right\}_{j=1}^{L^{\text{l}}_i}$, $i \in [L]$, as well as the selection matrices $\{\Sm_i\}_{i=1}^{L}$ are constant over $T_c$.
Under these assumptions, the total channel vector at $n$-th time slot $\hv_n$ and the corresponding total channel covariance matrix $\Cm_\hv$ are given by
\begin{align}
    \hv_n &= \hv^\text{c}_n + \sum^L_{i=1} \hv^{\text{l}}_{n,i},\\
    \Cm_\hv &= \mathbb{E}[\hv_n\hv_n^\herm] = \Cm_{\hv^{\text{c}}} + \sum^L_{i=1} \Cm_{\hv^{\text{l}}_i} \label{eq:channelCov_true} \\
    &=\sum_{i = 1}^{L^{\text{c}}}\gamma^\text{c}_i\av(\theta^{\text{c}}_i)\av(\theta^{\text{c}}_i)^\herm +\sum^L_{i=1}\sum_{j=1}^{L^{\text{l}}_i}\gamma^\text{l}_{ij}\Sm_i\av(\theta^{\text{l}}_{ij})\av(\theta^{\text{l}}_{ij})^\herm\Sm_i^\herm \nonumber \\
    &= \Am^{\text{c}} \diag(\gammav^{\text{c}})(\Am^{\text{c}})^\herm + \sum^L_{i=1}\Sm_i\Am^{\text{l}}_i \diag(\gammav^{\text{l}}_i)(\Am^{\text{l}}_i)^\herm\Sm_i^\herm,  \nonumber
\end{align}
where we define $\Am^{\text{c}} := [\av(\theta^{\text{c}}_1), \dots, \av(\theta^{\text{c}}_{L^{\text{c}}})], \gammav^{\text{c}} := [\gamma^{\text{c}}_1,\dots,\gamma^{\text{c}}_{L^{\text{c}}}]^\transp$ for common clusters, and $\Am^{\text{l}}_i := [\av(\theta^{\text{l}}_{i,1}), \dots, \av(\theta^{\text{l}}_{i,L^{\text{l}}_i})],\gammav^{\text{l}}_i := [\gamma^{\text{l}}_{i,1},\dots,\gamma^{\text{l}}_{i, L^{\text{l}}_i}]^\transp, \;\forall i \in [L]$ for local clusters.

The total channel power gain of all common clusters and all local clusters are given by  
\begin{align}
    P^\text{c} = \sum_{i=1}^{L^{\text{c}}}\gamma^\text{c}_i
    \qquad \text{and} \qquad
    P^\text{l} = \sum^L_{i=1}P^\text{l}_i = \sum^L_{i=1}\sum_{j=1}^{L^{\text{l}}_i}\gamma^\text{l}_{ij},
\end{align}
where we assume that $P^\text{c}$ and $P^\text{l}$ are normalized such that $\max(\diag(\Cm_{\hv})) = 1$. To help readability, Table~\ref{tab:notation} summarizes the model notation.
\begin{table}[!ht]
    \centering
    \begin{tabular}{|c|c|}
    \hline
    $L$  & Number of local clusters \\
         $L^{\text{c}}$, $L^{\text{l}}_i$  & Number of multipaths of common and local clusters \\
         $\rho^{\text{c}}_i$, $\rho^{\text{l}}_{ij}$ & Complex channel gain of common and local clusters \\
         $\gamma^{\text{c}}_i$, $\gamma^{\text{l}}_{ij}$ & APS of common and local clusters \\
         $\Sm_i$ & Diagonal selection matrices of local clusters \\
         $\theta^{\text{c}}_i$, $\theta^{\text{l}}_{ij}$ & AoAs of common and local clusters \\
         $\Am^{\text{c}}$, $\Am^{\text{l}}_{i}$ & Matrices of steering vectors of common and local clusters \\
    \hline
    \end{tabular}
    \caption{Summary of the used notations}
    \label{tab:notation}
    \vspace{-.5cm}
\end{table}

\section{Channel estimation with one-bit samples}
\label{sec:channelEst} 
We follow the common assumption that the pilot sequences of different users are orthogonal to each other in the time-frequency domain. Since the observations used for channel estimation use only the UL pilots, we focus on a generic user under a normalized pilot. Then, at the $n$-th time-frequency resource block (RB), the BS receives the signal
\begin{equation}\label{eq:received_signal}
    \yv_n = \hv_n + \nv_n,
\end{equation}
where $\hv_n \sim\mathcal{CN}(\mathbf{0},\Cm_{\hv})$ is the $M\times1$ channel vector and $\nv_n \sim\mathcal{CN}(\mathbf{0},N_0\mathbf{I}_M)$ is additive white Gaussian noise (AWGN) with noise power $N_0$. The SNR is thus defined by $1/N_0$ due to the assumption that $\max(\diag(\Cm_{\hv})) = 1$.
After one-bit ADC the quantized signal  becomes
\begin{equation}
    \rv_n = \mathcal{Q}(\yv_n),
\end{equation}
where $\mathcal{Q}(\cdot)$ is a suitable one-bit scalar quantizer that is applied elementwise and separately to the real and imaginary parts. One popular instance for such a quantizer is the complex-sign operator $\operatorname{csign}$ \cite{bar2002doa,risi2014massive}, given as 
\begin{align}
    \rv^{\mathrm{nd}}_n  =\operatorname{csign}(\yv_n)= \frac{1}{\sqrt{2}}\big(\sign(\Re(\yv_n)) + j \;\sign(\Im(\yv_n))\big ),
    \label{eq:r-nd}
\end{align}
% \begin{align}
%     \rv^{\mathrm{nd}}_n  = \frac{1}{\sqrt{2}}\big(\sign(\Re(\yv_n)) + j \;\sign(\Im(\yv_n))\big ),
%     \label{eq:r-nd}
% \end{align}
which quantizes the entries of $\Re(\yv_n)$ and $\Im(\yv_n)$ independently, i.e., the $\sign$-function
\begin{align}
    \sign \colon \mathbb R \to \{-1,1\}
    \qquad 
    \sign(x) = \begin{cases}
       1 & x \ge 0 \\
       -1 & x < 0
    \end{cases}
\end{align}
acts componentwise (\emph{memoryless scalar quantization}). We use the superscript ``nd'' (``non-dithered'') in \eqref{eq:r-nd} since the $\sign$-function is applied directly to the samples without dithering. Note that this type of one-bit quantization looses any scaling information.

\subsection{Bussgang LMMSE channel estimator} 
We consider channel estimation for a generic time slot. Thus, we ignore the time slot index $n$ for simplicity. In order to estimate the channel vector $\hv$ from a quantized sample $\rv$, we first transfer the nonlinear quantizer operation to a statistically equivalent linear formulation via the well-known \emph{Bussgang decomposition} \cite{bussgang1952crosscorrelation}, which yields
\begin{align}
    \rv &= \mathcal{Q}(\yv) = \Am \yv + \qv = \Am \hv + \Am \nv + \qv = \Am \hv + \widetilde{\nv}\label{eq:bussgangchannel},
\end{align}
where the linear operator $\Am$ is called the \emph{Bussgang gain}, $\qv$ is a mean-zero random vector that is uncorrelated with $\yv$, and $\widetilde{\nv} := \Am \nv + \qv$ is the total noise. To enforce $\qv$ to be uncorrelated with $\yv$, the Bussgang gain $\Am$ is chosen to minimize the power of the equivalent quantization noise \cite{demir2020bussgang} such that
\begin{align}\label{eq:Am}
    \Am = \mathbb{E}\left[\rv\yv^\herm\right]\mathbb{E}\left[\yv\yv^\herm\right]^{-1}= \Cm_{\rv\yv}\Cm^{-1}_{\yv},
\end{align}
where $\Cm_{\rv\yv}=\mathbb{E}\left[\rv\yv^\herm\right]$ denotes the covariance between the quantized signal $\rv$ and the  received signal $\yv$. The so-called BLMMSE estimator \cite{li2017channel} of the channel vector $\hv$ given the quantized signal $\rv$ is then expressed as
\begin{equation}\label{eq:mmse}
    \widehat{\hv}^{\text{BLM}} = \Cm_{\hv\rv} \Cm_{\rv}^{-1} \rv.
\end{equation}
Note that this is not the optimal MMSE estimator since $\qv$ is not Gaussian noise. We however know that the vector $\qv$ is uncorrelated with the vector $\yv$ and one can prove that $\qv$ is also uncorrelated with the channel vector $\hv$, see \cite[App.\ A]{li2017channel} for the proof of $\mathbb{E} [\hv \qv^\herm] = \mathbf{0}$. Thus, $\hv$ is uncorrelated with the total noise $\widetilde{\nv}$ and consequently we obtain from \eqref{eq:bussgangchannel} that
\begin{equation}
    \Cm_{\hv\rv} =  \Cm_{\hv} \Am^\herm.
\end{equation}
Similarly as in \cite{li2017channel}, $\Am$ and $\Cm_{\rv}$ can be easily computed as follows. For the one-bit quantizer in \eqref{eq:r-nd} and Gaussian inputs,  $\Cm_{\rv\yv}$ is given as \cite{mezghani2012capacity},  \cite[Ch.12]{papoulis2002probability}
\begin{align}
\label{eq:cyr}
    \Cm_{\rv\yv} = \sqrt{\frac{2}{\pi}} \diag(\Cm_{\yv})^{-\frac{1}{2}}\Cm_{\yv}
\end{align}
and combining \eqref{eq:cyr} and \eqref{eq:Am} we obtain 
\begin{align}
\label{eq:Am_fianl}
    \Am = \sqrt{\frac{2}{\pi}} \diag(\Cm_{\yv})^{-\frac{1}{2}}.
\end{align}
Furthermore, we have to calculate the covariance matrix $\Cm_{\rv}$. We adopt the well-known ``arcsine law'' to obtain $\Cm_{\rv}$ from $\Cm_{\yv}$. The arcsine law reveals an exact nonlinear invertible connection between the normalized autocorrelation functions of the one-bit quantized signal and the original unquantized signal. It was initially introduced in \cite{van1966spectrum} for real signals and then extended in \cite{jacovitti1994estimation} for complex signals. Specifically in our case, $\Cm_{\rv}$ can be obtained using the map $\mathcal{P}_{\text{arcsine}}(\cdot)$ by the arcsine law  as \cite{bar2002doa, mezghani2012capacity} 
\begin{align}
\label{eq:arcsin_law}
    \Cm_{\rv} &= \mathcal{P}_{\text{arcsine}}(\Cm_{\yv}) \\
    &=\frac{2}{\pi} \Bigg(\mathrm{arcsin}\left(\diag(\Cm_{\yv})^{-\frac{1}{2}}\Re(\Cm_{\yv})\diag(\Cm_{\yv})^{-\frac{1}{2}}\right) \nonumber\\
    &\quad \quad 
    + j \; \mathrm{arcsin}\left(\diag(\Cm_{\yv})^{-\frac{1}{2}}\Im(\Cm_{\yv})\diag(\Cm_{\yv})^{-\frac{1}{2}}\right)\Bigg). \nonumber
\end{align}

With the BLMMSE estimator in hand, \eqref{eq:mmse} has a closed form that only depends on $\Cm_{\yv} = \Cm_{\hv} + N_0\mathbf{I}_M$. Whereas the noise power $N_0$ is normally assumed to be known at the BS\footnote{This can be achieved via, e.g., low rate control channel.}, the channel covariance matrix $\Cm_{\hv}$ still needs to be estimated from samples to finally apply the BLMMSE estimator \eqref{eq:mmse}. Given any ``plug-in estimator’’ $\widehat{\Cm}_{\yv}$ of $\Cm_{\yv}$, we define the associated estimated channel vector as 
\begin{equation}
\label{eq:mmseEstCov}
    \widehat{\hv} = \widehat{\Cm}_{\hv\rv} \widehat{\Cm}_{\rv}^{-1} \rv,
\end{equation}
where $\widehat{\Cm}_{\hv\rv}$ and $\widehat{\Cm}_{\rv}$ are the estimators of $\Cm_{\hv\rv}$ and $\Cm_{\rv}$ obtained by replacing $\Cm_{\yv}$ by its estimator $\widehat{\Cm}_{\yv}$, i.e., $\widehat{\Cm}_{\hv\rv} =  \widehat{\Cm}_{\hv} \widehat{\Am}^\herm, \widehat{\Cm}_{\hv} = \widehat{\Cm}_{\yv} - N_0\mathbf{I}_M, \widehat{\Am} = \sqrt{\frac{2}{\pi}} \diag(\widehat{\Cm}_{\yv})^{-\frac{1}{2}},$ and $\widehat{\Cm}_{\rv} = \mathcal{P}_{\text{arcsine}}(\widehat{\Cm}_{\yv})$.
The following lemma controls the estimation error in \eqref{eq:mmse} if an estimator of $\Cm_{\yv}$ is used. Note that this result is generic, in the sense that it is valid for \emph{any} estimator $\widehat{\Cm}_{\yv}$ of $\Cm_{\yv}$. In the next section, we will apply this result for a specific estimator based on one-bit quantized samples.

\begin{lemma} \label{lem:Stability}
  There are absolute constants $c_1,c_2,C>0$ such that the following holds. Let $\theta \in (0,1)$ be fixed. Assume that
\begin{align}
\label{eqn:nondiagBoundLemma}
    &\left| \left[\diag(\Cm_{\yv})^{-\frac{1}{2}} \Cm_{\yv} \diag(\Cm_{\yv})^{-\frac{1}{2}}\right]_{i,j} \right| \le 1-\theta,
    \quad \forall i \neq j \\
&\text{and}\qquad\min_{i\in [M]} |[\Cm_{\yv}]_{i,i}| \ge \theta, \qquad \lambda_{\min}(\Cm_{\rv}) \ge \theta, \label{eqn:lowBoundLemma}
\end{align} 
where $\lambda_{\min}(\cdot)$ gives the minimal eigenvalue of the matrix.
Consider $\eps_{\sf F}>0,\eps_{\infty}>0$ such that 
\begin{equation}
\| \widehat{\Cm}_{\yv} - \Cm_{\yv} \|_{\sf F}<\eps_{\sf F}, \qquad \| \widehat{\Cm}_{\yv} - \Cm_{\yv} \|_{\infty}<\eps_{\infty}
\end{equation}
and assume that 
\begin{equation}\label{eqn:epsinftyepsF}
\begin{aligned}
&\eps_{\infty}\leq c_1\min\left\{\frac{\eps_{\sf F}}{\|\Cm_{\yv}\|_{\sf F}},\frac{\theta^3}{\|\Cm_{\yv}\|_{\infty}}, \theta\right\} \\ \text{and} \ \ 
&\eps_{\sf F}\leq  \corrS{c_2 \min \left\{\theta^4, \frac{\theta^6\|\Cm_{\hv}\|_{\sf F}}{\max\{1,\|\Cm_{\hv}\|\} \ \|\Cm_{\yv}\|} \right\}}.
\end{aligned}
\end{equation}
Then,  
    \begin{align}
        \E \left[\left\| \widehat\hv - \widehat{\hv}^{\text{BLM}} \right\|_{2}^2\right] \le C \theta^{-6} \noteJ{\max\{1,\|\Cm_{\hv}\|_{}\} \ \|\Cm_{\hv}\|_{\sf F}} \eps_{\sf F},
    \end{align}
where the expectation is taken with respect to $\rv$.
\end{lemma}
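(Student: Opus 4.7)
My plan is perturbation-theoretic. Since the only source of randomness is $\rv$ and $\E[\rv\rv^\herm]=\Cm_{\rv}$, I first write
\begin{align*}
\E\left\|\widehat{\hv}-\widehat{\hv}^{\text{BLM}}\right\|_2^2 = \trace\!\left(D\,\Cm_{\rv}\,D^\herm\right) \le \|\Cm_{\rv}\|\,\|D\|_{\sf F}^2,\quad D:=\widehat{\Cm}_{\hv\rv}\widehat{\Cm}_{\rv}^{-1}-\Cm_{\hv\rv}\Cm_{\rv}^{-1}.
\end{align*}
Using the resolvent identity $\widehat{\Cm}_{\rv}^{-1}-\Cm_{\rv}^{-1}=-\Cm_{\rv}^{-1}(\widehat{\Cm}_{\rv}-\Cm_{\rv})\widehat{\Cm}_{\rv}^{-1}$, I split
\begin{align*}
D = (\widehat{\Cm}_{\hv\rv}-\Cm_{\hv\rv})\widehat{\Cm}_{\rv}^{-1} \;-\; \Cm_{\hv\rv}\Cm_{\rv}^{-1}(\widehat{\Cm}_{\rv}-\Cm_{\rv})\widehat{\Cm}_{\rv}^{-1}.
\end{align*}
The triangle inequality and the submultiplicativity $\|XY\|_{\sf F}\le\|X\|_{\sf F}\|Y\|$ thus reduce the problem to (i) an operator-norm bound on $\widehat{\Cm}_{\rv}^{-1}$, (ii) on $\Cm_{\hv\rv}\Cm_{\rv}^{-1}$, (iii) on $\|\Cm_{\rv}\|$, and the two Frobenius-norm estimates $\|\widehat{\Cm}_{\hv\rv}-\Cm_{\hv\rv}\|_{\sf F}$ and $\|\widehat{\Cm}_{\rv}-\Cm_{\rv}\|_{\sf F}$. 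Item (i) follows from a Neumann-series argument once $\|\widehat{\Cm}_{\rv}-\Cm_{\rv}\|<\theta/2$, using the assumption $\lambda_{\min}(\Cm_{\rv})\ge\theta$ in~\eqref{eqn:lowBoundLemma}. Items (ii) and (iii) are bounded using $\Am=\sqrt{2/\pi}\diag(\Cm_{\yv})^{-1/2}$ together with \eqref{eqn:lowBoundLemma} and the arcsine law~\eqref{eq:arcsin_law}, yielding estimates of the form $\theta^{-O(1)}\cdot\text{poly}(\|\Cm_{\hv}\|,\|\Cm_{\yv}\|)$.

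The remaining Frobenius-norm differences are handled as follows. For $\|\widehat{\Cm}_{\rv}-\Cm_{\rv}\|_{\sf F}$, I apply the entrywise Lipschitz bound for $\arcsin$ to~\eqref{eq:arcsin_law}, where the normalization $\diag(\Cm_{\yv})^{-1/2}\Cm_{\yv}\diag(\Cm_{\yv})^{-1/2}$ must first be perturbation-controlled: writing the difference of the normalized matrices as (difference in the numerator) plus (effect of the perturbed diagonals), using \eqref{eqn:lowBoundLemma} and $\|\widehat{\Cm}_{\yv}-\Cm_{\yv}\|_{\sf F}<\eps_{\sf F}$, $\|\widehat{\Cm}_{\yv}-\Cm_{\yv}\|_\infty<\eps_\infty$ produces a Frobenius bound of order $\theta^{-O(1)}\eps_{\sf F}$. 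For $\|\widehat{\Cm}_{\hv\rv}-\Cm_{\hv\rv}\|_{\sf F}$ I use the decomposition
\begin{align*}
\widehat{\Cm}_{\hv\rv}-\Cm_{\hv\rv}=(\widehat{\Cm}_{\hv}-\Cm_{\hv})\widehat{\Am}^{\herm}+\Cm_{\hv}(\widehat{\Am}-\Am)^{\herm},
\end{align*}
and note that $\widehat{\Cm}_{\hv}-\Cm_{\hv}=\widehat{\Cm}_{\yv}-\Cm_{\yv}$ has Frobenius norm $<\eps_{\sf F}$, while $\widehat{\Am}-\Am$ is a diagonal matrix of order $\theta^{-3/2}\eps_\infty$ entrywise, so its operator norm is also of that order. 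Packaging these gives $\|\widehat{\Cm}_{\hv\rv}-\Cm_{\hv\rv}\|_{\sf F}\lesssim\theta^{-O(1)}(\eps_{\sf F}+\|\Cm_{\hv}\|_{\sf F}\eps_\infty/\theta)$, and the bound on $\eps_\infty$ in~\eqref{eqn:epsinftyepsF} absorbs the second term into the first.

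Finally, putting everything together produces $\E\|\widehat{\hv}-\widehat{\hv}^{\text{BLM}}\|_2^2\lesssim \theta^{-O(1)}\|\Cm_{\rv}\|\bigl(1+\|\Cm_{\hv}\|^2\bigr)\eps_{\sf F}^2$, i.e., an estimate quadratic in $\eps_{\sf F}$. The linear-in-$\eps_{\sf F}$ form claimed in the lemma is then obtained by using the upper bound on $\eps_{\sf F}$ in~\eqref{eqn:epsinftyepsF}, namely $\eps_{\sf F}\le c_2\theta^6\|\Cm_{\hv}\|_{\sf F}/(\max\{1,\|\Cm_{\hv}\|\}\|\Cm_{\yv}\|)$, to trade one factor of $\eps_{\sf F}$ for $\|\Cm_{\hv}\|_{\sf F}/\|\Cm_{\yv}\|$ and then bounding $\|\Cm_{\rv}\|\lesssim\|\Cm_{\yv}\|/\theta$ via the arcsine law. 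The main obstacle, and the part requiring the most care, is the Lipschitz analysis of the arcsine map: its derivative $(1-x^2)^{-1/2}$ blows up at the boundary, so the whole argument depends on maintaining a quantitative safety margin away from $\pm 1$ for both the true and the estimated normalized covariance entries. Condition \eqref{eqn:nondiagBoundLemma} provides a margin $\theta$ for the true entries, and propagating this to the estimated entries under only the max-norm perturbation $\eps_\infty$ is precisely what dictates the delicate form of~\eqref{eqn:epsinftyepsF}.
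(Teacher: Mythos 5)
Your proposal is correct and follows essentially the same route as the paper: both reduce the problem to the exact identity $\E\|\widehat\hv-\widehat{\hv}^{\text{BLM}}\|_2^2=\trace(D\,\Cm_{\rv}D^\herm)$ with $D$ the difference of the two estimator matrices, bound $\|D\|_{\sf F}$ by the same three-term decomposition (perturbation of $\Cm_{\hv\rv}$, of the Bussgang gain, and of $\Cm_{\rv}^{-1}$ via the arcsine Lipschitz estimate under the margin from \eqref{eqn:nondiagBoundLemma}), and invoke the upper bound on $\eps_{\sf F}$ in \eqref{eqn:epsinftyepsF} to reach a bound linear in $\eps_{\sf F}$. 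The only (immaterial) difference is bookkeeping: you bound the trace purely quadratically by $\|\Cm_{\rv}\|\,\|D\|_{\sf F}^2$ and then trade one factor of $\eps_{\sf F}$, whereas the paper splits it into a linear plus a quadratic term and shows the quadratic one is subdominant — arithmetically the same use of the $\eps_{\sf F}$ condition.
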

\begin{proof}
     See Appendix \ref{sec:AppendixStability}.    
\end{proof}
\begin{remark}
Let us briefly comment on the assumptions in Lemma~\ref{lem:Stability}. As the construction of the BLMMSE involves the inverses of $\diag(\Cm_{\yv})$ and $\Cm_{\rv}$, it is to be expected that in the situation that these matrices are near-singular, a small error in the estimation of the covariance can lead to a large difference between $\widehat\hv$ and $\widehat{\hv}^{\text{BLM}}$. This expected behaviour is quantified in Lemma~\ref{lem:Stability} using the parameter $\theta$. The lower bound on $\lambda_{\min}(\Cm_{\rv})$ is an implicit condition on $\Cm_{\yv}$. To give a more explicit condition, let us write $\operatorname{offdiag}(\Cm_{\rv})$ for the off-diagonal part. Using that $\|\arcsin(\Bm)\|\leq \frac{\pi}{2}\|\Bm\|$ if $\|\Bm\|_{\infty}\leq 1$ (see \cite[Supplementary material]{dirksen2021covariance}), we can make the potentially crude estimate
\begin{equation}
\begin{aligned}
\lambda_{\min}(\Cm_{\rv}) &\geq \lambda_{\min}(\diag(\Cm_{\rv})) - \|\operatorname{offdiag}(\Cm_{\rv})\| \\ 
&\geq 1-\|\operatorname{offdiag}(\Cm_{\yv})\|,
\end{aligned}
\end{equation}
so that it is sufficient if 
\begin{equation}
\|\operatorname{offdiag}(\Cm_{\yv})\|\leq 1-\theta.
\end{equation}
Note that the latter condition also implies \eqref{eqn:nondiagBoundLemma}.
Finally, let us comment on the condition linking $\eps_{\infty}$ and $\eps_{\sf F}$ in \eqref{eqn:epsinftyepsF}. In the application that follows, we will see that the $\ell_{\infty}$-error achieved by the estimator $\widehat{\Cm}_{\yv}$ is a factor $M$ smaller than the achieved Frobenius norm error. As a consequence, the relation between $\eps_{\infty}$ and $\eps_{\sf F}$ will be satisfied.  \hfill $\lozenge$
\end{remark}

\subsection{Channel covariance estimation from quantized samples}
In this part, we present an approach to estimate the covariance matrix $\Cm_{\yv}$ from a finite number of samples so that we can use the estimate $\widehat{\Cm}_{\yv}$ to apply the plug-in BLMMSE channel estimator in \eqref{eq:mmseEstCov}. Assume that the BS collects $N$ unquantized i.i.d. samples $\{\yv_n\}_{n=1}^N$ for covariance estimation\footnote{We assume that the window of $N$ samples is designed such that the samples are sufficiently spaced in the time-frequency domain, resulting in statistically independent channel snapshots. At the same time, the whole window spans a time significantly shorter than the geometry coherence time $T_{\rm c}$, so that the APS remains unchanged.} and applies coarse quantization in the ADCs. In the case of a spatially WSS channel, the diagonal of $\Cm_{\hv}$ is constant and the (non-dithered) one-bit samples $\rv^{\mathrm{nd}}_n$ defined in \eqref{eq:r-nd} can be used. Defining the sample covariance of the quantized samples
\begin{equation}
    \widehat{\Cm}_{\rv}^{\mathrm{nd}} = \frac{1}{N}\sum_{n=1}^N\rv^{\mathrm{nd}}_n \left(\rv^{\mathrm{nd}}_n\right)^\herm,
\end{equation}
the true covariance matrix $\Cm_{\yv}$ can then be estimated via the arcsine-law \cite{bar2002doa, jacovitti1994estimation,dirksen2021covariance}
\begin{equation} 
\label{eq:OneBitEstimatorND}
    \widehat{\Cm}_{\yv}^{\mathrm{nd}} = \sin\left(\frac{\pi}{2}\Re\left(\widehat{\Cm}_{\rv}^{\mathrm{nd}}\right)\right) + j \sin \left(\frac{\pi}{2}\Im\left(\widehat{\Cm}_{\rv}^{\mathrm{nd}}\right)\right).
\end{equation}
Due to the spatially non-WSS property in our model, however, it is seen from the formulation in \eqref{eq:channelCov_true} that the channel covariance may have a non-constant diagonal and non-Toeplitz structure. In such a scenario, the estimator in \eqref{eq:OneBitEstimatorND} will perform poorly since it enforces a constant diagonal.

\corrS{To overcome this limitation of the quantizer $\operatorname{csign}$, we will introduce random dithering \cite{GrN98,GrS93,Rob62}. The beneficial effect of dithering in memoryless one-bit quantization was recently rigorously analyzed in the context of one-bit compressed sensing, see, e.g., \cite{BFN17,Dir19,dirksen2023robust,DiM18a,JMP19,KSW16}. 
We will adopt a covariance estimator from \cite{dirksen2021covariance} that uses dithered quantized samples with uniformly distributed dithers}. Specifically, we assume that the real and imaginary parts of each entry are quantized independently with two independent dithers, so that we are given four dithered one-bit samples
\begin{align} 
\label{eq:rd}
    &\left\{ \Re(\rv_n^{\text{d}}), \Im(\rv_n^{\mathrm{d}}), \Re(\widetilde{\rv}_n^{\mathrm{d}}), \Im(\widetilde{\rv}_n^{\mathrm{d}}) \right\} := \\
    & \qquad \Big\{ \sign\big(\Re(\yv_n) + \tauv_n^{\Re}\big), 
    \sign\big(\Im(\yv_n) + \tauv_n^{\Im}\big), \nonumber\\
    & \;\;\qquad\sign\big(\Re(\yv_n) + \widetilde{\tauv}_n^{\Re}\big), \sign\big(\Im(\yv_n) + \widetilde{\tauv}_n^{\Im}\big) \Big\},\nonumber
\end{align}
where the real dithering vectors $\tauv_n^{\Re}, \tauv_n^{\Im},\widetilde{\tauv}_n^{\Re},\widetilde{\tauv}_n^{\Im} \in \mathbb{R}^M$, for $n \in [N]$, are independent and uniformly distributed in $[-\lambda,\lambda]^M$ and $\lambda>0$ is a tuning parameter. An example of an implementation of such a dithered quantization design is illustrated in Fig.~\ref{fig:dither}. The real part and imaginary part of the received signal after the radio frequency (RF) circuits are sampled and stored separately in two sample-and-hold (S/H) circuits. Then, a switch is used to extract in turn the signals from two S/H circuits and forward them to the one-bit ADC. Meanwhile, a dithering signal generated by the dithering generator (DG) is added into the one-bit ADC and the analog signal is dithered quantized. For instance, if the switches connect the $a$ points, the DG generates random dithering signals $\tau^{\text{Re}}$ and $\tau^{\text{Im}}$. On the other hand, if the $b$ points are connected, $\widetilde{\tau}^{\text{Re}}$ and  $\widetilde{\tau}^{\text{Im}}$ are generated from DG. The quantized signals of all antenna chains will be processed in the digital signal processor (DSP). Note that we use S/H circuits to avoid using two one-bit ADCs for each real or imaginary part signal. Also, this circuit can be directly used for non-dithered one-bit quantization by fixing the connection of switches and turning off the DG.  
\begin{figure}
	\begin{center}
	\includegraphics[width=.9\columnwidth]{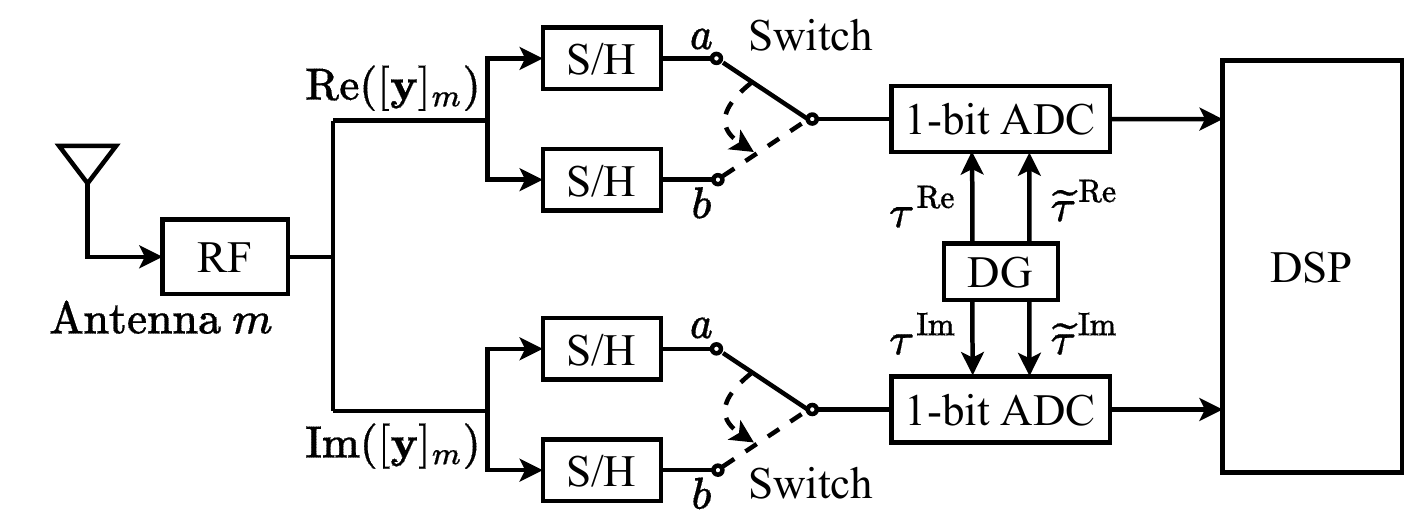}
	\caption{Illustration of the implementation of dithered one-bit quantizer in the $m$-th antenna chain.}\label{fig:dither}
	\end{center}
 \vspace{-.5cm}
\end{figure}

Given $N$ dithered quantized samples from \eqref{eq:rd}, we can estimate $\Cm_{\yv}$ via
\begin{align} \label{eq:TwoBitEstimator}
    \widehat{\Cm}^{\mathrm{d}}_{\yv} &= \frac{1}{2}\widetilde{\Cm}^{\mathrm{d}} + \frac{1}{2}\left(\widetilde{\Cm}^{\mathrm{d}}\right)^\herm, \\
\text{where} \quad
    \widetilde{\Cm}^{\mathrm{d}} &= \frac{\lambda^2}{N}\sum_{n=1}^N \rv_n^{\mathrm{d}}  \left(\widetilde{\rv}_n^{\mathrm{d}} \right)^\herm\label{eq:AsymmetricEstimator}
\end{align}
is an asymmetric version of the sample covariance matrix scaled with $\lambda^2$. 
Regarding the specific shape of the estimator in \eqref{eq:AsymmetricEstimator}, note that \cite[Lemma 15]{dirksen2021covariance} shows that if a random vector $\mathbf y$ takes values in $[-\lambda,\lambda]^M$ and $\boldsymbol{\tau}_n,\widetilde{\boldsymbol{\tau}}_n$ are uniformly distributed on $[-\lambda,\lambda]^M$, then
\begin{equation}    \mathbb{E}_{\boldsymbol{\tau}_n,\widetilde{\boldsymbol{\tau}}_n}\left[\lambda^2\sign(\mathbf y_n + \boldsymbol{\tau}_n)\sign(\mathbf y_n + \widetilde{\boldsymbol{\tau}}_n)^\transp\right]=\mathbf y_n\mathbf y_n^\transp.
\end{equation}
Since the argument carries over to our complex-valued setting in a straightforward way, $\widetilde{\mathbf C}^{\rm d}$ is an unbiased estimator in this bounded case. If $\mathbf y$ is unbounded but concentrates around its mean (as we assume in the present work), then $\widetilde{\mathbf C}^{\rm d}$ is biased, but the bias can be controlled by setting $\lambda$ large enough, cf. Lemma \ref{lem:linftyBiasEst} in Appendix \ref{sec:AppendixFrobeniusBound} and also \cite[Lemma 16]{dirksen2021covariance}.

We can now quantify the approximation of $\Cm_{\yv}$ by $\widehat{\Cm}^{\mathrm{d}}_{\yv}$ for all random vectors $\yv \in \mathbb C^M$ with $S$-subgaussian coordinates.
\begin{definition}
    We say that a random vector $\yv \in \mathbb{C}^M$ with covariance matrix $\Cm_{\yv}$ has \emph{$S$-subgaussian coordinates} if, for all $p\geq 2$ and $j\in [M]$,
\begin{align}\label{eq:SubgaussianEntries}
   & \max\left\{\Big(\mathbb{E}\big[|[\Re(\yv)]_j|^p\big]\Big)^{\frac{1}{p}},\Big(\mathbb{E}\big[|[\Im(\yv)]_j|^p\big]\Big)^{\frac{1}{p}}\right\}\nonumber\\
   &\quad\leq S\sqrt{p}\|\Cm_{\yv}\|_{\infty}^{\frac{1}{2}}.
\end{align}
\end{definition}
Note that if $\yv \in \mathbb{C}^M$ is complex Gaussian with mean zero, then both $\Re(\yv)$ and $\Im(\yv)$ are mean-zero real Gaussian vectors with covariance matrix $\tfrac{1}{2}\Re(\Cm_{\yv})$. Hence, $\yv$ has $S$-subgausian coordinates for some absolute constant $S$. \corrS{The following estimates, which complement operator norm error bounds derived in \cite{dirksen2021covariance}, are tailored to be used in Lemma~\ref{lem:Stability}.}

\begin{theorem} \label{thm:FrobeniusDithered}
Let $\yv\in \mathbb{C}^M$ be a mean-zero random vector with covariance matrix $\E\left[ \yv\yv^\H \right] = \Cm_{\yv}$ and  $S$-subgaussian coordinates. Let $\yv_1,...,\yv_N \overset{\mathrm{i.i.d.}}{\sim} \yv$. Then there exists a constant $c > 0$ which only depends on $S$ such that if $\lambda^2 \gtrsim \log(N) \| \Cm_{\yv} \|_\infty$, the covariance estimator $\widehat{\Cm}^{\mathrm{d}}_{\yv}$ fulfills, for any $t \ge 0$, with probability at least $1-8e^{-c N t}$
    \begin{align} 
    %\label{eq:FrobeniusDithered}
        \pnorm{ \widehat{\Cm}^{\mathrm{d}}_{\yv} - \Cm_{\yv}}{\infty} &\lesssim \lambda^2 \sqrt{\frac{\log(M) + t}{N}} \\
\text{and} \quad
        \pnorm{ \widehat{\Cm}^{\mathrm{d}}_{\yv} - \Cm_{\yv}}{\sfF} &\lesssim \lambda^2 \sqrt{\frac{M^2(\log(M) + t)}{N}}.\label{eq:FrobeniusDithered}
    \end{align}
\end{theorem}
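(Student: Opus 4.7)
My plan is to reduce the problem to a single-entry Bernstein concentration argument, combined with a careful analysis of the dithering bias, and then deduce the Frobenius bound from the $\ell_\infty$ bound via the elementary estimate $\|\Am\|_{\sf F}\le M\|\Am\|_\infty$.

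For the $\ell_\infty$ bound, I would fix $i,j\in[M]$ and set $Z_n:=\lambda^2[\rv_n^{\mathrm{d}}]_i\,\overline{[\widetilde{\rv}_n^{\mathrm{d}}]_j}$, so that $[\widetilde{\Cm}^{\mathrm{d}}]_{i,j}=\tfrac{1}{N}\sum_{n=1}^N Z_n$. Each sign bit lies in $\{-1,+1\}$, so the $Z_n$ are i.i.d.\ with $|Z_n|\le 2\lambda^2$. The first key step is to obtain the conditional unbiasedness identity
$$\mathbb{E}\bigl[Z_n\mid\yv_n\bigr]=[\phi_\lambda(\yv_n)]_i\,\overline{[\phi_\lambda(\yv_n)]_j},$$
where $\phi_\lambda(a+jb):=\mathrm{clip}_\lambda(a)+j\,\mathrm{clip}_\lambda(b)$ with $\mathrm{clip}_\lambda(x):=\sign(x)\min(|x|,\lambda)$. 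This follows by integrating out the four independent uniform dithers on $[-\lambda,\lambda]^M$ and using the one-dimensional identity $\mathbb{E}_{\tau\sim\mathrm{Unif}[-\lambda,\lambda]}[\lambda\,\sign(x+\tau)]=\mathrm{clip}_\lambda(x)$. From this I would estimate the deterministic bias
$$\bigl|\mathbb{E}[Z_n]-[\Cm_\yv]_{i,j}\bigr|\le\mathbb{E}\bigl|[\phi_\lambda(\yv)]_i\overline{[\phi_\lambda(\yv)]_j}-[\yv]_i\overline{[\yv]_j}\bigr|$$
using the $S$-subgaussian tail bound $\mathbb{P}\bigl[|[\Re(\yv)]_i|>\lambda\bigr]\le 2\exp(-c\lambda^2/(S^2\|\Cm_\yv\|_\infty))$ together with Cauchy-Schwarz. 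The assumption $\lambda^2\gtrsim\log(N)\|\Cm_\yv\|_\infty$ is precisely what drives this bias below $\lambda^2/N$, so that it is absorbed into the stochastic fluctuation produced in the next step.

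Next I would apply Bernstein's inequality to the bounded, centered variables $Z_n-\mathbb{E}[Z_n]$: for every $u>0$,
$$\mathbb{P}\!\left[\left|\frac{1}{N}\sum_{n=1}^NZ_n-\mathbb{E}[Z_1]\right|>C\lambda^2\sqrt{u/N}\right]\lesssim e^{-cNu}.$$
Choosing $u=\log(M)+t$ and taking a union bound over the $M^2$ entries (separately over real and imaginary components and the two Bernstein tails) yields
$$\|\widetilde{\Cm}^{\mathrm{d}}-\Cm_\yv\|_\infty\lesssim\lambda^2\sqrt{\frac{\log(M)+t}{N}}$$
with probability at least $1-8e^{-cNt}$, after absorbing numerical constants. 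Since $\Cm_\yv$ is Hermitian, the symmetrization in \eqref{eq:TwoBitEstimator} cannot enlarge this error, so the same estimate transfers to $\widehat{\Cm}^{\mathrm{d}}_\yv$. The Frobenius bound \eqref{eq:FrobeniusDithered} then follows at once from $\|\widehat{\Cm}^{\mathrm{d}}_\yv-\Cm_\yv\|_{\sf F}\le M\|\widehat{\Cm}^{\mathrm{d}}_\yv-\Cm_\yv\|_\infty$.

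The main technical obstacle will be the bias analysis: establishing that the clipping-induced discrepancy $|\mathbb{E}[Z_n]-[\Cm_\yv]_{i,j}|$ is truly negligible compared with the stochastic term requires combining the subgaussian moment assumption on the coordinates of $\yv$ with careful moment estimates for $\phi_\lambda$ (for instance by splitting the expectation over the event $\{|\Re(y_i)|,|\Im(y_i)|,|\Re(y_j)|,|\Im(y_j)|\le\lambda\}$ and its complement, and using Cauchy-Schwarz plus subgaussian fourth-moment bounds on the complement). The calibration $\lambda^2\gtrsim\log(N)\|\Cm_\yv\|_\infty$ is sharp in the sense that it balances the exponentially small clipping probability against the $\lambda^2/\sqrt{N}$ stochastic error; once this calibration is in place, the remaining steps reduce to standard sub-Gaussian Bernstein concentration, in the same spirit as the operator-norm analysis in \cite{dirksen2021covariance}.
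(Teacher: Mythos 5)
Your proposal is correct and follows essentially the same route as the paper: a triangle-inequality split into a dithering bias term and a stochastic fluctuation term, the former controlled by the hypothesis $\lambda^2\gtrsim\log(N)\|\Cm_{\yv}\|_\infty$ and the latter by Bernstein's inequality for bounded variables with a union bound over the $M^2$ entries and the four real/imaginary combinations, followed by $\|\cdot\|_{\sf F}\le M\|\cdot\|_\infty$ for the Frobenius bound. The only cosmetic difference is that you derive the bias estimate from scratch via the conditional-unbiasedness/clipping identity for uniform dither, whereas the paper outsources exactly this computation to \cite[Lemma 17]{dirksen2021covariance}.
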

\begin{proof}
    See Appendix~\ref{sec:AppendixFrobeniusBound}.
\end{proof}
By combining Theorem~\ref{thm:FrobeniusDithered} with Lemma~\ref{lem:Stability}, we can derive a bound on the expected estimation error of the channel vector in terms of the number of samples $N$ used to estimate $\Cm_{\yv}$.

\begin{theorem} \label{thm:MainResult}
There exist constants $c_1,\ldots,c_4>0$ depending only on $S$ such that the following holds. Let $\yv \in \mathbb C^M$ be a zero-mean random vector with covariance matrix $\Cm_{\yv}$ and $S$-subgaussian coordinates. Let $\yv_1,...,\yv_N \overset{\mathrm{i.i.d.}}{\sim} \yv$. Suppose that $\Cm_{\yv}$, $\Cm_{\rv}$, and $\theta\in (0,1)$ satisfy \eqref{eqn:nondiagBoundLemma} and \eqref{eqn:lowBoundLemma}. Further suppose that $\lambda^2 \geq c_1 \log(N) \| \Cm_{\yv} \|_\infty$ and 
\begin{align}
N\geq &c_2\lambda^4 M^2 \corrS{\Big(\theta^{-6}+\theta^{-12}\|\Cm_{\hv}\|_{\sf F}^{-2} \ \max\{1,\|\Cm_{\hv}\|^2\}\|\Cm_{\yv}\|^2\Big)} \nonumber\\
&\times\max\{1,\|\Cm_{\yv}\|_{\infty}^2\}  \big(\log( M) + t\big).
\end{align}
Then, for any $t \ge 0$, with probability at least $1-8e^{-c_3 N t}$
\begin{align}
    \E \left[\left\| \widehat\hv - \widehat{\hv}^{\text{BLM}} \right\|_{2}^2\right] \leq &c_4 \lambda^2 \theta^{-6} M  \max\{1,\|\Cm_{\yv}\|_{\infty}\} \times \\ &  \noteJ{\max\{1,\|\Cm_{\hv}\|_{}\} \|\Cm_{\hv}\|_{\sf F}} \sqrt{\frac{\log(M) + t}{N}}. \nonumber
\end{align}
\end{theorem}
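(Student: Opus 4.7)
The plan is to combine Theorem~\ref{thm:FrobeniusDithered} with Lemma~\ref{lem:Stability} in a direct way: use the former to produce, with high probability, explicit quantities $\eps_{\infty}$ and $\eps_{\sf F}$ controlling $\|\widehat{\Cm}^{\mathrm{d}}_{\yv} - \Cm_{\yv}\|_{\infty}$ and $\|\widehat{\Cm}^{\mathrm{d}}_{\yv} - \Cm_{\yv}\|_{\sf F}$, respectively, and use the latter to translate these into the claimed bound on $\E[\|\widehat{\hv} - \widehat{\hv}^{\text{BLM}}\|_2^2]$. Accordingly, the whole argument reduces to a bookkeeping exercise: identifying the precise lower bound on $N$ under which the bounds produced by Theorem~\ref{thm:FrobeniusDithered} satisfy the smallness conditions \eqref{eqn:epsinftyepsF} required by Lemma~\ref{lem:Stability}.

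First, I would invoke Theorem~\ref{thm:FrobeniusDithered} with $\widehat{\Cm}_{\yv} = \widehat{\Cm}^{\mathrm{d}}_{\yv}$ (using the hypothesis $\lambda^2 \gtrsim \log(N)\|\Cm_{\yv}\|_{\infty}$) to obtain, with probability at least $1 - 8 e^{-c N t}$, the simultaneous estimates
\begin{equation*}
\eps_{\infty} \lesssim \lambda^2\sqrt{\frac{\log M + t}{N}}, \qquad \eps_{\sf F} \lesssim \lambda^2 M \sqrt{\frac{\log M + t}{N}}.
\end{equation*}
The crucial structural observation is that the two error bounds differ by exactly a factor $M$, which is what permits the relation $\eps_{\infty} \lesssim \eps_{\sf F} / \|\Cm_{\yv}\|_{\sf F}$ demanded in \eqref{eqn:epsinftyepsF}, because $\|\Cm_{\yv}\|_{\sf F} \leq M \|\Cm_{\yv}\|_{\infty}$ and therefore the ratio $\eps_{\infty}/\eps_{\sf F} \lesssim 1/M$ is compatible with $1/\|\Cm_{\yv}\|_{\sf F}$ modulo the factor $\max\{1, \|\Cm_{\yv}\|_{\infty}\}$ that will appear in the final bound.

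Second, I would translate each individual constraint in \eqref{eqn:epsinftyepsF} into a lower bound on $N$. The upper bound on $\eps_{\infty}$ imposed by the minimum in \eqref{eqn:epsinftyepsF}, namely $\min\{\eps_{\sf F}/\|\Cm_{\yv}\|_{\sf F}, \theta^3/\|\Cm_{\yv}\|_{\infty}, \theta, 1\}$, becomes, after substituting the explicit bound for $\eps_{\infty}$, a lower bound of the form $N \gtrsim \lambda^4 (\log M + t) \cdot \max\{1, \|\Cm_{\yv}\|_{\infty}^2\}/\theta^{\alpha}$ for a suitable exponent $\alpha$. Similarly, the constraint on $\eps_{\sf F}$ given by $\min\{\theta^4, \theta^6 \|\Cm_{\hv}\|_{\sf F}/(\max\{1,\|\Cm_{\hv}\|\}\|\Cm_{\yv}\|)\}$ becomes a lower bound of the form $N \gtrsim \lambda^4 M^2 (\log M + t)(\theta^{-8} + \theta^{-12}\|\Cm_{\hv}\|_{\sf F}^{-2}\max\{1,\|\Cm_{\hv}\|^2\}\|\Cm_{\yv}\|^2)$. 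All such conditions are collected and verified to be implied by the hypothesis on $N$ in the statement of the theorem.

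Finally, once all the hypotheses of Lemma~\ref{lem:Stability} are in force on the event produced by Theorem~\ref{thm:FrobeniusDithered} (the conditions \eqref{eqn:nondiagBoundLemma} and \eqref{eqn:lowBoundLemma} are assumed directly), I would apply Lemma~\ref{lem:Stability} to obtain
\begin{equation*}
\E\left[\|\widehat{\hv} - \widehat{\hv}^{\text{BLM}}\|_{2}^{2}\right] \leq C\theta^{-6}\max\{1,\|\Cm_{\hv}\|\}\|\Cm_{\hv}\|_{\sf F}\eps_{\sf F},
\end{equation*}
and then substitute the bound on $\eps_{\sf F}$ from Theorem~\ref{thm:FrobeniusDithered}, absorbing an additional factor $\max\{1,\|\Cm_{\yv}\|_{\infty}\}$ that arises when one keeps track of the factor coming from $\|\Cm_{\yv}\|_{\sf F}\leq M\|\Cm_{\yv}\|_{\infty}$ in aligning the norms. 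I do not anticipate a conceptual obstacle; the main labor is the careful merging of the several inequalities defining the admissible range of $\eps_{\infty}$ and $\eps_{\sf F}$ into a single, clean lower bound on $N$, while keeping constants uniform in $S$ alone.
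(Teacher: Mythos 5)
Your proposal is correct and follows essentially the same route as the paper's proof: invoke Theorem~\ref{thm:FrobeniusDithered} to obtain the high-probability bounds $\eps_{\infty}\lesssim\lambda^2\sqrt{(\log M+t)/N}$ and $\eps_{\sf F}\lesssim\lambda^2 M\max\{1,\|\Cm_{\yv}\|_{\infty}\}\sqrt{(\log M+t)/N}$, use $\|\Cm_{\yv}\|_{\sf F}\leq M\|\Cm_{\yv}\|_{\infty}$ to verify the coupling $\eps_{\infty}\lesssim\eps_{\sf F}/\|\Cm_{\yv}\|_{\sf F}$, check that the remaining conditions in \eqref{eqn:epsinftyepsF} follow from the assumed lower bound on $N$, and conclude via Lemma~\ref{lem:Stability}. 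You even identify the same key bookkeeping point as the paper, namely the deliberate inflation of $\eps_{\sf F}$ by the factor $\max\{1,\|\Cm_{\yv}\|_{\infty}\}$.
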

\begin{proof}
    See Appendix~\ref{sec:AppendixMainResult}.
\end{proof}
%\footnote{\red{Sjoerd: previously, there was a remark here claiming that we empirically observe that one can set $\lambda = 2\|\Cm_{\hv}\|_{\infty}$. I think this is wrong: our experiments suggest that $\lambda$ should grow with the sample size}}
% \red{JM: Shall the proof stay here? And shall it be corollary or something else? Tianyu, feel free to change according to your liking!}

\begin{remark}
%Theorem \ref{thm:MainResult} implies that the parameter $\lambda$ of the uniform distribution of the dithering vectors must be carefully tuned. Developing a corresponding method is thus desirable. We defer this open point to future work.
Theorem \ref{thm:MainResult} implies that the parameter $\lambda$ of the uniform distribution of the dithering vectors must be carefully tuned. Unfortunately, there is no closed-form method to extract an optimal $\lambda$ from the data. 
Nevertheless, Theorem \ref{thm:FrobeniusDithered} yields sufficient conditions on the choice of $\lambda$ in terms of the number of samples $N$ and the maximum variance of the data $\| \mathbf C_{\mathbf{y}} \|_\infty$, since Theorem \ref{thm:FrobeniusDithered} holds whenever $\lambda^2 \gtrsim \log(N) \| \mathbf C_{\mathbf{y}} \|_\infty$. 
In practice, one would estimate $\| \mathbf C_{\mathbf{y}} \|_\infty$ from the data in advance. This can be done more easily than estimating the full covariance matrix $\mathbf C_{\mathbf{y}}$ but is not yet covered by the present analysis. Note that in the context of unlimited sampling \cite{eamaz2022uno}, the dithering parameter of the one-bit quantizer is related to the $\ell_\infty$-norm of the \emph{bounded} samples. The sufficient condition in Theorem \ref{thm:FrobeniusDithered} is an analogous condition for our setting with samples that are not uniformly bounded --- the choice of $\lambda$ now depends on the maximal variance and the number of samples instead of on the $\ell_\infty$-norm of the samples.
\hfill $\lozenge$
\end{remark}

\subsection{APS-based channel covariance estimation}
\label{sec:APSCovEst}

Let us now revisit the problem of estimating the channel covariance based on an estimator $\widehat{\Cm}_{\yv}$ of $\Cm_{\yv}$. Previously we used the basic estimator for the channel covariance given by
%Previously we used $\Cm_{\yv} = \Cm_{\hv} + N_0 \mathbf{I}_M$ and assumed that the noise power $N_0$ is known at the BS, so that we can directly obtain a basic estimator for the channel covariance given by
\begin{align}\label{eq:basic_estimator_Ch}
    \widehat{\Cm}_{\hv} = \left(\widehat{\Cm}_{\yv} - N_0\mathbf{I}_M\right),
\end{align}
%\begin{align*}
%    \widehat{\Cm}^i_{\hv} = \left(\widehat{\Cm}^i_{\yv} - N_0\mathbf{I}_M\right), \quad
%    \forall i \in \{\mathrm{nd}, \mathrm{d}\}.
%\end{align*}
which may not necessarily be a positive semi-definite matrix. To heuristically improve the performance of this basic estimator, we further exploit the angle domain and apply a commonly considered APS-based covariance fitting to estimate the APS and subsequently enhance the estimate of the channel covariance, see e.g., \cite{khalilsarai2020structured,khalilsarai2018fdd,khalilsarai2021dual,yang2023structured}. Specifically, 
assuming that the visible antennas of all scattering clusters are known at the BS, i.e., the BS has the exact knowledge of 
% the AoAs $\left\{\theta^{\text{c}}_i\right\}_{i=1}^{L^{\text{c}}}$ and $\left\{\theta^{\text{l}}_{ij}\right\}_{j=1}^{L^{\text{l}}_j}, \forall i \in [L]$ 
the selection matrices $\{\Sm_i\}_{i=1}^L$. Using $G$ Dirac delta functions that are equally spaced in the angle domain with AoAs $\{\theta_i\}_{i=1}^G$  as dictionaries, the channel covariance matrix can be approximated as 
\begin{align}
\label{eq:parametric_covariance}
    \widetilde{\Cm}_\hv(\widehat{\gammav}) = \widetilde{\Am} \diag(\widehat{\gammav}_{L+1})\widetilde{\Am}^\herm + \sum^L_{i=1}\Sm_i\widetilde{\Am} \diag(\widehat{\gammav}_i)\widetilde{\Am}^\herm\Sm_i^\herm,
\end{align}
where $\widetilde{\Am} := [\av(\theta_1), \dots, \av(\theta_G)]$ and we define $\widehat{\gammav} \in\mathbb{R}^{\widetilde{G}}_+ :=[\widehat{\gammav}_1^\transp, \dots, \widehat{\gammav}_{L+1}^\transp]^\transp$ as the non-negative coefficients to be estimated, where $\widetilde{G} = (L+1)G$. Then, we estimate the coefficients by fitting the parametric channel covariance $\widetilde{\Cm}_\hv(\widehat{\gammav})$ to the basic estimator $\widehat{\Cm}_{\hv}$ in terms of the Frobenius norm. We denote the estimated coefficients by
\begin{align}
\label{eq:frobenius_norm}
    \widehat{\gammav}^{\star}=\argmin_{\widehat{\gammav} \in\mathbb{R}_+^{\widetilde{G}}}\; \left\|\widetilde{\Cm}_\hv(\widehat{\gammav}) - \widehat{\Cm}_{\hv}\right\|_{\sf F}^2.
\end{align}
By defining $\bv := \vec(\widehat{\Cm}_{\hv})$, $\Bm := \left[\Bm^{\text{l}}_1, \dots, \Bm^{\text{l}}_L, \Bm^{\text{c}}\right]$, and $\bA_i := \av(\theta_i)\av(\theta_i)^\herm$, where 
\begin{align}
    \Bm^{\text{c}} &:= \left[\vec(\bA_1) , \dots, \vec(\bA_G)\right], \\ 
    \Bm^{\text{l}}_i &:= \left[\vec(\Sm_i\bA_1\Sm_i^\herm) , \dots, \vec(\Sm_i\bA_G\Sm_i^\herm)\right],  \; \forall i \in [L],
\end{align}
we can rewrite \eqref{eq:frobenius_norm}  as a NNLS problem, given as 
\begin{align}
\label{eq:NNLS}
\widehat{\gammav}^{\star} = \argmin_{\widehat{\gammav} \in\mathbb{R}_+^{\widetilde{G}}}\; \left\|\Bm \widehat{\gammav} - \bv\right\|^2_2.
\end{align}
This problem can be efficiently solved using a variety of convex optimization techniques (see, e.g., \cite{chen2010nonnegativit,lawson1974solving}). In our simulations, we use the novel MATLAB NNLS solver \cite{bill2022nnls} which is much faster and stabler than the built-in MATLAB function \textit{lsqnonneg}, especially under large problem dimensions as considered here. The analysis of the computational complexity of the considered NNLS problem in \eqref{eq:NNLS} can be found in \cite[Appendix C]{yang2023structured}. With the estimated APS $\widehat{\gammav}^{\star}$ in hand, we obtain $\widehat{\Cm}_\hv^{\star}:=\widetilde{\Cm}_\hv(\widehat{\gammav}^{\star})$ as the final estimator of the channel covariance. 

\section{Data transmission rate}
\label{sec:receiver}

In the UL data transmission phase, we assume $K$ users simultaneously transmit their data. The received signal at the BS before quantization is given by 
\begin{align}
    \yv_{\text{D}} = \Hm\sv + \nv_{\text{D}},
\end{align}
where $\Hm \in \mathbb{C}^{M\times K}$ is the channel matrix of $K$ users, $\sv\sim\mathcal{CN}(\mathbf{0},\mathbf{I}_K)$ is the vector of the data signals of $K$ users and $\nv_{\text{D}}\sim\mathcal{CN}(\mathbf{0},N_0\mathbf{I}_M)$ is additive white Gaussian noise. Note that we use the subscript `D' to indicate the signal during data transmission. After the one-bit non-dithered quantization, the quantized signal is given by
\begin{align}\label{eq:bussgang_data}
    \rv_{\text{D}} = \mathcal{Q}(\yv_{\text{D}}) = \Am_{\rm D}\Hm\sv + \Am_{\rm D}\nv_{\text{D}} + \qv_{\text{D}} = \widetilde{\Hm}\sv + \widetilde{\nv}_{\rm D},
\end{align}
where $\Am_{\rm D} = \sqrt{\frac{2}{\pi}}(\diag(\Hm\Hm^\herm+N_0\mathbf{I}_M))^{-\frac{1}{2}}$ is the Bussgang gain calculated similarly as in \eqref{eq:Am} and \eqref{eq:Am_fianl}, $\widetilde{\Hm}:=\Am_{\rm D}\Hm$ is the effective channel and $\widetilde{\nv}_{\rm D} := \Am_{\rm D}\nv_{\text{D}} + \qv_{\text{D}}$ is the total noise. By treating interference as noise, we apply a linear receiver $\Wm^\herm$ to separate the quantized signal into $K$ streams as
\begin{align}
    \widehat{\sv} &= \Wm^\herm\rv_{\text{D}} = \Wm^\herm\Am_{\rm D}\Hm \sv + \Wm^\herm(\Am_{\rm D}\nv_{\text{D}} + \qv_{\text{D}}).
\end{align}
The data signal of the $k$-th user is then decoded by the $k$-th element of $\widehat{\sv}$ as 
\begin{align}
    \widehat{s}_k &= \wv_k^\herm\Am_{\rm D}\hv_k s_k + \wv_k^\herm\sum^K_{i\neq k}\Am_{\rm D}\hv_i s_i +  \wv_k^\herm(\Am_{\rm D}\nv_{\text{D}} + \qv_{\text{D}}),\nonumber
\end{align}
where $\wv_k$ and $\hv_k$ are the $k$-th columns of the $\Wm$ and $\Hm$, respectively. The covariance matrix of the statistically equivalent quantizer noise $\qv_{\text{D}}$ is given by
\begin{equation}
    \Cm_{\qv_{\rm D}} = \Cm_{\rv_{\rm D}} - \Am_{\rm D}\Cm_{\yv_{\rm D}}\Am_{\rm D}^\herm,
\end{equation}
where $\Cm_{\yv_{\rm D}} = \Hm \Hm^\herm + N_0\mathbf{I}_M$ and the covariance matrix $\Cm_{\rv_{\rm D}}$ of $\rv_{\text{D}}$ can be obtained via the arcsine law as $\Cm_{\rv_{\rm D}}= \mathcal{P}_{\text{arcsine}}(\Cm_{\yv_{\rm D}})$. Note that the quantizer noise $\qv_{\text{D}}$ is non-Gaussian. Considering the worst case by treating $\qv_{\text{D}}$  as Gaussian distributed with the same covariance matrix $\Cm_{\qv_{\rm D}}$, we can obtain a lower bound of the optimistic ergodic sum rate\footnote{Note that the ``optimistic ergodic sum rate’’ is an upper bound assuming Gaussian signaling since it assumes that the useful signal coefficient and the interference variance are perfectly known \cite{caire2018ergodic}.} of $K$ users \cite{diggavi2001worst}, given in \eqref{eq:sum-rate}.

\begin{figure*}[!b]
\normalsize
\vspace*{-3mm}
\hrulefill
\begin{align}
\label{eq:sum-rate}
    R_{\text{sum}} = \sum_{k=1}^K\mathbb{E}\left[\log_2\left(1+\frac{|\wv_k^\herm\Am_{\rm D}\hv_k|^2}{\sum^K_{i\neq k}|\wv_k^\herm\Am_{\rm D}\hv_i|^2+ N_0 \|\wv_k^\herm\Am_{\rm D}\|^2_2 + \wv_k^\herm \Cm_{\qv_{\text{D}}} \wv_k }\right)\right]
\end{align} 
\vspace*{-1cm}
\end{figure*}

Now, we consider the design of the linear receiver $\Wm$. Using the proposed plug-in channel estimator, the channel matrix $\Hm$ is estimated as $\widehat{\Hm}$. Then, the conventional MRC and ZF receivers are given as 
\begin{align}\label{eq:receivers_mrc_zf}
    \Wm^\herm_{\text{MRC}} = \widehat{\Hm}^\herm,\quad
    \Wm^\herm_{\text{ZF}} = \left(\widehat{\Hm}^\herm\widehat{\Hm}\right)^{-1}\widehat{\Hm}^\herm.
\end{align}
Note that the conventional MRC and ZF receivers do not consider the effective channel and might not perform so well due to the quantized signal. Therefore, we propose a BLMMSE receiver that takes the quantized signal directly into account by considering the effective channel and total noise in the Bussgang decomposition  \eqref{eq:bussgang_data}, which is expected to yield better performance. Specifically, similar to the BLMMSE channel estimator in  \eqref{eq:mmse}, the BLMMSE receiver is given as
\begin{align}
    \Wm^\herm_{\text{BLM}} &= \Cm_{\sv\rv_{\rm D}}\Cm_{\rv_{\rm D}}^{-1}, \\ \text{where} \quad\quad
    \Cm_{\sv\rv_{\rm D}} &= \bE\left[\sv\rv_{\rm D}^\herm\right] = \Hm^\herm\Am_{\rm D}^\herm, \\ 
    \Cm_{\rv_{\rm D}} &= \bE\left[\rv_{\rm D}\rv_{\rm D}^\herm\right] = \mathcal{P}_{\text{arcsine}}(\Cm_{\yv_{\rm D}}).
\end{align}
In practice, using the estimated channel matrix $\widehat{\Hm}$, the BLMMSE receiver $\Wm^\herm_{\text{BLM}}$ is given as
\begin{equation}
\label{eq:receivers_blm}
    \Wm^\herm_{\text{BLM}} = \widehat{\Hm}^\herm\widehat{\Am}_{\rm D}^\herm \left(\mathcal{P}_{\text{arcsine}}\left(\widehat{\Cm}_{\yv_{\rm D}}\right)\right)^{-1},
\end{equation}
where $\widehat{\Am}_{\rm D} = \sqrt{\frac{2}{\pi}}(\diag(\widehat{\Hm}\widehat{\Hm}^\herm+N_0\mathbf{I}_M))^{-\frac{1}{2}}$ and $\widehat{\Cm}_{\yv_{\rm D}}=\widehat{\Hm}\widehat{\Hm}^\herm+N_0\mathbf{I}_M$.

\section{Simulation results}
\label{sec:numerics}

In our simulation, we take $M = 256$ antennas at the BS in ULA. The channel consists of $L^{\text{c}} = 3$ multipaths in common clusters and $L=2$ local clusters, where each local cluster is composed of three multipaths, i.e., $L^{\text{l}}_i=3, \; i = 1, 2$. The first local cluster is visible to the first quarter of antennas and the second local cluster is visible to the last quarter of antennas, i.e., $\mathcal{M}^{\text{l}}_1 = \{1, 2, \dots, \frac{M}{4}\}, \; \mathcal{M}^{\text{l}}_2 = \{\frac{3M}{4} + 1, \frac{3M}{4} + 2, \dots, M\}$. The AoAs of the common clusters and the first and second local clusters are uniformly and randomly generated from $[-60, 60]$, $[-60,0]$ and $[0,60]$ degrees, respectively, i.e., $\theta^{\rm c}_i \sim \mathcal{U}(-60,60), \forall i \in [L^{\rm c}], \theta^{\rm l}_{1,j} \sim \mathcal{U}(-60,0)  ,\forall j \in [L^{\rm l}_1], \theta^{\rm l}_{2,j} \sim \mathcal{U}(0,60) ,\forall j \in [L^{\rm l}_2]$. The APS of all multipaths are randomly generated with the constraints $P^{\text{c}} = 0.3$, $P^{\text{l}}_1 = 0.7$, and $P^{\text{l}}_2 = 0.5$. Note that this setting satisfies the assumption of $\max(\diag(\Cm_{\hv})) = 1$. Unless otherwise stated, the SNR is set to $10$dB (equivalently, $N_0=0.1$). We consider three different basic estimators $\widehat{\Cm}_{\yv}$ for $\Cm_{\yv}$ and correspondingly $\widehat{\Cm}_\hv$ for $\Cm_\hv$ according to \eqref{eq:basic_estimator_Ch} (denoted as ``Basic''): the estimator \eqref{eq:OneBitEstimatorND}, based on non-dithered one-bit quantized samples, the estimator \eqref{eq:TwoBitEstimator}, based on dithered one-bit quantized samples, and finally, as a reference, the sample covariance matrix of the \emph{unquantized} samples. We then use the results of basic estimators to produce the 
APS-fitting-based estimator $\widehat{\Cm}_\hv^{\star}$ (denoted as ``NNLS''), as is detailed in Section~\ref{sec:APSCovEst}. 
All results presented below are averaged over 10 random channel geometry realizations, each with 20 groups of $N$ i.i.d. random channel realizations.  

\subsection{Channel covariance estimation} 
Given an estimator $\widehat{\Cm}_\hv^{\star}$ of the channel covariance matrix, we first evaluate it in terms of the normalized Frobenius norm error, which is given by 
\begin{equation}
E_{\rm NF} = \E\left[\frac{\|\Cm_\hv - \widehat{\Cm}_\hv^{\star}\|^2_{\sf F}}{\|\Cm_\hv \|^2_{\sf F}}\right].
\end{equation}
The numerical results of the proposed dithered one-bit estimator under different values of $\lambda$ are shown in Fig.~\ref{fig:E_NF_lambda}. It is seen that the choice of $\lambda$ significantly influences the results of dithered quantization. Although the resulting $E_{\rm NF}$ of the basic estimator is very sensitive to $\lambda$, the APS-based method can significantly enhance not only the covariance estimation accuracy but also the robustness against $\lambda$, especially under a larger number of samples $N$.
Moreover, the results of the three estimators under various $N$ are shown in Fig.~\ref{fig:E_NF_N}. For the APS-based estimators, we provide results of two different grid sizes ($G=M$ and $G = 2M$) in the angle domain. While the APS-based method improves estimation for all basic estimators, we observe that all APS-based results are saturated for large $N$. Especially for the extreme unquantized case under $N>100$, the APS-based estimator with $G=M$ shows a larger $E_{\rm NF}$ than the basic estimator but the $E_{\rm NF}$ becomes much smaller after increasing the grid size to $G=2M$. This indicates that the grid size $G$ would be the bottleneck for large $N$. Nevertheless, for the one-bit case, $G=M$ already brings a notable improvement for a large range of $N$, and the resulting $E_{\rm NF}$ of the proposed dithered one-bit scheme is much smaller than the $E_{\rm NF}$ of the non-dithered one-bit scheme for all $N$. 

\begin{figure*}[t]
\centering
        \begin{subfigure}[b]{0.45\textwidth}
        \includegraphics[width=\columnwidth]{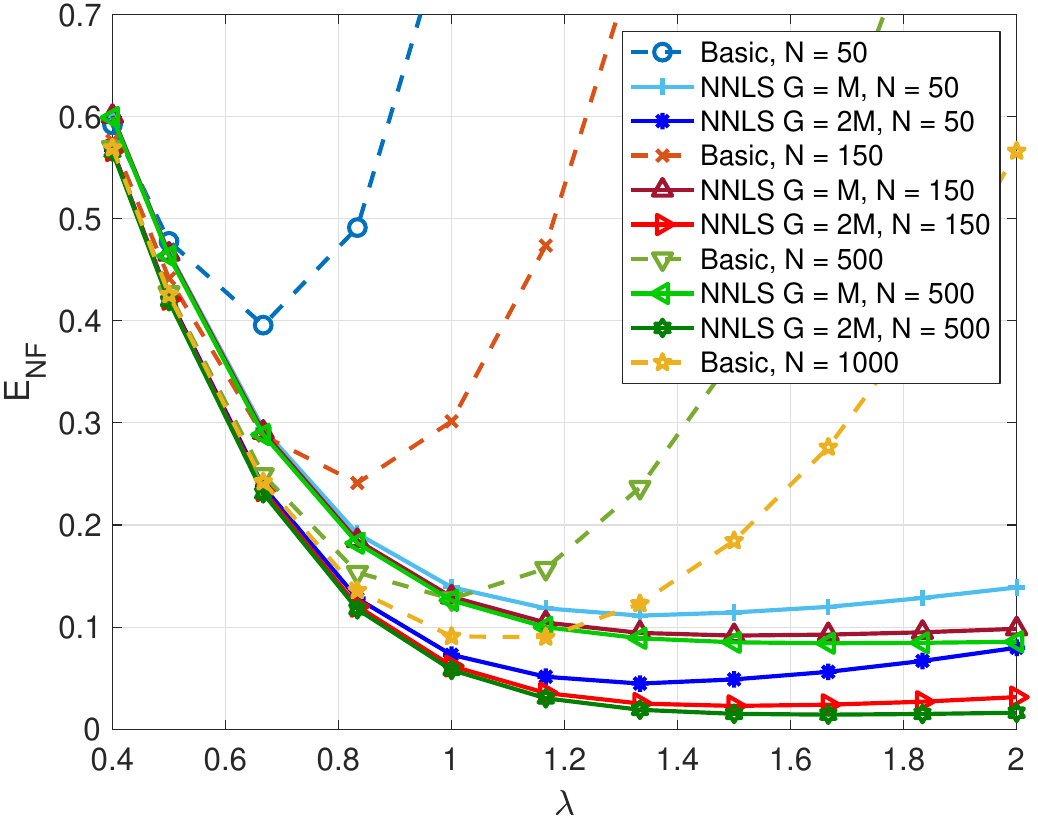}
        \caption{$E_{\rm NF}$ v.s. $\lambda$ of proposed dithered one-bit estimator}
        \label{fig:E_NF_lambda}
        \end{subfigure}
        ~
        \begin{subfigure}[b]{0.45\textwidth}
        \includegraphics[trim={0 1.1mm 0 0mm}, width=1.02\columnwidth]{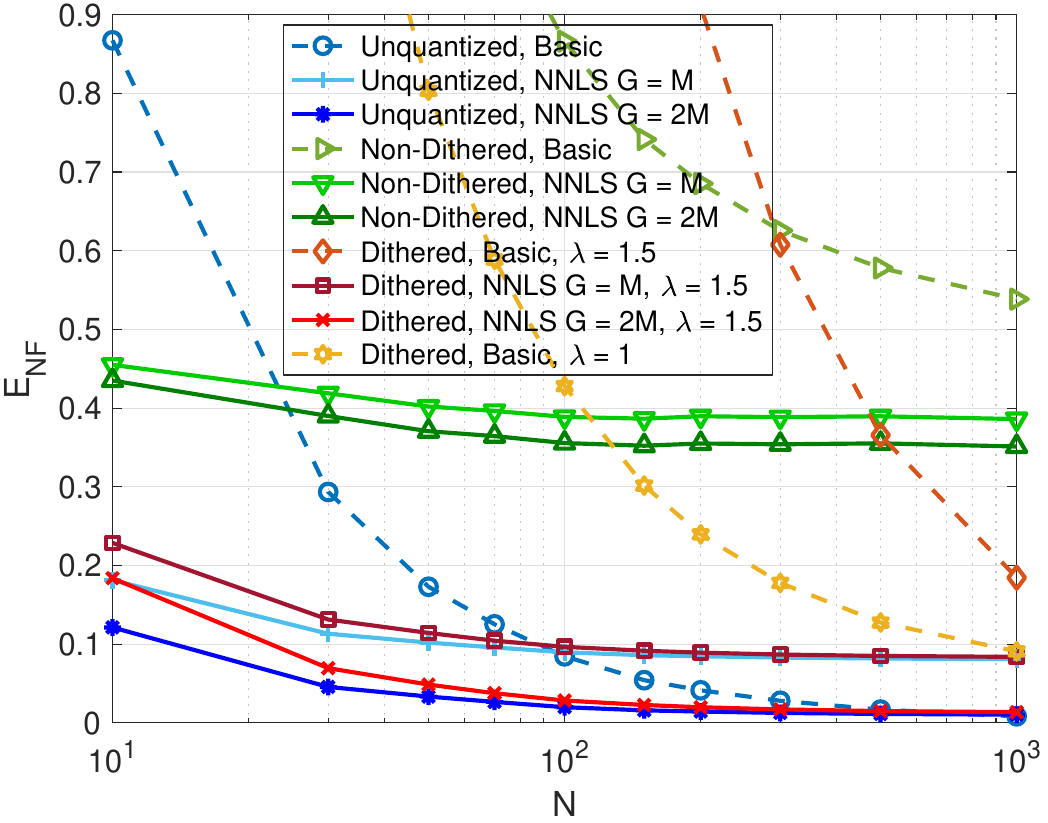}
        \caption{$E_{\rm NF}$ v.s. $N$ of all estimators}
        \label{fig:E_NF_N}
        \end{subfigure}
	\caption{$E_{\rm NF}$ under various $\lambda$ in (a) and $N$ in (b).}
	    \label{fig:E_NF}
     \vspace{-.3cm}
\end{figure*} 

\subsection{Channel vector estimation via BLMMSE}

Next, we numerically evaluate the BLMMSE-based channel vector estimator in terms of the normalized MSE, which is given by 
\begin{equation}
E_{\rm NMSE} = \frac{\E\left[\|\hv - \widehat{\hv}\|^2_2\right]}{\trace(\Cm_\hv)}.
\end{equation}
Given an estimated channel covariance, we calculate the NMSE with 100 i.i.d. random channel realizations. The averaged results 
under various $\lambda$ and $N$ are depicted in Fig.~\ref{fig:E_NMSE_lambda} and Fig.~\ref{fig:E_NMSE_N}, respectively, where the lower bound is obtained by using the true channel covariance. Since the basic channel covariance estimators without APS fitting are not guaranteed to generate PSD matrices (especially under small $N$), their results may not be directly applicable to channel estimation. We therefore only present channel estimation results based on APS fitting. 
It is observed again that the choice of $\lambda$ significantly influences the estimation performance of the dithered case. By applying a proper $\lambda$ (e.g., $\lambda=1$ in Fig.~\ref{fig:E_NMSE_N}) the channel estimate can be considerably improved compared to the non-dithered case for a large range of number of samples. Moreover, it is seen from Figs.~\ref{fig:E_NF_lambda} and \ref{fig:E_NMSE_lambda} that the trend of tuning $\lambda$ in BLMMSE based channel estimation is different from the trend in channel covariance estimation. In the range of $1\leq\lambda\leq2$, the results of BLMMSE-based channel estimation are no longer as robust as the results in covariance estimation even under large $N$. The optimal choices of $\lambda$ for the two estimation problems are also different, e.g., $\lambda^\star \approx 1.5$  in Fig.~\ref{fig:E_NF_lambda} and $\lambda^\star\approx 1.2$ in Fig.~\ref{fig:E_NMSE_lambda} under $N=500$.

\begin{figure*}[t]
\centering
        \begin{subfigure}[b]{0.45\textwidth}
        \includegraphics[width=\columnwidth]{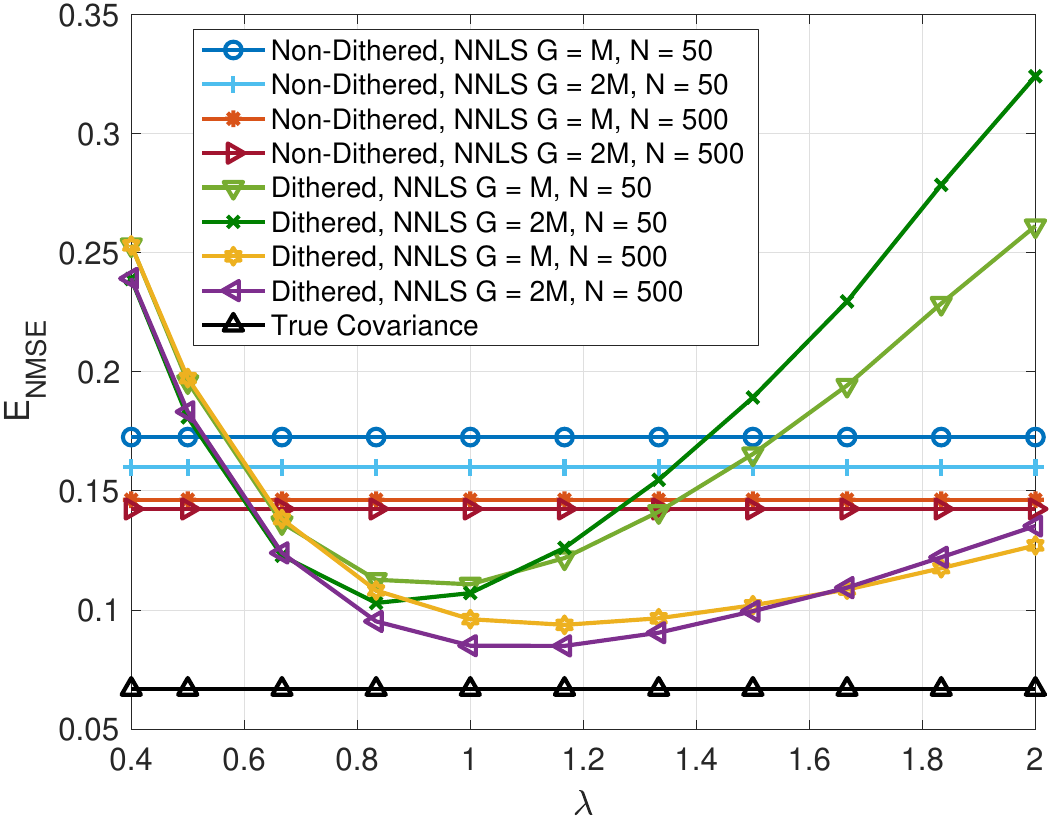}
        \caption{$E_{\rm NMSE}$ v.s. $\lambda$}
        \label{fig:E_NMSE_lambda}
        \end{subfigure}
        ~
        \begin{subfigure}[b]{0.45\textwidth}
        \includegraphics[trim={0 1.1mm 0 0mm},width=1.02\columnwidth]{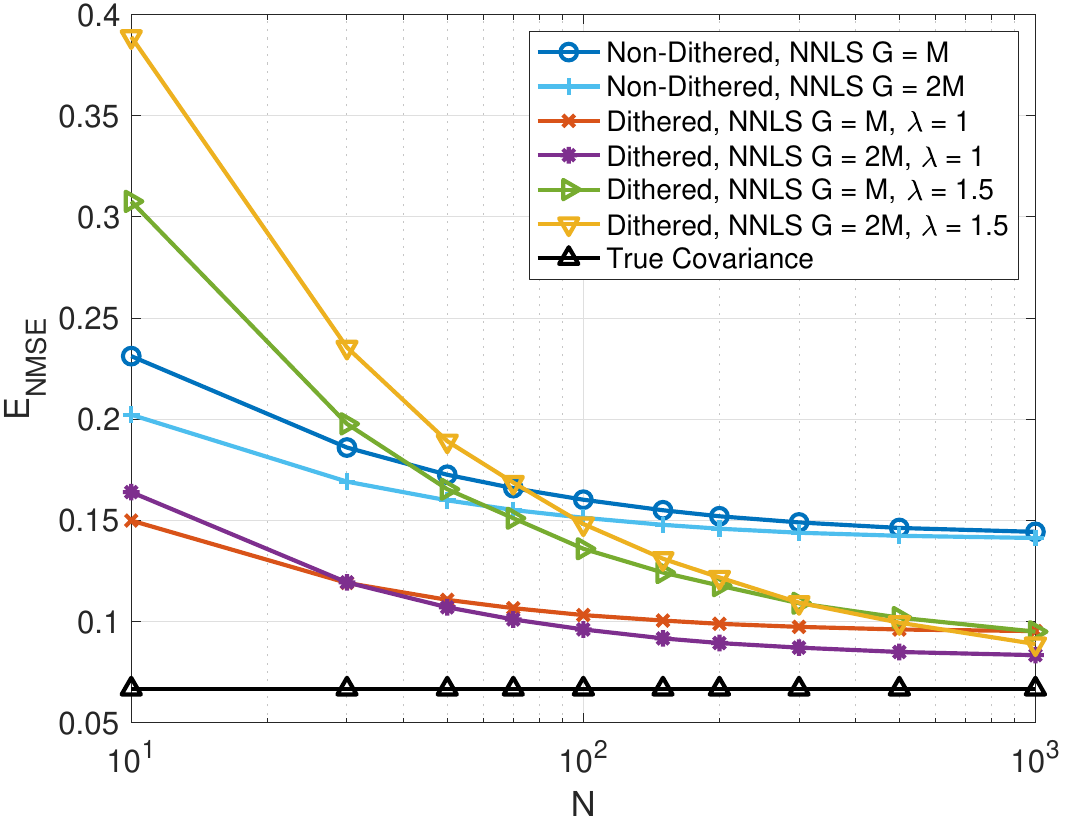}
        \caption{$E_{\rm NMSE}$ v.s. $N$}
        \label{fig:E_NMSE_N}
        \end{subfigure}
	\caption{$E_{\rm NMSE}$ via BLMMSE channel estimator under various $\lambda$ in (a) and  $N$ in (b). }
	    \label{fig:E_MMSE}
     \vspace{-.3cm}
\end{figure*}

\subsection{Effect of SNR on channel estimation}
We now evaluate the effect of the SNR on channel covariance and channel vector estimation, which is shown in Fig.~\ref{fig:SNR}. From Fig.~\ref{fig:E_NF_SNR} and Fig.~\ref{fig:E_NMSE_SNR} we observe that, unlike the trend of unquantized samples and the proposed dithered one-bit samples, where their $E_{\rm NF}$ and $E_{\rm NMSE}$  monotonically decrease with the increase of the SNR, the resulting estimation errors of non-dithered samples start increasing when the SNR is larger than 5dB. To understand the counter-intuitive behavior in the non-dithered case, note that we can alternatively view the noise as adding dithering to the channel vector before quantization. Hence, as we have already seen that appropriate random dithering is beneficial for recovering the covariance, random noise with a proper power would actually help to recover the covariance.  This observation confirms the benefit of our design with dithering for the considered spatially non-stationary system.

\begin{figure*}[t]
\centering
        \begin{subfigure}[b]{0.45\textwidth}
        \includegraphics[width=\columnwidth]{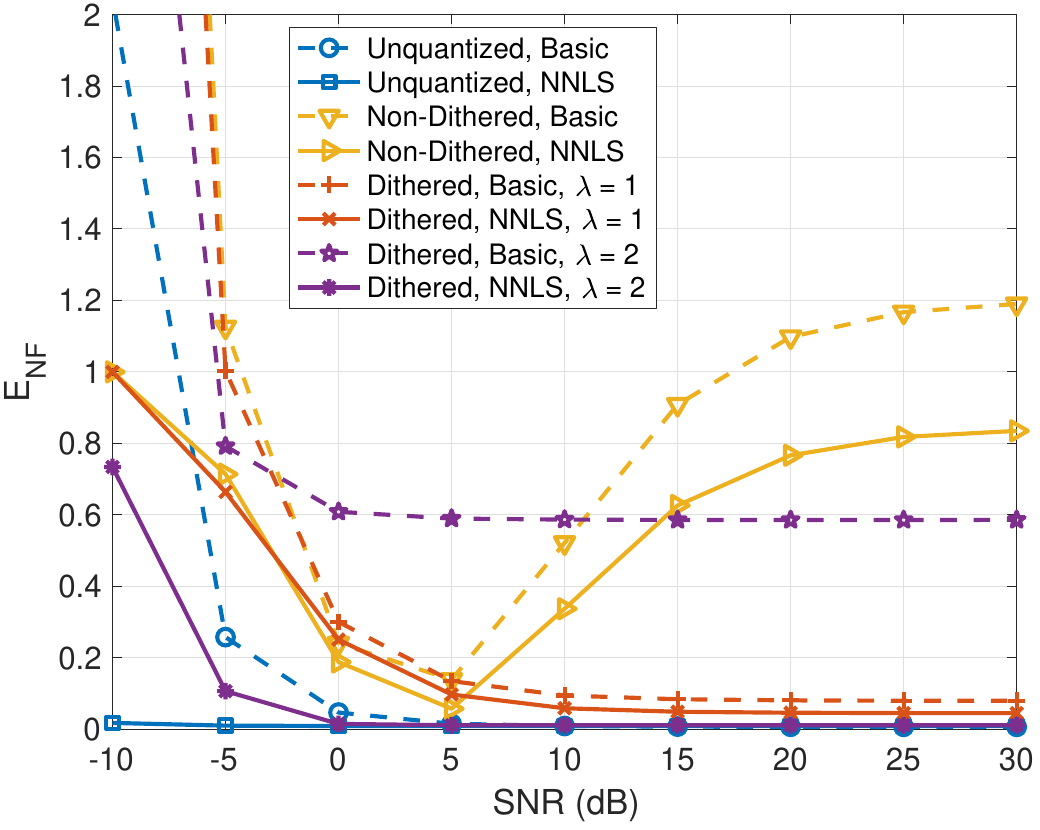}
        \caption{$E_{\rm NF}$ v.s. SNR}
        \label{fig:E_NF_SNR}
        \end{subfigure}
        ~
        \begin{subfigure}[b]{0.45\textwidth}
        \includegraphics[width=\columnwidth]{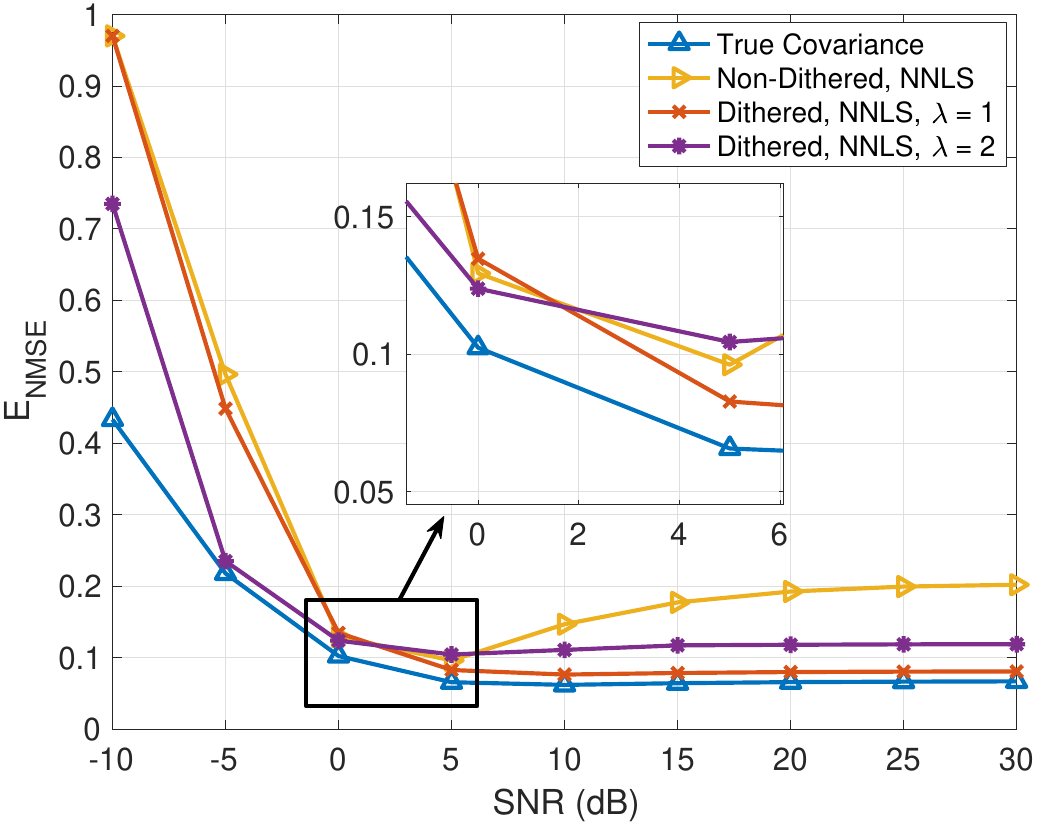}
        \caption{$E_{\rm NMSE}$ v.s. SNR}
        \label{fig:E_NMSE_SNR}
        \end{subfigure}
	\caption{$E_{\rm NF}$ in (a) and $E_{\rm NMSE}$ in (b) under various SNR with $N=1000$ and $G=2M$.}
	    \label{fig:SNR}
     \vspace{-.3cm}
\end{figure*}

\subsection{Ergodic sum rate evaluation}
Given the estimated channels, we evaluate the proposed scheme in terms of the ergodic sum rate given in \eqref{eq:sum-rate}. We test with $K = 4$ users and assume that the channel geometry of all users follows the setting described at the beginning of this section. For each channel estimation, we test with MRC, ZF, and BLMMSE receivers given in \eqref{eq:receivers_mrc_zf}, and \eqref{eq:receivers_blm}. The results of non-dithered and dithered schemes are all based on APS-fitting with $G=M$.  Similarly as in the previous part, besides the non-dithered and dithered schemes we also provide results based on the true channel covariance matrix. However, unlike for the channel MSE criterion used in the previous part, the use of the true covariance is not guaranteed to yield a better sum rate, as a channel estimator with a smaller MSE does not necessarily yield a higher sum rate. Furthermore, we provide results based on the true channel vectors as upper bounds.

We first present the resulting sum rates under various $\lambda$ with $N=50$ samples for covariance estimation and BLMMSE channel estimation in Fig.~\ref{fig:rate_lambda_all}. It is first observed that the BLMMSE receiver performs much better than the ZF and MRC receivers. Especially note that, compared to the ZF and MRC receivers with perfect CSI,  the BLMMSE receiver with one-bit quantized samples still shows significant performance gain. This is expected since the BLMMSE receiver takes the quantization into account whereas the conventional ZF and MRC receivers do not. Next, we observe that the results based on the true covariance do not always provide the largest sum rate as previously explained.  Specifically, among the results with ZF receivers (the dashed lines) the non-dithered one is the best. Since the ZF and MRC receivers are designed to deal with non-quantized received signals and perform much worse than the BLMMSE receiver, we now focus on the results of the BLMMSE receiver, which is also individually depicted in Fig.~\ref{fig:rate_lambda_BLMMSE} with two more cases of $N=500$ and $N=1000$ samples. It is noticed again that a larger number of samples $N$ makes the results with dithering in terms of the sum rate not only better but also more robust against $\lambda$. From Fig.~\ref{fig:rate_lambda_BLMMSE} it is seen that the highest sum rate obtained by the proposed scheme with dithering (under $N=1000$ and $\lambda \approx 0.6$) is very close to the result based on the true covariance matrix. This shows the advantage of the proposed scheme for both channel estimation and multi-user receivers under one-bit quantization.

Finally, we focus on the influence of the number of samples $N$. The sum rates under various $N$ of MRC, ZF, and BLMMSE are depicted in Fig.~\ref{fig:rate_N_all} and of only BLMMSE with three different choices of $\lambda$ for the dithered case are depicted in Fig.~\ref{fig:rate_N_BLMMSE}. In Fig.~\ref{fig:rate_N_all} we see a similar behavior as before: the BLMMSE receiver produces better sum rates than the ZF and MRC receivers over a large range of $N$. It is seen from Fig.~\ref{fig:rate_N_BLMMSE} that under the best $\lambda^\star \approx 0.6$ the proposed scheme with dithering produces sum rates comparable to the results based on the true channel covariance when $N\geq 100$. It is additionally observed from Fig.~\ref{fig:rate_N_BLMMSE} that as the number of samples $N$ increases the difference between the results obtained with different $\lambda$ is decreasing. This indicates again that under larger $N$ the results with dithering are more robust to variations in $\lambda$. 

\begin{figure*}[t]
\centering
        \begin{subfigure}[b]{0.45\textwidth}
        \includegraphics[width=\columnwidth]{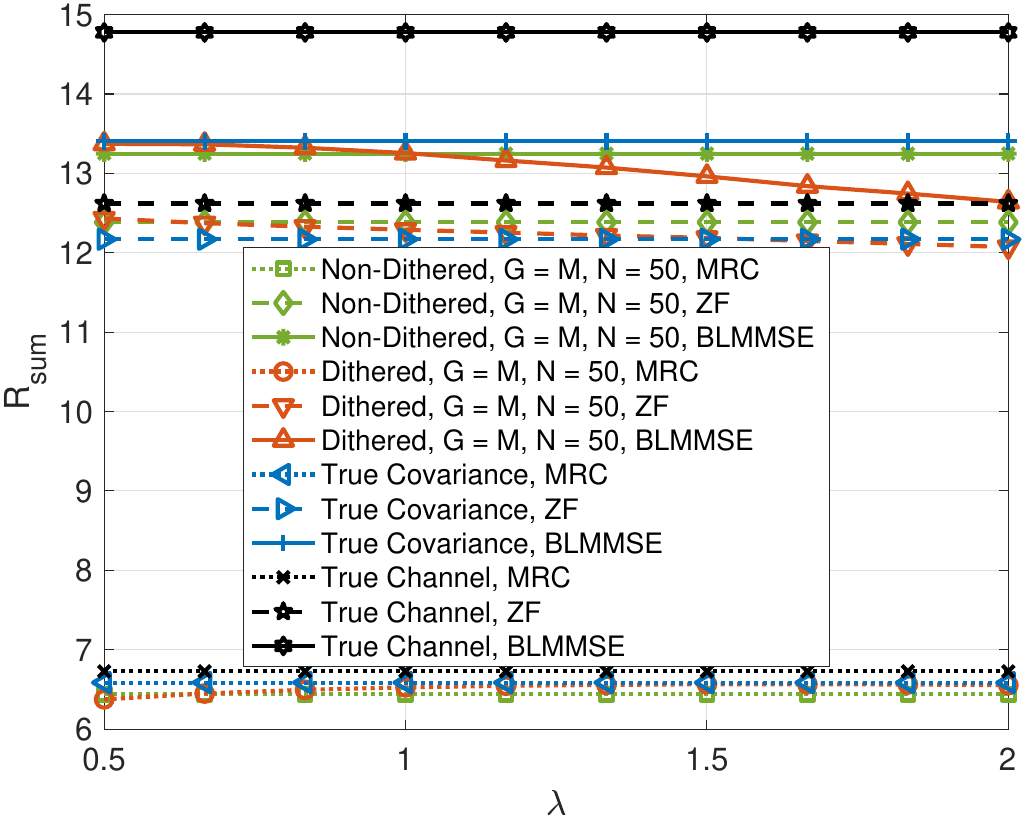}
        \caption{$R_{\rm sum}$ v.s. $\lambda$ via MRC, ZF, BLMMSE receivers}
        \label{fig:rate_lambda_all}
        \end{subfigure}
        ~
        \begin{subfigure}[b]{0.45\textwidth}
        \includegraphics[width=1.025\columnwidth]{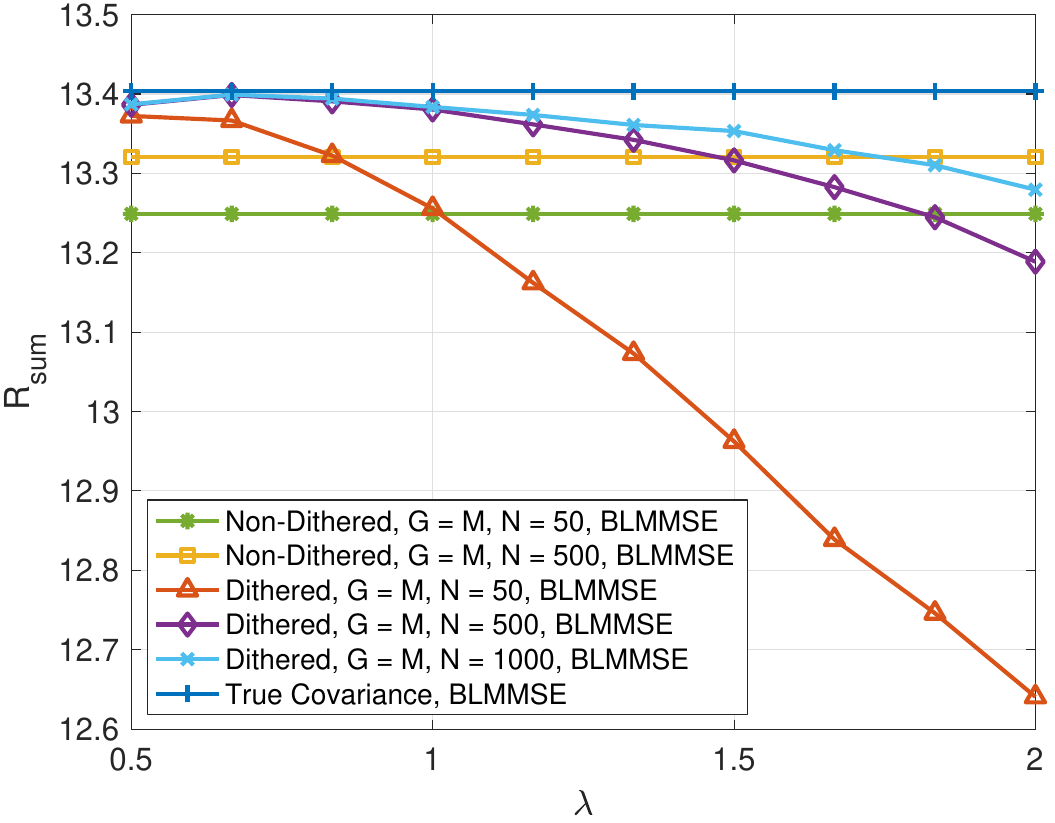}
        \caption{$R_{\rm sum}$ v.s. $\lambda$ via BLMMSE receiver}
        \label{fig:rate_lambda_BLMMSE}
        \end{subfigure}
        ~
        \begin{subfigure}[b]{0.45\textwidth}
        \includegraphics[width=\columnwidth]{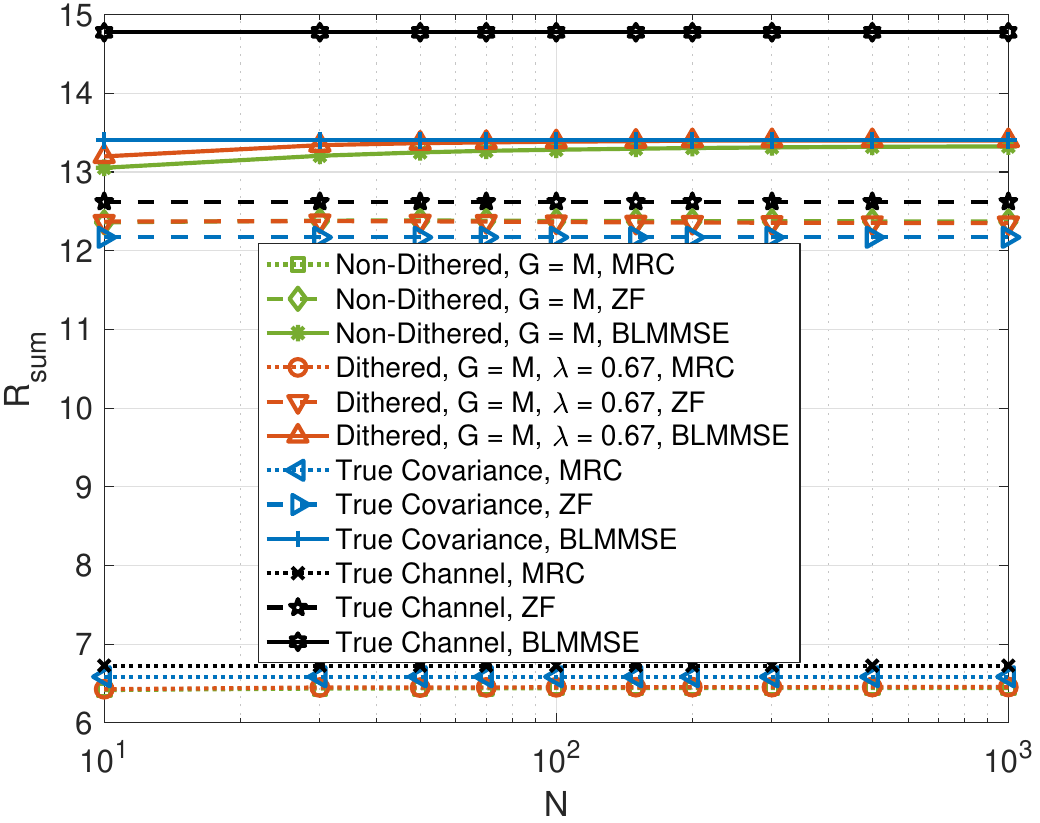}
        \caption{$R_{\rm sum}$ v.s. $N$ via MRC, ZF, BLMMSE receivers}
        \label{fig:rate_N_all}
        \end{subfigure}
        ~
        \begin{subfigure}[b]{0.45\textwidth}
        \includegraphics[width=1.025\columnwidth]{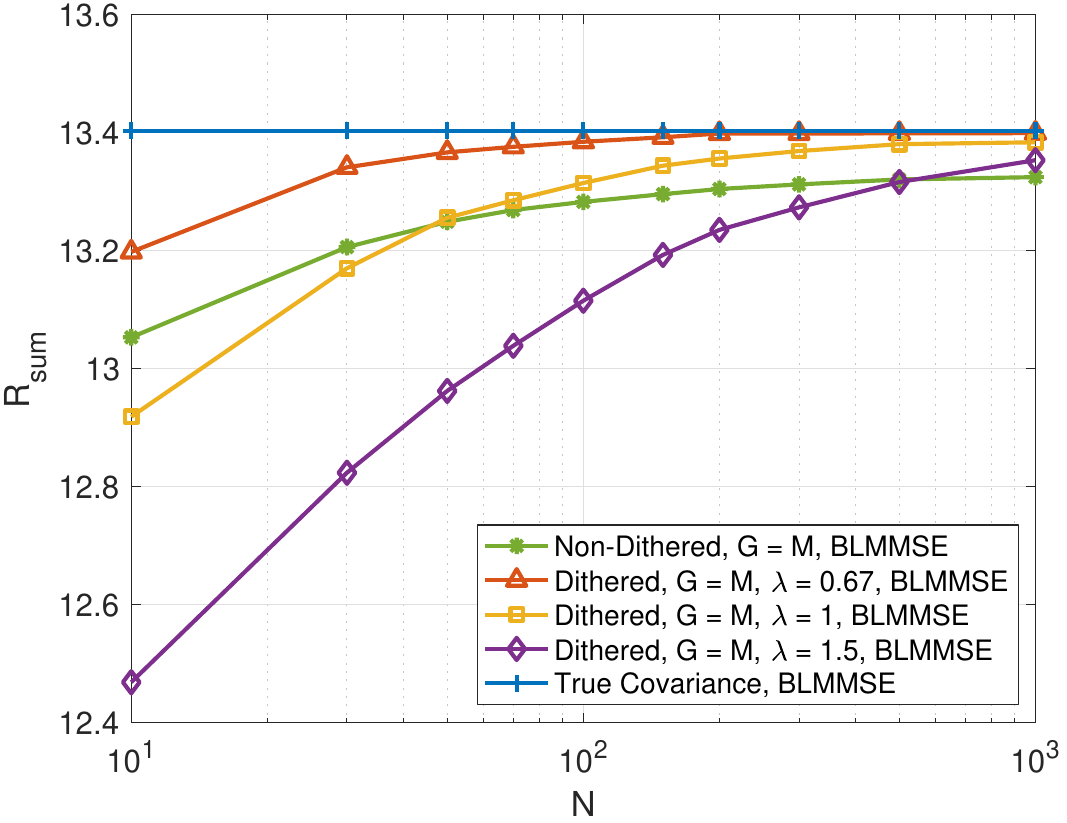}
        \caption{$R_{\rm sum}$ v.s. $N$ via BLMMSE receiver}
        \label{fig:rate_N_BLMMSE}
        \end{subfigure}
	\caption{$R_{\rm sum} $ of $K=4$ users via all tested receivers in (a), (c) and enlarged view of results via BLMMSE receiver in (b), (d). }
	    \label{fig:sum-rate}
     \vspace{-.3cm}
\end{figure*}

\subsection{Comparison to conventional systems with multi-bit ADC}
In order to further justify the applicability of the proposed system with dithered one-bit ADCs, we provide a comparison to conventional systems with multi-bit ADCs in terms of both sum rate and energy efficiency. We suppose that a conventional system equipped with multi-bit ADCs uses the quantized signal as if they are unquantized and applies basic empirical covariance estimation and plug-in MMSE channel estimation. We assume that the ADC type is flash that has high conversion speed and consists of $2^B-1$ comparators for a $B$ bits ADC \cite{walden1999analog}. The power consumption of a $B$ bits flash ADC is thus given by $P_{\rm ADC}(B) = (2^B-1) P_{\rm COMP}$, where $P_{\rm COMP}$ is the operational  power of a single comparator. For the quantization design, we assume that all ADCs have uniform quantization levels\footnote{Uniformly spaced quantization levels are used in most commercial ADCs. Moreover, for a Gaussian signal under relatively low resolutions (e.g., 3-5 bits) the optimized uniform levels only lead to marginal MSE difference compared to optimal arbitrary quantization levels, see Fig. 1 in \cite{max1960quantizing}.} and the quantization levels are identically optimized for Gaussian signal with zero-mean and unit variance according to Table 1 in \cite{desset2015validation}. The flash ADC has an automatic gain control (AGC) mechanism that scales the input signal to the dynamic range of the ADC. We consider a simple case that all ADCs scale the input by $\sqrt{\frac{2}{1+N_0}}$ according to the SNR in \eqref{eq:received_signal}.

The ergodic rate for multi-bit cases requires the analysis of quantization error, which is not a  straightforward extension of the one-bit result. Therefore, we assume that the received signal in the data transmission phase is not quantized and thus the ergodic sum rate can be simplified to the conventional form as 

\begin{equation}
    R^{\rm con}_{\rm sum} = \sum^K_{k=1}\bE\left[\log_2\left(1+\frac{|\wv^\herm_k \hv_k|^2}{\sum^K_{i\neq k}|\wv^\herm_k \hv_i|^2 + N_0\|\wv_k\|^2_2}\right)\right].
\end{equation}

The energy efficiency (EE) is defined as $EE(B) = \frac{R^{\rm con}_{\rm sum}(B)}{2 M P_{\rm ADC}(B)}$, where the factor $2$ in the denominator is due to separate quantization for in-phase and quadrature signal. Note that for each input signal, the proposed design uses the dithered one-bit ADC \textit{twice} for channel covariance estimation and \textit{once} for plug-in channel estimation. Therefore, the general power consumption of the proposed dithered one-bit ADC scheme can be given  as $2 M \omega P_{\rm ADC}(1)$, where $\omega \in [1,2]$ takes account for twice uses of ADC in covariance estimation. Since we only care about the trend of EE with $B$, we focus on the normalized EE denoted as $NEE(B) = EE(B) / EE(1)$.

In this simulation, we set $N=10^4$ to avoid inaccurate results due to a lack of samples. The results of the proposed scheme with dithered one-bit ADC are based on NNLS enhanced with $G=2M$ and a manual search of best $\lambda$. The resulting  $E_{\rm NF}$ and $E_{\rm NMSE}$ are depicted in Fig.~\ref{fig:E_multiBit}. Using ZF receiver, the resulting sum rate $R^{\rm con}_{\rm sum}$ and $NEE$ of the conventional system and proposed scheme under three values of $\omega$ are depicted in Fig.~\ref{fig:EE}. We observe from Fig.~\ref{fig:EE} that the system performance in terms of sum-rate $R^{\rm con}_{\rm sum}$ of the proposed dithered one-bit scheme is much better than the conventional one-bit system, while still worse than the conventional two-bit system due to the channel estimation with non-dithered one-bit signal. Nevertheless, the results of $NEE$ show that the proposed dithered one-bit scheme has a higher energy efficiency even with $\omega=2$ compared to the conventional two-bit system. In practice, the number of assigned resource blocks for channel covariance estimation should be significantly less than the number for instantaneous channel estimation, thus resulting in an $\omega$ very close to one and consequently much higher energy efficiency compared to the conventional system. Furthermore, while the conventional system under higher ADC resolutions can provide a significantly higher sum rate, the system energy efficiency decays also dramatically due to the fact that the power consumption of a flash ADC grows exponentially with its resolution.

\begin{figure*}[t]
\centering
        \begin{subfigure}[b]{0.45\textwidth}
        \includegraphics[width=1.03\columnwidth]{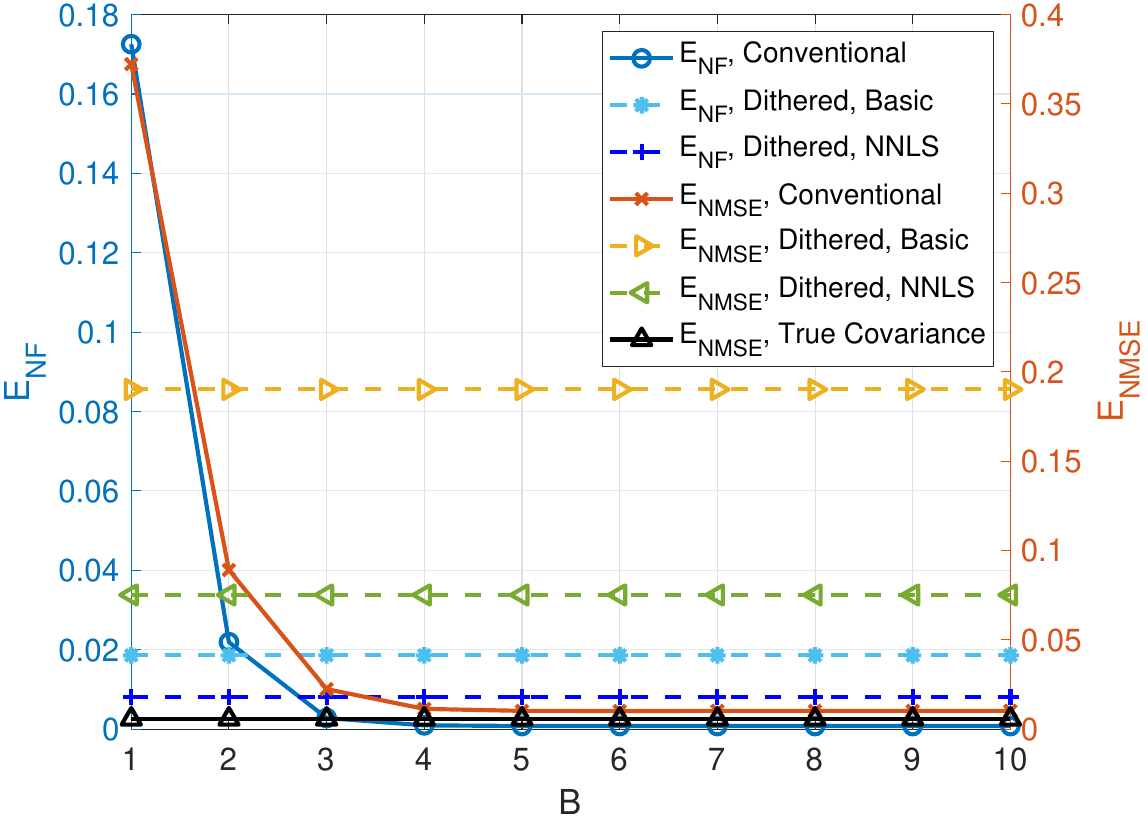}
        \caption{$E_{\rm NF}$ and $E_{\rm NMSE}$ v.s. $B$}
        \label{fig:E_multiBit}
        \end{subfigure}
        ~
        \begin{subfigure}[b]{0.45\textwidth}
        \includegraphics[width=\columnwidth]{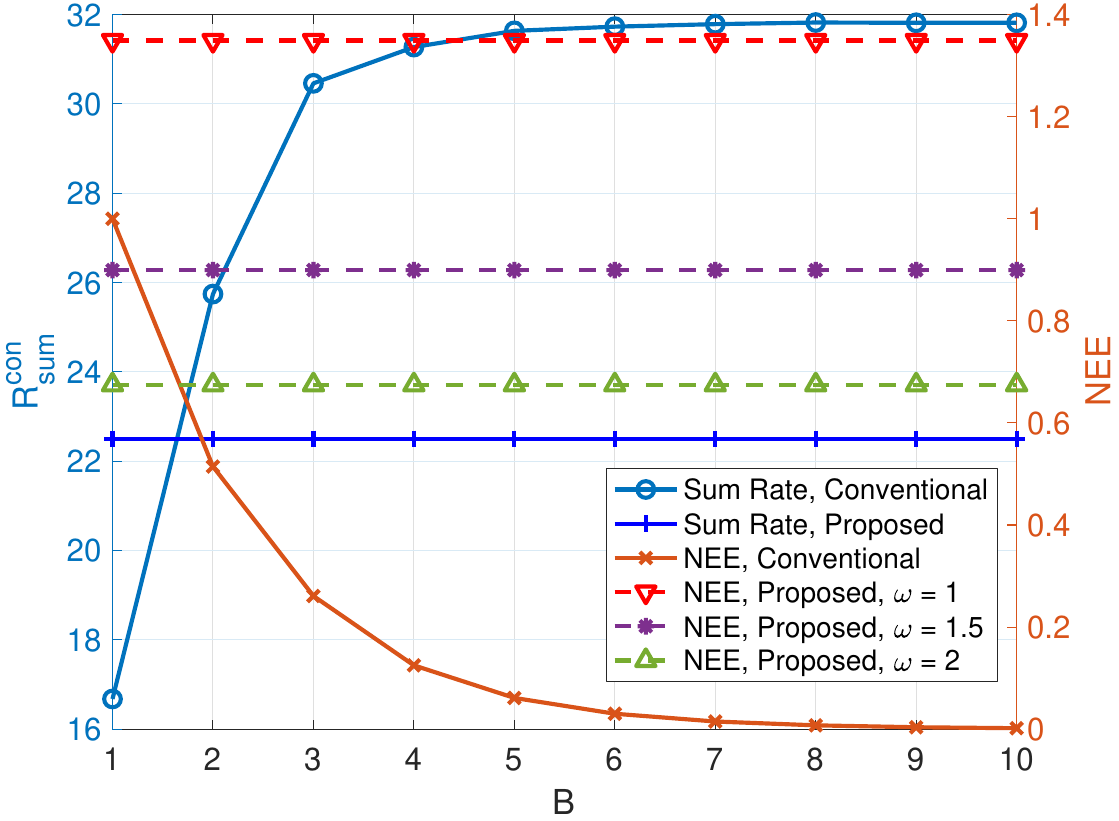}
        \caption{$R^{\rm con}_{\rm sum}$ and $NEE$ v.s. $B$ via ZF receiver}
        \label{fig:EE}
        \end{subfigure}
	\caption{Comparison to the conventional multi-bit system under $N=10^4$ and $G=2M$.}
	    \label{fig:multiBit}
     \vspace{-.3cm}
\end{figure*}

\section{Conclusion and discussion}
\label{sec:discussion} 

In this work, we proposed a plug-in channel estimator for massive MIMO systems with spatially non-stationary channels and one-bit quantizers. We analyzed the quantized signal via the Bussgang decomposition and analyzed the distortion produced by using an estimated, rather than the true, channel covariance in the construction of the BLMMSE estimator of the channel. To obtain an estimate of the covariance of the spatially non-stationary channel, we introduced a channel covariance estimator based on dithered quantized samples and theoretically analyzed its performance. We further enhanced this estimator using an APS-based NNLS solution. Our numerical results showed large performance gains of the proposed scheme with dithering in terms of both channel vector and covariance estimation. Finally, we proposed a BLMMSE-based receiver tailored to one-bit quantized data signals for multi-user data transmission phase and showed in numerical experiments that it outperforms the conventional MRC and ZF receivers in terms of the ergodic sum rate. 

There are two important aspects of our work that can be improved. First, we observed in the numerical experiments that the hyperparameter $\lambda$ of the dithering generation influences the channel estimation significantly. Even though this influence was observed to diminish as the sample size increases, it is still of significant interest to develop a data-driven method to optimally tune $\lambda$. Second, in the proposed APS-based channel covariance estimation scheme, the visibility of local clusters is assumed to be known at the BS. In practice, however, the visibility of local clusters is usually not easy to estimate. It is therefore desirable to develop a scheme without the assumption of known visibility of local clusters. We will investigate these two questions in future work. 

\appendices

\section{Proof of Lemma \ref{lem:Stability}}
\label{sec:AppendixStability}
Recall that
\begin{align}
    \widehat{\hv}^{\text{BLM}} 
    &= \Cm_{\hv\rv} \Cm_{\rv}^{-1} \rv 
    =  \Cm_{\hv} \Am^\herm \Cm_{\rv}^{-1} \rv = \noteJ{(\Cm_{\yv}-N_0 \id)} \Am^\herm \Cm_{\rv}^{-1} \rv  \nonumber \\ 
    \widehat\hv &= \noteJ{\left(\widehat{\Cm}_{\yv}-N_0 \id\right)} \widehat \Am^\H \widehat\Cm_{\rv}^{-1} \rv, \nonumber
\end{align}
where $\widehat \Am$ and $\widehat\Cm_{\rv}$ are defined like $\Am$ and $\Cm_{\rv}$ with $\Cm_{\yv}$ being replaced by $\widehat\Cm_{\yv}$. Let us abbreviate $\widehat{\hv}^{\text{BLM}} = \Mm \rv$ and $\widehat\hv = \widehat\Mm \rv$.
Consider $\alpha,\beta,\gamma>0$  such that $\min_{i\in [M]}|[\Cm_{\yv}]_{i,i}| \ge \alpha$,  $\lambda_{\min}(\Cm_{\rv}) \ge \gamma$, and
\begin{align}\label{eqn:nondiagBound}
    \left| \left[\diag(\Cm_{\yv})^{-\frac{1}{2}} \Cm_{\yv} \diag(\Cm_{\yv})^{-\frac{1}{2}}\right]_{i,j} \right| \le 1-\beta,
    \; \forall i \neq j. 
\end{align}
In particular, $\theta\leq \min\{\alpha,\beta,\gamma\}$. We start by writing
\begin{align}
    & \;\E \left[\pnorm{\widehat\hv - \widehat{\hv}^{\text{BLM}}}{2}^2\right] \nonumber \\
    = & \; \E\left[ \trace \round{ \widehat{\hv}^{\text{BLM}} \left(\widehat{\hv}^{\text{BLM}}\right)^\H - \widehat{\hv}^{\text{BLM}} \widehat\hv^\H - \widehat\hv \left(\widehat{\hv}^{\text{BLM}}\right)^\H + \widehat\hv \widehat\hv^\H }\right] \nonumber\\
    =& \;\trace \round{ \Mm \Cm_{\rv} \left(\Mm - \widehat\Mm\right)^\H + \widehat\Mm \Cm_{\rv} \left(\widehat\Mm - \Mm \right)^\H }\nonumber \\
    =& \;\inner{\Mm\Cm_{\rv} + \widehat\Mm \Cm_{\rv}, \left(\Mm - \widehat\Mm\right)}{\sfF} \nonumber\\
    =& \;2\inner{\Mm\Cm_{\rv}, \left(\Mm - \widehat\Mm\right)}{\sfF} - \inner{\left(\widehat\Mm-\Mm\right) \Cm_{\rv}, \left(\Mm - \widehat\Mm\right)}{\sfF}  \nonumber\\
    \leq& \;2\|\Mm\Cm_{\rv}\|_{\sf F} \ \|\Mm - \widehat\Mm\|_{\sf F}+ \|\Cm_{\rv}\| \ \|\Mm - \widehat\Mm\|_{\sf F}^2. \nonumber
\end{align}
\noteJ{Observe that $\| \Am \| \le \frac{1}{\sqrt{\alpha}}$ by assumption such that
\begin{equation}
    \|\Mm\Cm_{\rv}\|_{\sf F} =  \| \Cm_{\hv} \Am^\herm\|_{\sf F}\leq \| \Cm_{\hv} \|_{\sf F}   \|\Am^\herm\| \leq \frac{1}{\sqrt{\alpha}} \|\Cm_{\hv}\|_{\sf F}. \nonumber
\end{equation}
}
Moreover, using that $\|\arcsin(\Bm)\|\leq \frac{\pi}{2}\|\Bm\|$ if $\|\Bm\|_{\infty}\leq 1$ (see \cite[Supplementary Material, Eq.\ (4)]{dirksen2021covariance}), we find 
\begin{align}
   \pnorm{\Cm_{\rv}}{} 
   &\le \pnorm{\diag(\Cm_{\yv})^{-\frac{1}{2}} \Re(\Cm_{\yv}) \diag(\Cm_{\yv})^{-\frac{1}{2}}}{} \nonumber\\ 
   &\quad + \pnorm{\diag(\Cm_{\yv})^{-\frac{1}{2}} \Im(\Cm_{\yv}) \diag(\Cm_{\yv})^{-\frac{1}{2}}}{} \nonumber \\
   &\le 2 \pnorm{\diag(\Cm_{\yv})^{-\frac{1}{2}} }{} \pnorm{ \Cm_{\yv} }{} \pnorm{ \diag(\Cm_{\yv})^{-\frac{1}{2}}}{} 
   \leq \frac{2}{\alpha} \pnorm{\Cm_{\yv}}{}.\nonumber
\end{align}
We conclude that 
\begin{align}
    \E \left[\pnorm{\widehat\hv - \widehat{\hv}^{\text{BLM}}}{2}^2\right] &\lesssim \alpha^{-\frac{1}{2}} \|\Cm_{\hv}\|_{\sf F} \|\Mm - \widehat\Mm\|_{\sf F} \nonumber\\ 
    &\quad+ \alpha^{-1}\pnorm{\Cm_{\yv}}{}  \|\Mm - \widehat\Mm\|_{\sf F}^2. \nonumber
\end{align}
We will show that $\|\Mm - \widehat\Mm\|_{\sf F}\leq \kappa\leq \alpha^{\frac{1}{2}}\frac{\|\Cm_{\hv}\|_{\sf F}}{\pnorm{\Cm_{\yv}}{}}$
% \begin{equation}
%     \left\|\Mm - \widehat\Mm\right\|_{\sf F}\leq \kappa\leq \alpha^{\frac{1}{2}}\frac{\|\Cm_{\hv}\|_{\sf F}}{\pnorm{\Cm_{\yv}}{}},\nonumber
% \end{equation}
so that we obtain $\frac{\kappa}{\sqrt{\alpha} }\|\Cm_{\hv}\|_{\sf F}$ as a final estimate. We start by estimating
\begin{align}
\label{eqn:M-hatMsplit}
    &\left\| \Mm - \widehat\Mm \right\|_{\sf F} = \noteJ{\left\| (\Cm_{\yv} - N_0 \id) \Am^\H \Cm_{\rv}^{-1} - (\widehat\Cm_{\yv} - N_0 \id) \widehat\Am^\H \widehat\Cm_{\rv}^{-1} \right\|_{\sf F} } \nonumber\\
    &\le \| \Cm_{\yv} - \widehat\Cm_{\yv} \|_{\sf F} \| \Am \| \| \Cm_{\rv}^{-1} \| + \| \noteJ{\widehat\Cm_{\yv} - N_0 \id} \|_{\sf F}\| \Am - \widehat\Am \| \| \Cm_{\rv}^{-1} \| \nonumber\\
    &\quad+ \| \noteJ{\widehat\Cm_{\yv} - N_0 \id} \|_{} \| \widehat\Am \| \| \Cm_{\rv}^{-1} - \widehat\Cm_{\rv}^{-1} \|_{\sf F}.
\end{align}
The first term is clearly bounded by $\gamma^{-1}\alpha^{-\frac{1}{2}}\eps_{\sf F}$.
To estimate the second term, we note that $\| \widehat\Cm_{\yv} - N_0 \id \|_{\sf F} \le \| \Cm_{\hv} \|_{\sf F} + \| \widehat\Cm_{\yv} - \Cm_{\yv} \|_{\sf F} \le \| \Cm_{\hv} \|_{\sf F} + \varepsilon_{\sf F}$ and $\| \widehat\Cm_{\yv} - N_0 \id \|\leq \| \Cm_{\hv} \| + \varepsilon_{\sf F}$.
Furthermore, we use that 
\begin{equation}
\label{eqn:invDist}
    \Zm_1^{-1} - \Zm_2^{-1} = \Zm_1^{-1} (\Zm_2 - \Zm_1) \Zm_2^{-1}
\end{equation}
for any invertible $\Zm_1,\Zm_2$ of the same dimensions. This yields 
\begin{align}
    &\left\|\widehat\Am -\Am \right\| \nonumber\\ &= \frac{2}{\pi} \left\|\diag(\widehat\Cm_{\yv})^{-\frac{1}{2}}\left(\diag(\Cm_{\yv})^{\frac{1}{2}}-\diag(\widehat\Cm_{\yv})^{\frac{1}{2}}\right)\diag(\Cm_{\yv})^{-\frac{1}{2}}\right\|\nonumber \\
    & \leq \frac{2}{\pi} \left\|\diag(\widehat\Cm_{\yv})^{-\frac{1}{2}}\right\|  \left \|\diag(\Cm_{\yv})^{\frac{1}{2}}-\diag(\widehat\Cm_{\yv})^{\frac{1}{2}}\right\|  
    \left\|\diag(\Cm_{\yv})^{-\frac{1}{2}}\right\|.\nonumber
\end{align}
By assumption, we have
\begin{equation}
    \left\|\diag(\Cm_{\yv})^{-\frac{1}{2}}\right\| = \sqrt{\frac{1}{\min_i [\Cm_{\yv}]_{i,i}}} \leq \sqrt{\frac{1}{\alpha}}.\nonumber
\end{equation}
Moreover, since $\|\widehat\Cm_{\yv}-\Cm_{\yv}\|_{\infty}\leq \frac{\alpha}{2}$ \noteJ{by \eqref{eqn:epsinftyepsF}}, we find 
\begin{align}
    &\min_i [\widehat\Cm_{\yv}]_{i,i} \geq \min_i [\Cm_{\yv}]_{i,i} - \left\|\widehat\Cm_{\yv}-\Cm_{\yv}\right\|_{\infty} \geq \frac{\alpha}{2} \nonumber \\ 
    \text{and so} \qquad & \left\|\diag(\widehat\Cm_{\yv})^{-\frac{1}{2}}\right\| \leq \sqrt{\frac{2}{\alpha}}.\nonumber
\end{align}
\noteJ{Note that this also implies that $\| \widehat\Am \| \lesssim \frac{1}{\sqrt{\alpha}}$. }
Using that $|\sqrt{x}-\sqrt{y}|\leq \frac{|x-y|}{\sqrt{c}}$ if $x\geq c>0$, $y\geq 0$, we find
\begin{equation}
    \left\|\diag(\Cm_{\yv})^{\frac{1}{2}}-\diag(\widehat\Cm_{\yv})^{\frac{1}{2}}\right\| \leq \sqrt{\frac{1}{\alpha}}\left\|\Cm_{\yv}-\widehat\Cm_{\yv}\right\|_{\infty}= \sqrt{\frac{1}{\alpha}}\eps_{\infty}, \nonumber
\end{equation}
and hence $\| \Am - \widehat\Am \|\leq \frac{4}{\pi}\alpha^{-\frac{3}{2}}\eps_{\infty}$.

Let us finally estimate the last term on the right-hand side of \eqref{eqn:M-hatMsplit}. Write $c_{ij}=[\Cm_{\yv}]_{i,j}$, $\hat{c}_{ij}=[\widehat \Cm_{\yv}]_{i,j}$ and observe that  
\begin{align*}
    &\left|\frac{\hat{c}_{ij}}{\sqrt{\hat{c}_{ii}\hat{c}_{jj}}}-\frac{c_{ij}}{\sqrt{c_{ii}c_{jj}}}\right|   \leq \left|\frac{\hat{c}_{ij}-c_{ij}}{\sqrt{\hat{c}_{ii}\hat{c}_{jj}}}\right| +  \frac{|c_{ij}|}{\sqrt{\hat{c}_{ii}}} \left|\frac{1}{\sqrt{\hat{c}_{jj}}} - \frac{1}{\sqrt{c_{jj}}}\right|\\ &\quad+  \frac{|c_{ij}|}{\sqrt{\hat{c}_{jj}}} \left|\frac{1}{\sqrt{\hat{c}_{ii}}} - \frac{1}{\sqrt{c_{ii}}}\right| \\
    & \lesssim \frac{1}{\alpha} \left\|\widehat\Cm_{\yv} - \Cm_{\yv}\right\|_{\infty} + \|\Cm_{\yv}\|_{\infty} \frac{1}{\alpha^2} \left\|\widehat\Cm_{\yv} - \Cm_{\yv}\right\|_{\infty} \\
    & \lesssim \|\Cm_{\yv}\|_{\infty} \frac{1}{\alpha^2} \left\|\widehat\Cm_{\yv} - \Cm_{\yv}\right\|_{\infty}\leq \frac{\beta}{2}
\end{align*}
as $\eps_{\infty} \lesssim \beta\frac{\alpha^2}{\|\Cm_{\yv}\|_{\infty}}$.
By \eqref{eqn:nondiagBound}, this implies that 
\begin{equation}
\label{eqn:nondiagBoundHat}
    \left| \left[\diag(\widehat\Cm_{\yv})^{-\frac{1}{2}} \widehat\Cm_{\yv} \diag(\widehat\Cm_{\yv})^{-\frac{1}{2}}\right]_{i,j} \right| \le 1-\frac{\beta}{2}, \; \forall i \neq j.
\end{equation}
Clearly, for any $|\arcsin(x) - \arcsin(y)| \le L_\beta |x-y|, \forall x,y \in (-1+\tfrac{\beta}{2},1-\tfrac{\beta}{2})$, where
\begin{equation*}
    L_\beta = \sup_{0\leq z<1-\tfrac{\beta}{2}} \sqrt{\frac{1}{1-z^2}} = \sqrt{\frac{1}{1-(1-\frac{\beta}{2})^2}}\leq \sqrt{\frac{2}{\beta}}.
\end{equation*}
Together with \eqref{eqn:nondiagBound} and \eqref{eqn:nondiagBoundHat} this yields
\begin{align*}
    \left\| \Cm_{\rv} - \widehat\Cm_{\rv} \right\|_{\sf F} 
    &\lesssim \beta^{-\frac{1}{2}} \Big\| \diag(\widehat\Cm_{\yv})^{-\frac{1}{2}} \Re(\widehat\Cm_{\yv}) \diag(\widehat\Cm_{\yv})^{-\frac{1}{2}} \nonumber\\ 
    & \quad - \diag(\Cm_{\yv})^{-\frac{1}{2}} \Re(\Cm_{\yv}) \diag(\Cm_{\yv})^{-\frac{1}{2}} \Big\|_{\sf F} \nonumber \\
    & + \beta^{-\frac{1}{2}} \Big\| \diag(\widehat\Cm_{\yv})^{-\frac{1}{2}} \Im(\widehat\Cm_{\yv}) \diag(\widehat\Cm_{\yv})^{-\frac{1}{2}} \nonumber \\
    &\quad- \diag(\Cm_{\yv})^{-\frac{1}{2}} \Im(\Cm_{\yv}) \diag(\Cm_{\yv})^{-\frac{1}{2}} \Big\|_{\sf F}.
\end{align*}
Now observe that 
\begin{align*}
 & \Big\| \diag(\widehat\Cm_{\yv})^{-\frac{1}{2}} \Re(\widehat\Cm_{\yv}) \diag(\widehat\Cm_{\yv})^{-\frac{1}{2}} \nonumber\\
 &\quad- \diag(\Cm_{\yv})^{-\frac{1}{2}} \Re(\Cm_{\yv}) \diag(\Cm_{\yv})^{-\frac{1}{2}} \Big\|_{\sf F} \nonumber\\
     &\le \left\| \diag(\widehat\Cm_{\yv})^{-\frac{1}{2}} - \diag(\Cm_{\yv})^{-\frac{1}{2}} \right\|  \| \widehat\Cm_{\yv} \|_{\sf F} \left\| \diag(\widehat\Cm_{\yv})^{-\frac{1}{2}} \right\| \nonumber\\ 
    &\quad  + \left\| \diag(\Cm_{\yv})^{-\frac{1}{2}} \right\|  \| \widehat\Cm_{\yv} - \Cm_{\yv} \|_{\sf F} \left\| \diag(\widehat\Cm_{\yv})^{-\frac{1}{2}} \right\| \nonumber\\
    &\quad + \left\| \diag(\Cm_{\yv})^{-\frac{1}{2}} \right\|  \| \Cm_{\yv} \|_{\sf F}  \left\| \diag(\widehat\Cm_{\yv})^{-\frac{1}{2}} - \diag(\Cm_{\yv})^{-\frac{1}{2}} \right\| \nonumber\\
    & \lesssim \alpha^{-2}\| \Cm_{\yv} \|_{\sf F}\eps_{\infty} + (\alpha^{-2}\eps_{\infty}+\alpha^{-1})\eps_{\sf F}
\end{align*}
and analogously, 
\begin{align*}
    & \Big\| \diag(\widehat\Cm_{\yv})^{-\frac{1}{2}} \Im(\widehat\Cm_{\yv}) \diag(\widehat\Cm_{\yv})^{-\frac{1}{2}}\nonumber\\
    &\quad- \diag(\Cm_{\yv})^{-\frac{1}{2}} \Im(\Cm_{\yv}) \diag(\Cm_{\yv})^{-\frac{1}{2}} \Big\|_{\sf F} \nonumber\\
    &  \lesssim \alpha^{-2}\| \Cm_{\yv} \|_{\sf F}\eps_{\infty} + (\alpha^{-2}\eps_{\infty}+\alpha^{-1})\eps_{\sf F}.
\end{align*}
Hence, 
\begin{equation*}
    \left\| \Cm_{\rv} - \widehat\Cm_{\rv} \right\|_{\sf F}\lesssim \beta^{-\frac{1}{2}}\alpha^{-2}\| \Cm_{\yv} \|_{\sf F}\eps_{\infty} + \beta^{-\frac{1}{2}}(\alpha^{-2}\eps_{\infty}+\alpha^{-1})\eps_{\sf F}.
\end{equation*}
\corrS{By our assumptions on $\eps_{\infty}$ and $\eps_{\sf F}$, the right-hand side is bounded by $\gamma/2$} and hence the assumption $ \| \Cm_{\rv}^{-1} \| \le \gamma^{-1}$ implies that  $\| \widehat\Cm_{\rv}^{-1} \| \le 2\gamma^{-1}$.
Using now again \eqref{eqn:invDist} we finally arrive at
\begin{align*}
    &\left\| \Cm_{\rv}^{-1} - \widehat\Cm_{\rv}^{-1}\right\|_{\sf F} \\
    &\quad\lesssim \beta^{-\frac{1}{2}}\gamma^{-2} \left(\alpha^{-2}\| \Cm_{\yv} \|_{\sf F}\eps_{\infty} + (\alpha^{-2}\eps_{\infty}+\alpha^{-1})\eps_{\sf F}\right).
\end{align*} 
Combining all our estimates in \eqref{eqn:M-hatMsplit}, we find
\begin{align*}
   \left \| \Mm - \widehat\Mm \right\|_{\sf F}  
&\lesssim \gamma^{-1}\alpha^{-\frac{1}{2}}\eps_{\sf F} + \gamma^{-1} \noteJ{(\| \Cm_{\hv} \|_{\sf F} + \eps_{\sf F})} \alpha^{-\frac{3}{2}}\eps_{\infty} \\
& \quad + 
\alpha^{-\frac{1}{2}}\beta^{-\frac{1}{2}}\gamma^{-2}\noteJ{(\corrS{\| \Cm_{\hv} \|} + \eps_{\sf F})} \times\\
&\qquad\left(\alpha^{-2}\| \Cm_{\yv} \|_{\sf F}\eps_{\infty} + \left(\alpha^{-2}\eps_{\infty}+\alpha^{-1}\right)\eps_{\sf F}\right).
\end{align*}
Since $\eps_{\infty}\leq \min\left\{\frac{\eps_{\sf F}}{\|\Cm_{\yv}\|_{\sf F}},1\right\}$, we can estimate the right-hand side by $\kappa := c \, \alpha^{-\frac{5}{2}}\beta^{-\frac{1}{2}}\gamma^{-2}\max\{1,\|\Cm_{\hv}\|_{}\}\eps_{\sf F}$,
for an absolute constant $c>0$. Clearly, 
\begin{equation*}
    \kappa\leq \alpha^{-\frac{1}{2}}\frac{\|\Cm_{\hv}\|_{\sf F}}{\pnorm{\Cm_{\yv}}{}}
\end{equation*}
by our assumption on $\eps_{\sf F}$, which completes the proof.

\section{Proof of Theorem \ref{thm:FrobeniusDithered}}
\label{sec:AppendixFrobeniusBound}
In the proof of Theorem \ref{thm:FrobeniusDithered} we use the following lemmas. The first one bounds the bias of \eqref{eq:AsymmetricEstimator} in terms of $\lambda$. 
\begin{lemma} \label{lem:linftyBiasEst}
    Let $S > 0$. There exist constants $c_1,c_2>0$ depending only on $S$ such that the following holds. Let $\yv\in \mathbb{C}^M$ be a mean-zero random vector with covariance matrix $\E\left[ \yv\yv^\H \right] = \Cm_{\yv}$ and $S$-subgaussian coordinates. Let $\lambda>0$ and let $\rv^\Re = \sign(\Re(\yv) + \tauv^\Re)$, $\rv^\Im = \sign(\Im(\yv) + \tauv^\Im)$, $\widetilde{\rv}^\Re = \sign(\Re(\yv) + \widetilde{\tauv}^\Re)$, and $\widetilde{\rv}^\Im = \sign(\Im(\yv) + \widetilde{\tauv}^\Im)$, where $\tauv^\Re, \tauv^\Im,\widetilde{\tauv}^\Re,\widetilde{\tauv}^\Im$ are independent and uniformly distributed in $[-\lambda,\lambda]^M$ and independent of $\yv$. Abbreviate $\rv = \rv^\Re + \i \rv^\Im$ and $\widetilde{\rv} = \widetilde{\rv}^\Re + \i \widetilde{\rv}^\Im$. Then,
    \begin{align*}
       \left\| \lambda^2 \E\left[ \rv \widetilde{\rv}^\H \right] - \Cm_{\yv} \right\|_{\infty}
       \leq c_1 (\lambda^2+\|\Cm_{\yv}\|_{\infty})e^{\frac{-c_2 \lambda^2}{\|\Cm_{\yv}\|_{\infty}}}.
    \end{align*}
\end{lemma}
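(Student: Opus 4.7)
The plan is to evaluate the bias entrywise by exploiting two facts: (i) conditional on $\yv$, the dithered signs decouple because $\tauv^{\Re},\tauv^{\Im},\widetilde{\tauv}^{\Re},\widetilde{\tauv}^{\Im}$ are mutually independent and independent of $\yv$; (ii) for a uniform dither $\tau\sim\mathrm{Unif}[-\lambda,\lambda]$,
\begin{equation}
f_\lambda(x) := \lambda\,\E_\tau[\sign(x+\tau)] = \begin{cases} x, & |x|\le \lambda,\\ \lambda\,\sign(x), & |x|>\lambda,\end{cases}
\end{equation}
so that $f_\lambda(x)=x-g_\lambda(x)$ with $g_\lambda(x):=\sign(x)(|x|-\lambda)_+$, and in particular $|g_\lambda(x)|\le |x|\mathbf{1}_{\{|x|>\lambda\}}$.

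Expanding $\rv_j\overline{\widetilde{\rv}_k}=(\rv^{\Re}_j+\i\rv^{\Im}_j)(\widetilde{\rv}^{\Re}_k-\i\widetilde{\rv}^{\Im}_k)$ and applying the tower property together with (i)--(ii) gives
\begin{align}
\lambda^2\E[\rv_j\overline{\widetilde{\rv}_k}]
&= \E[f_\lambda(\Re y_j)f_\lambda(\Re y_k)] + \E[f_\lambda(\Im y_j)f_\lambda(\Im y_k)] \nonumber\\
&\quad + \i\,\E[f_\lambda(\Im y_j)f_\lambda(\Re y_k)] - \i\,\E[f_\lambda(\Re y_j)f_\lambda(\Im y_k)].
\end{align}
Substituting $f_\lambda=\mathrm{id}-g_\lambda$ and matching the $\mathrm{id}\cdot\mathrm{id}$ terms against the analogous expansion of $[\Cm_{\yv}]_{j,k}=\E[y_j\overline{y_k}]$, the identity parts cancel exactly. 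The bias $\lambda^2\E[\rv_j\overline{\widetilde{\rv}_k}]-[\Cm_{\yv}]_{j,k}$ is therefore a sum of at most a fixed number (eight) of terms of the form $\E[X\,g_\lambda(Y)]$ and $\E[g_\lambda(X)\,g_\lambda(Y)]$, where $X,Y\in\{\Re y_j,\Im y_j,\Re y_k,\Im y_k\}$.

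To bound each of these terms I would use Cauchy--Schwarz, reducing matters to controlling $\E[g_\lambda(Y)^2]\le \E[Y^2\mathbf{1}_{\{|Y|>\lambda\}}]$ for a real or imaginary component $Y$ of an entry of $\yv$. The $S$-subgaussian moment assumption \eqref{eq:SubgaussianEntries} upgrades (via Markov and optimization in $p$) to a Gaussian tail $\Pr(|Y|>t)\le 2\exp(-c t^2/\|\Cm_{\yv}\|_\infty)$ with $c$ depending only on $S$. Integrating via the layer-cake identity,
\begin{equation}
\E[Y^2\mathbf{1}_{\{|Y|>\lambda\}}] = \lambda^2\Pr(|Y|>\lambda) + \int_\lambda^\infty 2t\,\Pr(|Y|>t)\,dt \lesssim (\lambda^2+\|\Cm_{\yv}\|_\infty)\,e^{-c\lambda^2/\|\Cm_{\yv}\|_\infty}.
\end{equation}
Combining this with $\E[X^2]\lesssim \|\Cm_{\yv}\|_\infty$ yields $|\E[X g_\lambda(Y)]|\lesssim (\lambda^2+\|\Cm_{\yv}\|_\infty)\,e^{-c'\lambda^2/\|\Cm_{\yv}\|_\infty}$ after a final AM--GM step, and an analogous bound holds for $|\E[g_\lambda(X)g_\lambda(Y)]|$. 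Summing the finitely many contributions and taking a supremum over $j,k$ gives the asserted $\ell_\infty$-bound with constants depending only on $S$. The main (mild) obstacle is the careful accounting of real/imaginary cross terms and ensuring that the exponent in the tail integral propagates cleanly through Cauchy--Schwarz; no deeper probabilistic tools beyond subgaussian tail integration are needed.
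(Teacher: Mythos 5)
Your proposal is correct, and at the top level it follows the same route as the paper: both reduce the complex bias $\lambda^2\E[\rv\widetilde{\rv}^\herm]-\Cm_{\yv}$ entrywise to four real terms of the form $\lambda^2\E[\sign(U+\tau)\sign(V+\widetilde{\tau})]-\E[UV]$ by expanding $(\rv^\Re+\i\rv^\Im)(\widetilde{\rv}^\Re+\i\widetilde{\rv}^\Im)^\herm$ and matching against the real/imaginary decomposition of $\E[y_j\overline{y_k}]$. The difference is what happens next: the paper simply invokes the scalar bias bound from \cite[Lemma 17]{dirksen2021covariance}, whereas you prove that scalar statement from scratch via the conditional-independence of the two dither sets, the identity $\lambda\,\E_\tau[\sign(x+\tau)]=x$ for $|x|\le\lambda$, the truncation remainder $g_\lambda$, Cauchy--Schwarz, and subgaussian tail integration. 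This is exactly the standard argument behind the cited lemma, so your write-up is a self-contained version of the paper's proof rather than a genuinely different one; what it buys is independence from the external reference, at the cost of a page of routine tail estimates. All the key steps check out: the conditional decoupling is valid even on the diagonal $j=k$ precisely because $\tauv$ and $\widetilde{\tauv}$ are independent (this is the whole point of using two dithers), the layer-cake computation of $\E[Y^2\mathbf{1}_{\{|Y|>\lambda\}}]$ is correct, and the AM--GM step $\sqrt{\|\Cm_{\yv}\|_\infty(\lambda^2+\|\Cm_{\yv}\|_\infty)}\lesssim\lambda^2+\|\Cm_{\yv}\|_\infty$ closes the bound with a constant $c_2$ that only loses a factor of $2$ in the exponent through Cauchy--Schwarz. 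The only blemish is bookkeeping: each of the four real terms contributes three remainder terms ($-\E[A\,g_\lambda(B)]-\E[g_\lambda(A)B]+\E[g_\lambda(A)g_\lambda(B)]$), so there are twelve, not eight; this is immaterial since the count is absolute.
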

\begin{proof}
    \corrS{The proof of this lemma is a straightforward extension of \cite[Lemma 17]{dirksen2021covariance} to the complex domain. We include it for the convenience of the reader.}
    First note that
    \begin{align}
    \begin{split} \label{eq:BiasBoundProof}
        &\left\| \lambda^2 \E\left[ \rv \widetilde{\rv}^{\H} \right] - \Cm_{\yv} \right\|_{\infty} \\
        &= \big\| \lambda^2 \E\left[ (\rv^\Re + \i \rv^\Im) (\widetilde{\rv}^\Re + \i \widetilde{\rv}^\Im)^\H \right] \\&\quad- \E\left[ (\Re(\yv)+\i \Im(\yv)) (\Re(\yv)+\i \Im(\yv))^\H \right] \big\|_{\infty} \\
        &\le \left\| \lambda^2 \E\left[ \rv^\Re (\widetilde{\rv}^\Re)^\transp \right] - \E\left[ \Re(\yv) \Re(\yv)^\transp \right] \right\|_{\infty}\\
        &\quad+ \left\| \lambda^2 \E\left[ \rv^\Re (\widetilde{\rv}^\Im)^\transp \right] - \E\left[ \Re(\yv) \Im(\yv)^\transp \right] \right\|_{\infty} \\
        &\quad + \left\| \lambda^2 \E\left[ \rv^\Im (\widetilde{\rv}^\Re)^\transp \right] - \E\left[ \Im(\yv) \Re(\yv)^\transp \right] \right\|_{\infty} \\
        &\quad + \left\| \lambda^2 \E\left[ \rv^\Im (\widetilde{\rv}^\Im)^\transp \right] - \E\left[ \Im(\yv) \Im(\yv)^\transp \right] \right\|_{\infty}.
    \end{split}
    \end{align}
    Since $\yv$ has $S$-subgaussian coordinates, we get from \eqref{eq:SubgaussianEntries} that $\| [\Re(\yv)]_i \|_{\psi_2}, \| [\Im(\yv)]_i \|_{\psi_2} \le S \| \Cm_{\yv} \|_\infty^\frac{1}{2}$, for any $i \in [M]$, where $\| \cdot \|_{\psi_2}$ denotes the subgaussian norm. Applying \cite[Lemma 17]{dirksen2021covariance} for $U=[\Re(\yv)]_i$ and $V=[\Re(\yv)]_j$  yields
    \begin{align*}
        &\Big| \lambda^2 \E\big[ \sign\big([\Re(\yv)]_i + [\tauv^\Re]_i \big)  \sign\big([\Re(\yv)]_j + [\widetilde{\tauv}^\Re]_j\big) \big] \nonumber\\
        &- \E\big[ [\Re(\yv)]_i [\Re(\yv)]_j \big] \Big| \lesssim (\lambda^2 + S^2 \| \Cm_{\yv} \|_\infty) e^{-c\frac{\lambda^2}{S^2 \| \Cm_{\yv} \|_\infty}}.
    \end{align*}
    Since this holds for any choice of $i,j \in [M]$, the first term on the right-hand side of \eqref{eq:BiasBoundProof} satisfies the claimed bound. The three other terms can be treated in the same way such that our claim follows.
\end{proof}
The second lemma is a simple concentration inequality that applies to dithered samples of real distributions.
\begin{lemma} \label{lem:MaxBoundDithered}
	There exist absolute constants $c_1,c_2>0$ such that the following holds. 
	\noteJ{Let $\yv,\widetilde\yv \in \mathbb R^M$ be random vectors}. Let $\yv_1,...,\yv_N \overset{\mathrm{i.i.d.}}{\sim} \yv$, \noteJ{let $\widetilde\yv_1,...,\widetilde\yv_N \overset{\mathrm{i.i.d.}}{\sim} \widetilde\yv$}, and let $\tauv_1,\dots,\tauv_N,\widetilde\tauv_1,\dots,\widetilde\tauv_N$ be independent and uniformly distributed in $[-\lambda,\lambda]$, for $\lambda > 0$. Define $\rv_k = \sign(\yv_k + \tauv_k)$ and \noteJ{$\widetilde\rv_k = \sign(\widetilde\yv_k + \widetilde\tauv_k)$}. If $N\geq c_1\log(M)$, then

	\begin{align*}
	    &\text{Pr}\left[ \Big\| \frac{\lambda^2}{N} \sum_{k=1}^N \rv_k\widetilde\rv_k^\transp - \E\left[ \rv_k\widetilde\rv_k^\transp\right] \Big \|_\infty \ge \sqrt{\lambda^4 (c_1 \frac{\log(M)}{N}+t } )\right ] \\
     &\qquad \leq 2 e^{-c_2 N t}.
	\end{align*}
 
    In particular, the claim holds if $\yv = \widetilde \yv$ and $\yv_i = \widetilde\yv_i, \forall i\in[N]$.
\end{lemma}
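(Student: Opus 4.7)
The plan is to exploit the almost-sure boundedness $(\rv_k)_i(\widetilde\rv_k)_j \in \{-1,+1\}$ and apply a standard entrywise Hoeffding bound combined with a union bound over the $M^2$ coordinates. No subgaussian or heavy-tail analysis is required here, since every entry of the scaled outer product $\lambda^2 \rv_k\widetilde\rv_k^\transp$ lies deterministically in $[-\lambda^2,\lambda^2]$. The only real task is bookkeeping: aligning the constants so that the $2\log M$ cost of the union bound is absorbed into the $\sqrt{c_1\log(M)/N}$ piece of the stated threshold.

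Step 1 (entrywise Hoeffding): fix indices $(i,j)\in[M]^2$ and set $X_k := \lambda^2(\rv_k)_i(\widetilde\rv_k)_j$ for $k\in[N]$. The $X_k$ are i.i.d.\ and satisfy $X_k\in[-\lambda^2,\lambda^2]$ almost surely, so Hoeffding's inequality gives, for any $u>0$,
\[\text{Pr}\!\left[\left|\tfrac{1}{N}\sum_{k=1}^N X_k - \E[X_1]\right|\geq u\right] \leq 2\exp\!\left(-\tfrac{Nu^2}{2\lambda^4}\right).\]

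Step 2 (union bound and calibration): union-bounding over the $M^2$ pairs $(i,j)$ yields
\[\text{Pr}\!\left[\left\|\tfrac{\lambda^2}{N}\sum_{k=1}^N \rv_k\widetilde\rv_k^\transp - \E[\rv_1\widetilde\rv_1^\transp]\right\|_{\infty}\geq u\right]\leq 2 M^2\exp\!\left(-\tfrac{Nu^2}{2\lambda^4}\right).\]
Substituting $u^2 = \lambda^4\bigl(c_1\log(M)/N + t\bigr)$ turns the exponent into $2\log M - (c_1\log M + Nt)/2$. Choosing $c_1\geq 4$ makes $(2-c_1/2)\log M \leq 0$, and then $c_2 := 1/2$ produces the target bound $2e^{-c_2 Nt}$. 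The hypothesis $N\geq c_1\log M$ is not strictly required for the exponential tail itself; it simply ensures that the $c_1\log(M)/N$ summand inside the square root is of order one or smaller, so that the stated threshold is meaningful rather than trivial.

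For the ``in particular'' clause in which $\yv=\widetilde\yv$ and $\yv_k=\widetilde\yv_k$ for all $k$, note that the argument above uses only (i) the deterministic boundedness of each entry of $\rv_k\widetilde\rv_k^\transp$ by one and (ii) the independence of the pairs $(\rv_k,\widetilde\rv_k)$ across $k\in[N]$. Property (i) is automatic. For property (ii), even with shared underlying data the dithers $\tauv_k$ and $\widetilde\tauv_k$ are independent across $k$, so the pairs $(\rv_k,\widetilde\rv_k)$ remain mutually independent across $k$; crucially, no independence between $\rv_k$ and $\widetilde\rv_k$ \emph{within} a single index $k$ is ever invoked. Hence the identical argument delivers the claim in this case as well, and there is no genuine obstacle beyond the constant-tracking step.
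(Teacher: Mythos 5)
Your proof is correct, and it follows the same overall template as the paper's: fix an entry $(i,j)$, apply a concentration inequality for bounded i.i.d.\ variables, and union-bound over the $M^2$ entries. The one substantive difference is the choice of concentration inequality. The paper invokes Bernstein's inequality for bounded random variables with the variance proxy $\sigma_{i,j}^2 \le \lambda^4 N$; because Bernstein's exponent has a sub-exponential (linear-in-$u$) branch, the paper must separately dispose of the regime $t \ge 4$ (where the bound is trivial since $|R_{i,j}^k - \E[R_{i,j}^k]|\le 2$) and restrict to $u \le 8\lambda^2$, which is where the hypothesis $N \ge c_1\log(M)$ is actually used. Your Hoeffding route sidesteps all of this: since the variance bound the paper uses is just the trivial one $\E[(R_{i,j}^k)^2]\le 1$, Bernstein buys nothing over Hoeffding here, and Hoeffding's purely subgaussian tail holds for all $u>0$ with no case split. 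This makes your argument slightly cleaner and, as you correctly observe, shows the hypothesis $N\ge c_1\log(M)$ is not needed for the tail bound itself. Your handling of the ``in particular'' clause is also right: only independence of the pairs $(\rv_k,\widetilde\rv_k)$ across $k$ is used, never independence within a pair, which matches the paper's intent.
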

\begin{proof}
Write $R_{i,j}^k = [\rv_k]_i  [\widetilde\rv_k]_j$ for $i,j \in [M]$.
	Since $|R_{i,j}^k - \E[R_{i,j}^k]| \le 2, \forall i,j,k$, the bound is trivial for $t\geq 4$. By Bernstein's inequality for bounded random variables (see, e.g., \cite[Theorem 2.8.4]{vershynin2018high}), we find for any $u\leq 8\lambda^2$
	\begin{align*}
	    &\Pr{\frac{1}{N} \left| \sum_{k=1}^N \lambda^2 \round{R_{i,j}^k - \E[R_{i,j}^k] } \right| \ge u} 
	    \le 2e^{-c\min\left\{ \frac{N^2 u^2}{\sigma_{i,j}^2}, \frac{N u}{2\lambda^2}\right\}}\nonumber
	    \\
	    &\qquad \le 2e^{-c N \min\left\{ \frac{u^2}{\lambda^4}, \frac{ u}{ \lambda^2}\right\}}
            \le 2e^{-c_2 N \frac{u^2}{\lambda^4}},
	\end{align*}
	\begin{align*} 
 % \label{eq:sigmaDithered}
	   \text{as}\quad  \sigma_{i,j}^2 & :=
        \sum_{k=1}^N \lambda^4 \E\left[\left(R_{i,j}^k - \E[R_{i,j}^k]\right)^2\right]\\
       &\;= \sum_{k=1}^N \lambda^4 \left( \E\left[(R_{i,j}^k)^2\right] - \left(\E[R_{i,j}^k]\right)^2 \right)
       \le \lambda^4 N.
    \end{align*}
    Hence, for any given $t<4$ we can set $u=\sqrt{\lambda^4 ( c_1 \frac{\log(M)}{N} + t )}$ and note that $u\leq 8\lambda^2$ as $N\geq c_1\log(M)$. By applying the union bound over all $M^2$ entries we obtain the result. 
\end{proof}

\begin{proof}[Proof of Theorem \ref{thm:FrobeniusDithered}]
By the triangle inequality, we have
\begin{equation}
\label{eqn:TIinfty}
    \pnorm{ \widehat{\Cm}^{\mathrm{d}}_{\yv} -  \Cm_{\yv}}{\infty}\leq \pnorm{ \widehat{\Cm}^{\mathrm{d}}_{\yv} -  \E[\widehat{\Cm}^{\mathrm{d}}_{\yv}]}{\infty} + \pnorm{\E[\widehat{\Cm}^{\mathrm{d}}_{\yv}] - \Cm_{\yv}}{\infty}
\end{equation}
% Write $\rv = \rv^\Re + \i \rv^\Im$ and $\widetilde{\rv} = \widetilde{\rv}^\Re + \i \widetilde{\rv}^\Im$, where $\rv^\Re = \sign(\Re(\yv) + \tauv^\Re)$, $\rv^\Im = \sign(\Im(\yv) + \tauv^\Im)$, $\widetilde{\rv}^\Re = \sign(\Re(\yv) + \widetilde{\tauv}^\Re)$, and $\widetilde{\rv}^\Im = \sign(\Im(\yv) + \widetilde{\tauv}^\Im)$. 
Using notations $\rv, \rv^\Re, \rv^\Im, \widetilde{\rv}, \widetilde{\rv}^\Re, \widetilde{\rv}^\Im$ defined in Lemma \ref{lem:linftyBiasEst}, by Lemma \ref{lem:linftyBiasEst}, we have
\begin{align*}
    \pnorm{\E\left[\widehat{\Cm}^{\mathrm{d}}_{\yv}\right] - \Cm_{\yv}}{\infty} &= \left\| \lambda^2 \E\left[ \rv \widetilde{\rv}^\H \right] - \Cm_{\yv} \right\|_{\infty} \\
    &\lesssim \left(\lambda^2+\|\Cm_{\yv}\|_{\infty}\right)^2 e^{\frac{-c_2 \lambda^2}{\|\Cm_{\yv}\|_{\infty}}} \lesssim \frac{\lambda^2}{\sqrt{N}},\nonumber
\end{align*}
where we have used that $\lambda^2 \gtrsim \log(N) \| \Cm_{\yv} \|_\infty$. To estimate the first term in \eqref{eqn:TIinfty}, observe that
    \begin{align*}
         &\left\| \widehat{\Cm}^{\mathrm{d}}_{\yv} - \E\left[\widehat{\Cm}^{\mathrm{d}}_{\yv}\right] \right\|_\infty
         = \left\| \widetilde{\Cm}^{\mathrm{d}} - \E\left[\widetilde{\Cm}^{\mathrm{d}}\right] \right\|_\infty \\
        & = \Bigg\| \left(\frac{\lambda^2}{N}\sum_{k=1}^N (\rv^\Re_k + \i \rv^\Im_k) (\widetilde{\rv}^\Re_k + \i \widetilde{\rv}^\Im_k )^\H \right) \nonumber\\
        &\quad- \E \left[ \frac{\lambda^2}{N}\sum_{k=1}^N (\rv^\Re_k + \i \rv^\Im_k) (\widetilde{\rv}^\Re_k + \i \widetilde{\rv}^\Im_k )^\H \right] \Bigg\|_\infty  \nonumber\\
        & \le \left\| \left(\frac{\lambda^2}{N}\sum_{k=1}^N \rv^\Re_k (\widetilde{\rv}^\Re_k)^\transp \right) - \E \left[ \frac{\lambda^2}{N}\sum_{k=1}^N \rv^\Re_k (\widetilde{\rv}^\Re_k)^\transp \right] \right\|_\infty\nonumber\\
        &\quad+ \left\| \left(\frac{\lambda^2}{N}\sum_{k=1}^N \rv^\Re_k (\widetilde{\rv}^\Im_k)^\transp \right) - \E \left[ \frac{\lambda^2}{N}\sum_{k=1}^N \rv^\Re_k (\widetilde{\rv}^\Im_k)^\transp \right] \right\|_\infty \nonumber\\
        & \quad+ \left\| \left(\frac{\lambda^2}{N}\sum_{k=1}^N \rv^\Im_k (\widetilde{\rv}^\Re_k)^\transp \right) - \E \left[ \frac{\lambda^2}{N}\sum_{k=1}^N \rv^\Im_k (\widetilde{\rv}^\Re_k)^\transp \right] \right\|_\infty \nonumber\\
        &\quad+ \left\| \left(\frac{\lambda^2}{N}\sum_{k=1}^N \rv^\Im_k (\widetilde{\rv}^\Im_k)^\transp \right) - \E \left[ \frac{\lambda^2}{N}\sum_{k=1}^N \rv^\Im_k (\widetilde{\rv}^\Im_k)^\transp \right] \right\|_\infty \nonumber
    \end{align*}
    Using Lemma \ref{lem:MaxBoundDithered} for each of the four terms and applying a union bound, we get
    \begin{align*} 
        \Pr{  \left\| \widehat{\Cm}^{\mathrm{d}}_{\yv} - \E\left[\widehat{\Cm}^{\mathrm{d}}_{\yv}\right] \right\|_\infty \gtrsim \lambda^2 \sqrt{\frac{\log(M) + t}{N}} } \le 8 e^{-cNt},
    \end{align*}
    and thus the first statement of Theorem~\ref{thm:FrobeniusDithered}. The second statement follows trivially using
    \begin{equation*}
        \pnorm{ \widehat{\Cm}^{\mathrm{d}}_{\yv} -  \Cm_{\yv}}{\sf F} \leq M\pnorm{ \widehat{\Cm}^{\mathrm{d}}_{\yv} -  \Cm_{\yv}}{\infty}.
    \end{equation*}

\end{proof}

\section{Proof of Theorem \ref{thm:MainResult}}
\label{sec:AppendixMainResult}
By Theorem~\ref{thm:FrobeniusDithered}, we can apply Lemma~\ref{lem:Stability} with 
\begin{align*}
    \eps_{\infty} &\sim \lambda^2 \sqrt{\frac{\log(M) + t}{N}}, \\ \eps_{\sf F} &\sim M\max\{1,\|\Cm_{\yv}\|_{\infty}\} \lambda^2 \sqrt{\frac{\log(M) + t}{N}}.
\end{align*}
Note that we do not pick the ``minimal setting'' for $\eps_{\sf F}$ suggested by Theorem~\ref{thm:FrobeniusDithered}: the additional factor $\max\{1,\|\Cm_{\yv}\|_{\infty}\}$ ensures that $\eps_{\infty}\lesssim \eps_{\sf F}/\|\Cm_{\yv}\|_{\sf F}$ holds (as required in \eqref{eqn:epsinftyepsF}). It remains to note that all other conditions on $\eps_{\infty}$ and $\eps_{\sf F}$ in Lemma~\ref{lem:Stability} under the stated assumption on $N$. This completes the proof. 

	{\small
		\bibliographystyle{IEEEtran}
		\bibliography{references}

% Generated by IEEEtran.bst, version: 1.14 (2015/08/26)
\begin{thebibliography}{10}
\providecommand{\url}[1]{#1}
\csname url@samestyle\endcsname
\providecommand{\newblock}{\relax}
\providecommand{\bibinfo}[2]{#2}
\providecommand{\BIBentrySTDinterwordspacing}{\spaceskip=0pt\relax}
\providecommand{\BIBentryALTinterwordstretchfactor}{4}
\providecommand{\BIBentryALTinterwordspacing}{\spaceskip=\fontdimen2\font plus
\BIBentryALTinterwordstretchfactor\fontdimen3\font minus
  \fontdimen4\font\relax}
\providecommand{\BIBforeignlanguage}[2]{{%
\expandafter\ifx\csname l@#1\endcsname\relax
\typeout{** WARNING: IEEEtran.bst: No hyphenation pattern has been}%
\typeout{** loaded for the language `#1'. Using the pattern for}%
\typeout{** the default language instead.}%
\else
\language=\csname l@#1\endcsname
\fi
#2}}
\providecommand{\BIBdecl}{\relax}
\BIBdecl

\bibitem{lu2014overview}
L.~Lu, G.~Y. Li, A.~L. Swindlehurst, A.~Ashikhmin, and R.~Zhang, ``An overview
  of massive {MIMO}: {B}enefits and challenges,'' \emph{IEEE Journal of
  Selected Topics in Signal Processing}, vol.~8, no.~5, pp. 742--758, 2014.

\bibitem{bjornson2019massive}
E.~Bjornson, L.~Van~der Perre, S.~Buzzi, and E.~G. Larsson, ``Massive {MIMO} in
  sub-6 {GH}z and mm{W}ave: {P}hysical, practical, and use-case differences,''
  \emph{IEEE Wireless Communications}, vol.~26, no.~2, 2019.

\bibitem{li201712}
Y.~Li, Y.~Luo, G.~Yang \emph{et~al.}, ``12-port {5G} massive {MIMO} antenna
  array in sub-{6GHz} mobile handset for {LTE} bands 42/43/46 applications,''
  \emph{IEEE Access}, vol.~6, pp. 344--354, 2017.

\bibitem{ngo2013energy}
H.~Q. Ngo, E.~G. Larsson, and T.~L. Marzetta, ``Energy and spectral efficiency
  of very large multiuser {MIMO} systems,'' \emph{IEEE Transactions on
  Communications}, vol.~61, no.~4, pp. 1436--1449, 2013.

\bibitem{marzetta2010noncooperative}
T.~L. Marzetta, ``Noncooperative cellular wireless with unlimited numbers of
  base station antennas,'' \emph{IEEE Transactions on Wireless Communications},
  vol.~9, no.~11, pp. 3590--3600, 2010.

\bibitem{haghighatshoar2016massive}
S.~Haghighatshoar and G.~Caire, ``Massive {MIMO} channel subspace estimation
  from low-dimensional projections,'' \emph{IEEE Transactions on Signal
  Processing}, vol.~65, no.~2, pp. 303--318, 2016.

\bibitem{haghighatshoar2017massive}
------, ``Massive {MIMO} pilot decontamination and channel interpolation via
  wideband sparse channel estimation,'' \emph{IEEE Transactions on Wireless
  Communications}, vol.~16, no.~12, pp. 8316--8332, 2017.

\bibitem{haghighatshoar2018low}
------, ``Low-complexity massive {MIMO} subspace estimation and tracking from
  low-dimensional projections,'' \emph{IEEE Transactions on Signal Processing},
  vol.~66, no.~7, pp. 1832--1844, 2018.

\bibitem{khalilsarai2020structured}
M.~B. Khalilsarai, T.~Yang, S.~Haghighatshoar, and G.~Caire, ``Structured
  channel covariance estimation from limited samples in {M}assive {MIMO},'' in
  \emph{IEEE International Conference on Communications (ICC)}, 2020.

\bibitem{payami2012channel}
S.~Payami and F.~Tufvesson, ``Channel measurements and analysis for very large
  array systems at 2.6 ghz,'' in \emph{2012 6th European Conference on Antennas
  and Propagation (EUCAP)}.\hskip 1em plus 0.5em minus 0.4em\relax IEEE, 2012,
  pp. 433--437.

\bibitem{gao2013massive}
X.~Gao, F.~Tufvesson, and O.~Edfors, ``Massive {MIMO} channels—measurements
  and models,'' in \emph{2013 Asilomar Conference on Signals, Systems and
  Computers}.\hskip 1em plus 0.5em minus 0.4em\relax IEEE, 2013, pp. 280--284.

\bibitem{gao2015massive}
X.~Gao, O.~Edfors, F.~Rusek, and F.~Tufvesson, ``Massive {MIMO} performance
  evaluation based on measured propagation data,'' \emph{IEEE Transactions on
  Wireless Communications}, vol.~14, no.~7, pp. 3899--3911, 2015.

\bibitem{wu2014non}
S.~Wu, C.-X. Wang, M.~M. Alwakeel, Y.~He \emph{et~al.}, ``A non-stationary
  3-{D} wideband twin-cluster model for 5{G} massive {MIMO} channels,''
  \emph{IEEE Journal on Selected Areas in Communications}, vol.~32, no.~6,
  2014.

\bibitem{selvan2017fraunhofer}
K.~T. Selvan and R.~Janaswamy, ``Fraunhofer and {F}resnel distances: Unified
  derivation for aperture antennas.'' \emph{IEEE Antennas and Propagation
  Magazine}, vol.~59, no.~4, pp. 12--15, 2017.

\bibitem{bjornson2021primer}
E.~Bj{\"o}rnson, {\"O}.~T. Demir, and L.~Sanguinetti, ``A primer on near-field
  beamforming for arrays and reconfigurable intelligent surfaces,'' in
  \emph{2021 55th Asilomar Conference on Signals, Systems, and
  Computers}.\hskip 1em plus 0.5em minus 0.4em\relax IEEE, 2021, pp. 105--112.

\bibitem{bjornson2020power}
E.~Bj{\"o}rnson and L.~Sanguinetti, ``Power scaling laws and near-field
  behaviors of massive {MIMO} and intelligent reflecting surfaces,'' \emph{IEEE
  Open Journal of the Communications Society}, vol.~1, pp. 1306--1324, 2020.

\bibitem{medbo2014channel}
J.~Medbo, K.~B{\"o}rner, K.~Haneda, V.~Hovinen, T.~Imai, J.~J{\"a}rvelainen,
  T.~J{\"a}ms{\"a}, A.~Karttunen, K.~Kusume, J.~Kyr{\"o}l{\"a}inen
  \emph{et~al.}, ``Channel modelling for the fifth generation mobile
  communications,'' in \emph{The 8th European Conference on Antennas and
  Propagation (EuCAP 2014)}.\hskip 1em plus 0.5em minus 0.4em\relax IEEE, 2014,
  pp. 219--223.

\bibitem{bjornson2019massivenext}
E.~Bj{\"o}rnson, L.~Sanguinetti, H.~Wymeersch, J.~Hoydis, and T.~L. Marzetta,
  ``Massive {MIMO} is a reality—what is next?: Five promising research
  directions for antenna arrays,'' \emph{Digital Signal Processing}, vol.~94,
  pp. 3--20, 2019.

\bibitem{de2020non}
E.~De~Carvalho, A.~Ali, A.~Amiri, M.~Angjelichinoski, and R.~W. Heath,
  ``Non-stationarities in extra-large-scale massive {MIMO},'' \emph{IEEE
  Wireless Communications}, vol.~27, no.~4, pp. 74--80, 2020.

\bibitem{walden1999analog}
R.~H. Walden, ``Analog-to-digital converter survey and analysis,'' \emph{IEEE
  Journal on Selected Areas in Communications}, vol.~17, no.~4, pp. 539--550,
  1999.

\bibitem{murmann2015race}
B.~Murmann, ``The race for the extra decibel: A brief review of current {ADC}
  performance trajectories,'' \emph{IEEE Solid-State Circuits Magazine},
  vol.~7, no.~3, pp. 58--66, 2015.

\bibitem{mezghani2008analysis}
A.~Mezghani and J.~A. Nossek, ``Analysis of {R}ayleigh-fading channels with
  1-bit quantized output,'' in \emph{2008 IEEE International Symposium on
  Information Theory}.\hskip 1em plus 0.5em minus 0.4em\relax IEEE, 2008, pp.
  260--264.

\bibitem{nossek2006capacity}
J.~A. Nossek and M.~T. Ivrla{\v{c}}, ``Capacity and coding for quantized {MIMO}
  systems,'' in \emph{Proceedings of the 2006 International Conference on
  Wireless Communications and Mobile Computing}, 2006, pp. 1387--1392.

\bibitem{singh2009limits}
J.~Singh, O.~Dabeer, and U.~Madhow, ``On the limits of communication with
  low-precision analog-to-digital conversion at the receiver,'' \emph{IEEE
  Transactions on Communications}, vol.~57, no.~12, pp. 3629--3639, 2009.

\bibitem{cheng2019adaptive}
X.~Cheng, K.~Xu, J.~Sun, and S.~Li, ``Adaptive grouping sparse {B}ayesian
  learning for channel estimation in non-stationary uplink massive {MIMO}
  systems,'' \emph{IEEE Transactions on Wireless Communications}, vol.~18,
  no.~8, pp. 4184--4198, 2019.

\bibitem{han2020deep}
Y.~Han, M.~Li, S.~Jin, C.-K. Wen, and X.~Ma, ``Deep learning-based {FDD}
  non-stationary massive {MIMO} downlink channel reconstruction,'' \emph{IEEE
  Journal on Selected Areas in Communications}, vol.~38, no.~9, 2020.

\bibitem{li2017channel}
Y.~Li, C.~Tao, G.~Seco-Granados, A.~Mezghani, A.~L. Swindlehurst, and L.~Liu,
  ``Channel estimation and performance analysis of one-bit massive {MIMO}
  systems,'' \emph{IEEE Transactions on Signal Processing}, vol.~65, no.~15,
  pp. 4075--4089, 2017.

\bibitem{wan2020generalized}
Q.~Wan, J.~Fang, H.~Duan, Z.~Chen, and H.~Li, ``Generalized {B}ussgang {LMMSE}
  channel estimation for one-bit massive {MIMO} systems,'' \emph{IEEE
  Transactions on Wireless Communications}, vol.~19, no.~6, 2020.

\bibitem{eamaz2021modified}
A.~Eamaz, F.~Yeganegi, and M.~Soltanalian, ``Modified arcsine law for one-bit
  sampled stationary signals with time-varying thresholds,'' in \emph{ICASSP
  2021-2021 IEEE International Conference on Acoustics, Speech and Signal
  Processing (ICASSP)}.\hskip 1em plus 0.5em minus 0.4em\relax IEEE, 2021, pp.
  5459--5463.

\bibitem{eamaz2022covariance}
------, ``Covariance recovery for one-bit sampled non-stationary signals with
  time-varying sampling thresholds,'' \emph{IEEE Transactions on Signal
  Processing}, vol.~70, pp. 5222--5236, 2022.

\bibitem{eamaz2023covariance}
------, ``Covariance recovery for one-bit sampled stationary signals with
  time-varying sampling thresholds,'' \emph{Signal Processing}, vol. 206, p.
  108899, 2023.

\bibitem{chapeau2008fisher}
F.~Chapeau-Blondeau, S.~Blanchard, and D.~Rousseau, ``Fisher information and
  noise-aided power estimation from one-bit quantizers,'' \emph{Digital Signal
  Processing}, vol.~18, no.~3, pp. 434--443, 2008.

\bibitem{fang2010adaptive}
J.~Fang and H.~Li, ``Adaptive distributed estimation of signal power from
  one-bit quantized data,'' \emph{IEEE Transactions on Aerospace and Electronic
  Systems}, vol.~46, no.~4, pp. 1893--1905, 2010.

\bibitem{liu2021one}
C.-L. Liu and Z.-M. Lin, ``One-bit autocorrelation estimation with non-zero
  thresholds,'' in \emph{ICASSP 2021-2021 IEEE International Conference on
  Acoustics, Speech and Signal Processing (ICASSP)}.\hskip 1em plus 0.5em minus
  0.4em\relax IEEE, 2021, pp. 4520--4524.

\bibitem{xiao2023one}
Y.-H. Xiao, L.~Huang, D.~Ram{\'\i}rez, C.~Qian, and H.~C. So, ``One-bit
  covariance reconstruction with non-zero thresholds: Algorithm and performance
  analysis,'' \emph{arXiv preprint arXiv:2303.16455}, 2023.

\bibitem{dirksen2021covariance}
S.~Dirksen, J.~Maly, and H.~Rauhut, ``{Covariance estimation under one-bit
  quantization},'' \emph{The Annals of Statistics}, vol.~50, no.~6, pp. 3538 --
  3562, 2022.

\bibitem{bill2022nnls}
\BIBentryALTinterwordspacing
B.~Whiten. (2013) {NNLS} - non negative least squares. [Online]. Available:
  \url{https://www.mathworks.com/matlabcentral/fileexchange/38003-nnls-non-negative-least-squares}
\BIBentrySTDinterwordspacing

\bibitem{3gpp38901}
3GPP, ``Study on channel model for frequencies from 0.5 to 100 ghz (release
  15),'' 3rd Generation Partnership Project (3GPP), Tech. Rep. TR 38.901
  V15.0.0, 2018.

\bibitem{3gpp25996}
------, ``Spatial channel model for multiple input multiple output ({MIMO})
  simulations (release 16),'' 3rd Generation Partnership Project (3GPP), Tech.
  Rep. TR 25.996 V16.0.0, 2020.

\bibitem{jaeckel2014quadriga}
S.~Jaeckel, L.~Raschkowski, K.~B{\"o}rner, and L.~Thiele, ``{QuaDRiGa}: A 3-{D}
  multi-cell channel model with time evolution for enabling virtual field
  trials,'' \emph{IEEE Transactions on Antennas and Propagation}, vol.~62,
  no.~6, pp. 3242--3256, 2014.

\bibitem{va2016impact}
V.~Va, J.~Choi, and R.~W. Heath, ``The impact of beamwidth on temporal channel
  variation in vehicular channels and its implications,'' \emph{IEEE
  Transactions on Vehicular Technology}, vol.~66, no.~6, 2016.

\bibitem{bar2002doa}
O.~Bar-Shalom and A.~J. Weiss, ``{DOA} estimation using one-bit quantized
  measurements,'' \emph{IEEE Transactions on Aerospace and Electronic Systems},
  vol.~38, no.~3, pp. 868--884, 2002.

\bibitem{risi2014massive}
C.~Risi, D.~Persson, and E.~G. Larsson, ``{Massive MIMO with 1-bit ADC},''
  \emph{arXiv preprint arXiv:1404.7736}, 2014.

\bibitem{bussgang1952crosscorrelation}
J.~J. Bussgang, ``Crosscorrelation functions of amplitude-distorted {G}aussian
  signals,'' \emph{Research Laboratory of Electronics, Massachusetts Institute
  of Technology}, 1952.

\bibitem{demir2020bussgang}
O.~T. Demir and E.~Bjornson, ``The {B}ussgang decomposition of nonlinear
  systems: Basic theory and {MIMO} extensions [lecture notes],'' \emph{IEEE
  Signal Processing Magazine}, vol.~38, no.~1, pp. 131--136, 2020.

\bibitem{mezghani2012capacity}
A.~Mezghani and J.~A. Nossek, ``Capacity lower bound of {MIMO} channels with
  output quantization and correlated noise,'' in \emph{Proc. IEEE Int. Symp.
  Inf. Theory}, 2012, pp. 1--5.

\bibitem{papoulis2002probability}
A.~Papoulis and S.~U. Pillai, \emph{Probability, Random Variables, and
  Stochastic Processes}.\hskip 1em plus 0.5em minus 0.4em\relax Tata
  McGraw-Hill Education, 2002.

\bibitem{van1966spectrum}
J.~H. Van~Vleck and D.~Middleton, ``The spectrum of clipped noise,''
  \emph{Proceedings of the IEEE}, vol.~54, no.~1, pp. 2--19, 1966.

\bibitem{jacovitti1994estimation}
G.~Jacovitti and A.~Neri, ``Estimation of the autocorrelation function of
  complex {G}aussian stationary processes by amplitude clipped signals,''
  \emph{IEEE Transactions on Information Theory}, vol.~40, no.~1, pp. 239--245,
  1994.

\bibitem{GrN98}
R.~M. Gray and D.~L. Neuhoff, ``Quantization,'' \emph{IEEE Transactions on
  Information Theory}, vol.~44, no.~6, pp. 2325--2383, 1998.

\bibitem{GrS93}
R.~M. Gray and T.~G. Stockham, ``Dithered quantizers,'' \emph{IEEE Transactions
  on Information Theory}, vol.~39, no.~3, pp. 805--812, 1993.

\bibitem{Rob62}
L.~Roberts, ``Picture coding using pseudo-random noise,'' \emph{IRE
  Transactions on Information Theory}, vol.~8, no.~2, pp. 145--154, 1962.

\bibitem{BFN17}
R.~G. Baraniuk, S.~Foucart, D.~Needell, Y.~Plan, and M.~Wootters, ``Exponential
  decay of reconstruction error from binary measurements of sparse signals,''
  \emph{IEEE Transactions on Information Theory}, vol.~63, no.~6, pp.
  3368--3385, 2017.

\bibitem{Dir19}
S.~Dirksen, ``Quantized compressed sensing: {A} {S}urvey,'' in \emph{Compressed
  Sensing and Its Applications}.\hskip 1em plus 0.5em minus 0.4em\relax
  Springer, 2019, pp. 67--95.

\bibitem{dirksen2023robust}
S.~Dirksen and S.~Mendelson, ``Robust one-bit compressed sensing with partial
  circulant matrices,'' \emph{The Annals of Applied Probability}, vol.~33,
  no.~3, pp. 1874--1903, 2023.

\bibitem{DiM18a}
------, ``Non-{G}aussian hyperplane tessellations and robust one-bit compressed
  sensing,'' \emph{Journal of the European Mathematical Society}, vol.~23,
  no.~9, pp. 2913--2947, 2021.

\bibitem{JMP19}
H.~C. Jung, J.~Maly, L.~Palzer, and A.~Stollenwerk, ``Quantized compressed
  sensing by rectified linear units,'' \emph{IEEE Transactions on Information
  Theory}, vol.~67, no.~6, pp. 4125--4149, 2021.

\bibitem{KSW16}
K.~Knudson, R.~Saab, and R.~Ward, ``One-bit compressive sensing with norm
  estimation,'' \emph{IEEE Transactions on Information Theory}, vol.~62, no.~5,
  pp. 2748--2758, 2016.

\bibitem{eamaz2022uno}
A.~Eamaz, K.~V. Mishra, F.~Yeganegi, and M.~Soltanalian, ``{UNO: Unlimited
  sampling meets one-bit quantization},'' \emph{arXiv preprint
  arXiv:2301.10155}, 2022.

\bibitem{khalilsarai2018fdd}
M.~B. Khalilsarai, S.~Haghighatshoar, X.~Yi, and G.~Caire, ``{FDD} massive
  {MIMO} via {UL}/{DL} channel covariance extrapolation and active channel
  sparsification,'' \emph{IEEE Transactions on Wireless Communications},
  vol.~18, no.~1, pp. 121--135, 2018.

\bibitem{khalilsarai2021dual}
M.~B. Khalilsarai, T.~Yang, S.~Haghighatshoar, X.~Yi, and G.~Caire,
  ``Dual-polarized {FDD} massive {MIMO}: A comprehensive framework,''
  \emph{IEEE Transactions on Wireless Communications}, vol.~21, no.~2, 2021.

\bibitem{yang2023structured}
T.~Yang, M.~Barzegar~Khalilsarai, S.~Haghighatshoar, and G.~Caire, ``Structured
  channel covariance estimation from limited samples for large antenna
  arrays,'' \emph{EURASIP Journal on Wireless Communications and Networking},
  vol. 2023, no.~1, pp. 1--28, 2023.

\bibitem{chen2010nonnegativit}
D.~Chen and R.~J. Plemmons, ``Nonnegativity constraints in numerical
  analysis,'' in \emph{The birth of numerical analysis}.\hskip 1em plus 0.5em
  minus 0.4em\relax World Scientific, 2010, pp. 109--139.

\bibitem{lawson1974solving}
C.~L. Lawson and R.~Hanson, \emph{Solving least squares problems}.\hskip 1em
  plus 0.5em minus 0.4em\relax Prentice-Hall, Englewood Cliffs, NJ, 1974.

\bibitem{caire2018ergodic}
G.~Caire, ``{On the ergodic rate lower bounds with applications to massive
  MIMO},'' \emph{IEEE Transactions on Wireless Communications}, vol.~17, no.~5,
  pp. 3258--3268, 2018.

\bibitem{diggavi2001worst}
S.~N. Diggavi and T.~M. Cover, ``The worst additive noise under a covariance
  constraint,'' \emph{IEEE Transactions on Information Theory}, vol.~47, no.~7,
  pp. 3072--3081, 2001.

\bibitem{max1960quantizing}
J.~Max, ``Quantizing for minimum distortion,'' \emph{IRE Transactions on
  Information Theory}, vol.~6, no.~1, pp. 7--12, 1960.

\bibitem{desset2015validation}
C.~Desset and L.~Van~der Perre, ``{Validation of low-accuracy quantization in
  massive MIMO and constellation EVM analysis},'' in \emph{2015 European
  Conference on Networks and Communications (EuCNC)}.\hskip 1em plus 0.5em
  minus 0.4em\relax IEEE, 2015, pp. 21--25.

\bibitem{vershynin2018high}
R.~Vershynin, \emph{High-dimensional probability: An introduction with
  applications in data science}.\hskip 1em plus 0.5em minus 0.4em\relax
  Cambridge University Press, 2018, vol.~47.

\end{thebibliography}
	}
	
\end{document}